\newcommandx{\lz}[2][1=]{\todo[inline,linecolor=blue,backgroundcolor=blue!25,bordercolor=blue,#1, author = LORENZO]{#2}} 
\newcommandx{\jdj}[2][1=]{\todo[inline,linecolor=red,backgroundcolor=red!25,bordercolor=red,#1, author = JEAN-DAVID]{#2}} 
\numberwithin{equation}{section} 
\newtheorem{prop}{Proposition}[section]
\newtheorem{definition}[prop]{Definition}
\newtheorem{definition-theorem}[prop]{Definition-Theorem}
\newtheorem{lemma}[prop]{Lemma}
\newtheorem{theorem}[prop]{Theorem}
\newtheorem{remark}[prop]{Remark}
\newtheorem{corollary}[prop]{Corollary}
\newtheorem{assumption}[prop]{Assumption}
\newtheorem{notations}[prop]{Notations}
\newcommand{\0}{{\bf 0}}
\newcommand{\1}{\mathsf{1}}
\newcommand{\m}{{\bf m}}
\newcommand{\n}{{\bf n}}
\newcommand{\p}{{\bf p}}
\newcommand{\q}{{\bf q}}
\newcommand{\e}{{\bf e}}
\newcommand{\z}{\mathsf{z}}
\newcommand{\A}{\mathcal{A}}
\newcommand{\B}{\mathcal{B}}
\newcommand{\Bbar}{\overline{\mathcal{B}}}
\newcommand{\D}{\mathcal{D}}
\newcommand{\E}{\mathcal{E}}
\newcommand{\F}{\mathcal{F}}
\newcommand{\K}{\mathbb{K}}
\renewcommand{\L}{\mathcal{L}}
\newcommand{\N}{\mathbb{N}}
\renewcommand{\P}{\mathcal{P}}
\newcommand{\R}{\mathbb{R}}
\newcommand{\sym}{\text{S}}
\newcommand{\tens}{\text{T}}
\newcommand{\env}{\mathcal{U}}
\newcommand{\envU}{\mathcal{U}_{[\cdot,\cdot]}}
\newcommand{\db}[1]{\llbracket {#1} \rrbracket}
\newcommand{\rhohat}{\hat{\rho}}
\newcommand{\id}{\textnormal{id}}
\newcommand{\End}{\textnormal{End}}
\newcommand{\Der}{\textnormal{Der}}
\newcommand{\conc}{\mathsf{conc}}
\newcommand{\ind}{\mathds{1}}
\newcommand{\tr}{\triangleright}
\newcommand{\la}{\left\langle}
\newcommand{\ra}{\right\rangle}
\newcommand{\otimesbold}{\bm{\otimes}}
\newcommand{\writefun}[5]{\ensuremath{\begin{array}[t]{lrcl}
#1 : & #2 & \longrightarrow & #3 \\
    & #4 & \longmapsto & #5 \end{array}}} 
\newcommand{\C}{\mathbb{C}}
\newcommand{\ass}{\mathsf{a}}
\newcommand{\Cop}{\Delta_\ast}
\newcommand{\M}{\mathcal{M}}
\renewcommand{\b}{\mathfrak{b}}
\renewcommand{\d}{\, {\rm d}}
\newcommand{\trbar}{\star}
\newcommand{\rhobar}{\overline{\rho}}
\title{Post-Lie algebras of derivations and regularity structures}
\author{Jean-David JACQUES}
\author{Lorenzo Zambotti}
\date{\today}
\begin{document}

\begin{abstract}
Given a commutative algebra $\A$, we exhibit a canonical structure of post-Lie algebra on the space $\A\otimes \Der(\A)$ where $\Der(\A)$ is the space of derivations on $\A$, in order to use the machinery given by Oudom-Guin (2008) and Ebrahimi-Fard--Lundervold--Munthe-Kaas (2015), and to define a Hopf algebra structure on the associated enveloping algebra with a natural action on $\A$. We apply these results to the setting of Linares-Otto-Tempelmayr (2023), giving a simpler and more efficient construction of their action and extending the recent work by Bruned-Katsetsiadis (2023).
This approach gives an optimal setting to perform explicit computations in the associated structure group.
\end{abstract}

\maketitle
\vspace{-1 cm}
\tableofcontents

\section{Introduction}

This paper concerns an algebraic structure recently unveiled in a remarkable series of papers \cite{OSSW,LOT,linares2022tree,LOTT} in the
context of regularity structures \cite{Hai14} and their applications to stochastic partial differential equations. In this paper
we explore this new structure and we propose a different construction.

There is a long history of applications of algebraic structures to numerical and, more recently, stochastic analysis.
In the context of Butcher series for the time-discretization of ordinary differential equations \cite{Butcher} and 
in the context of branched rough paths \cite{ramification} and their applications to stochastic differential equations, 
the main algebraic structure of interest is the Connes-Kreimer
Hopf algebra of rooted trees (or forests). In regularity structures, which are the natural evolution of branched rough paths
in the context of stochastic partial differential equations, the main algebraic objects are several Hopf algebras and comodules \cite{BHZ}
and pre-Lie algebras \cite{CL} on families of decorated rooted trees (or forests) \cite{BCCH}. 

The starting point of \cite{LOT} is the observation that Butcher series in all these contexts can be expressed as sums
over multi-indices rather than of trees: it is indeed possible to replace each (rooted) tree by its \emph{fertility}, namely the function
which, to each $k\in\N$, associates the number of vertices in the tree with exactly $k$ children. Surprisingly, many of the tree-based algebraic structures
have an analog in the multi-indices setting. The multi-indices algebraic structure is described by a representation in
an algebra of endomorphisms on a linear space; more precisely, in an algebra of \emph{derivations} on a space of formal power series.

The main aim of \cite{LOT} is then to give an abstract formulation of the composition product in their chosen space of derivations.
The \emph{parti pris} of \cite{LOT} is to construct such a product starting from a pre-Lie algebra \cite{CL} and using the Guin-Oudom procedure \cite{oudom2008lie}.
This approach works in the setting of the Grossman-Larson product \cite{grossmanlarson,Hoffman}, dual of the Butcher-Connes-Kreimer Hopf algebra, which is relevant for branched rough paths,
and the pre-Lie operation given by \cite{LOT} is the translation in the multi-indices setting of the {\it grafting} operation. 
However the authors of \cite{LOT} recognise that the operation on the space of derivations they define fails to satisfy the pre-Lie property in the SPDE-regularity
structures setting, and their construction 
becomes somewhat obscured by the technicalities needed to circumvent this problem. The recent paper \cite[\S 5]{bruned2022post} showed that 
the correct point of view in this setting is rather that of {\it post-Lie} algebras, a notion which generalises that of pre-Lie (see  Section 
\ref{sec:post} for all related definitions). Post-Lie algebras already play a role in so-called planarly branched rough paths \cite{CEFMM}.

In this paper we build on the intuition of \cite[\S 5]{bruned2022post} and we show that \cite{LOT} can be seen as a particular case of a more general construction: we consider a general commutative algebra $\A$ and we exhibit a canonical post-Lie algebra structure on the space $\A\otimes \Der(\A)$ where $\Der(\A)$ is the space of derivations on $\A$; the setting of \cite{LOT} can then be considered a sub-post-Lie algebra of $\A\otimes \Der(\A)$ for a certain choice of $\A$.

One of the main differences between our approach and that of \cite{LOT} is that we write a \emph{different} (albeit isomorphic) Hopf algebra. The point of view of \cite{LOT} is to construct a pre-Lie structure which generates a Lie-algebra on a specific space $L\subseteq\Der(\A)$ of derivations on a commutative algebra $\A$, where the Lie bracket is generated by the composition product: $\llbracket A,B \rrbracket:=A\circ B-B\circ A$. The Hopf algebra of \cite{LOT} is the universal enveloping algebra $\env_{\db{\cdot,\cdot}}(L)$ of this Lie algebra. 

In the post-Lie setting that we study, which extends the one introduced by \cite[\S 5]{bruned2022post}, there is a second and simpler Lie bracket denoted
by $[\cdot,\cdot]$. We use this bracket to construct a universal enveloping algebra $\envU(L)$ that becomes our main Hopf algebra. 
This Hopf algebra comes with a natural action on $\A$ which is the basis for the construction of the structure group of a regularity structure,
see Section \ref{sec:ansatz}. In this way we have a simpler abstract formulation of a non-commutative associative product $\trbar$ on $\envU(L)$, which makes
$\rhobar:(\envU(L),\trbar)\to(\End(\A),\circ)$ an algebra morphism. 
This framework seems to offer an optimal setting to perform computations related to this non-trivial product, see Sections \ref{subsec: basis env alg} and \ref{sec:expprod}.

Our construction uses some of the techniques developed by \cite{LOT} but rephrases them in a language closer to the original theory of regularity structures, which should be of interest for other readers; in several instances we borrow definitions and formulae from \cite{LOT}, reproving them in our way. 
We also mention that a second pre-Lie operation related to {\it insertion} at the level of trees and in cointeraction with the previous one
related to grafting \cite{ManchonSaidi,CMEF} is currently being investigated in the rough-paths setting \cite{Linares}, together with its extension to the SPDE-regularity structures case \cite{BrunedLinares}.
We also give a formula for the coproduct $\Delta_\trbar$ which is the dual of the Guin-Oudom product $\trbar$, see Proposition \ref{prop:trbar}.
This formula has recently been proved in the particular case of \cite{LOT} in \cite{GMZ} and \cite{BH24}.\\

The paper is organized as follows:

In Section \ref{sec:post} we recall generalities about pre-Lie algebras, post-Lie algebras and their universal enveloping algebras and we derive two minimal Assumptions \ref{assump: finiteness Lie bracket} and \ref{assump: finiteness tr} under which the product $\star$ on $\envU(L)$ can be dualised into a coproduct $\Delta_\star$, see Corollary \ref{lem:wd}.

In Section \ref{sec:derivations} we define a natural post-Lie structure on derivations on a commutative algebra $\A$, thus generalising a result of \cite{Burde} in the case of commuting derivations; we give explicit expressions for the associated Guin-Oudom product, see Proposition \ref{prop: extension product}, using the construction of \cite{ebrahimi2014lie} and the important representation $\rhobar:(\envU(L),\trbar)\to(\End(\A),\circ)$ of the universal enveloping algebra on $\A$.

In Section \ref{sec:LOT} we move to a particular case studied in \cite{LOT} and
we follow their definitions of a family of derivations on a fixed space of power sums. 
In Section \ref{sec:ansatz} we choose, similarly to \cite{LOT},
a stochastic PDE (see equation \eqref{eq:spde} below) and we construct the so-called  \emph{structure group} for this equation, which is the starting point of the regularity structures approach.

\medskip
{\bf Acknowledgements}. We thank Kurusch Ebrahim-Fard, Loïc Foissy, Pablo Linares and Markus Tempelmayr for very useful discussions on the topic of this paper.

\bigskip
\section{Post-Lie algebras and universal Lie enveloping algebra}\label{sec:post}
\medskip
\subsection{Lie algebras, post-Lie algebras}
A linear space $L$ endowed with a bilinear operation $L^{\otimes 2}\rightarrow L$, $a\otimes b\mapsto [a,b]$ is said to be a \textit{Lie algebra} if the following relations are satisfied for all $a,b,c\in L$:
\begin{enumerate}
    \item $[a,b]=-[b,a]$ \quad (anticommutativity);
    \item $[a,[b,c]]+[c,[a,b]]+[b,[c,a]]=0$ \quad (Jacobi relation).
\end{enumerate}

\begin{definition}\label{def:preLie}
A (left) \textbf{pre-Lie algebra} $(L,\tr)$ is the data of a vector space $L$, endowed with a bilinear operation $\tr:L\otimes L\rightarrow L$ which verifies the following relation for all $a,b,c\in L$:
\begin{equation}\label{eq: def pre-Lie relation}
    a\tr (b\tr c) - (a\tr b)\tr c = b\tr (a\tr c) - (b \tr a)\tr c.
\end{equation}
\end{definition}

Given a bilinear operation $\circ:L^{\otimes 2}\rightarrow L$ on a vector space $L$, its commutator bracket $[\cdot,\cdot]_\circ:L^{\otimes 2}\rightarrow L$ is defined as the commutator bracket
\[
[a,b]_\circ:=a\circ b-b\circ a,
\]
while its associator is a trilinear map $\ass_\circ:L^{\otimes 3}\rightarrow L$ defined as:
\begin{equation*}
    \ass_\circ(a,b,c):=a\circ(b\circ c) - (a\circ b)\circ c.
\end{equation*}
The associator measures the default of associativity: $\mathsf{\ass}_\circ(a,b,c)=0$ for all $a,b,c\in L$, if and only if $\circ$ is associative on $L$.
The pre-Lie relation \eqref{eq: def pre-Lie relation} is written in terms of the associator as
\begin{equation*}
    \ass_\tr(a,b,c)-\ass_\tr(b,a,c)=0.
\end{equation*}

%

\begin{definition}
A (left) \textbf{post-Lie algebra} $(L,\tr,[\cdot,\cdot])$ is a vector space $L$ endowed with two binary operations $\tr,[\cdot,\cdot]:L\otimes L\rightarrow L$ which satisfy for all $a,b,c\in L$ the following conditions:
\begin{enumerate}
    \item $[\cdot,\cdot]$ is a Lie bracket;
    \item $a\tr[b,c]=[a\tr b,c]+[b,a\tr c]$;
    \item $[a,b]\tr c=\ass_\tr(a,b,c)-\ass_\tr(b,a,c)$.
\end{enumerate}
\end{definition}

\begin{remark}
If $(L,\tr,[\cdot,\cdot])$ is a post-Lie algebra and $[\cdot,\cdot]\equiv 0$, then $(L,\tr)$ is a pre-Lie algebra. Vice versa, given $(L,\tr)$ a 
pre-Lie algebra, if we set $[\cdot,\cdot]\equiv 0$ then $(L,\tr,[\cdot,\cdot])$ is a post-Lie algebra.
\end{remark}

In a pre-Lie algebra $(L,\tr)$, the commutator given by: $$[a,b]_\tr:=a\tr b - b\tr a$$ verifies the Jacobi identity and thus is a Lie bracket. On the other hand, in a post-Lie algebra $(L,\tr,[\cdot,\cdot])$ the commutator $[a,b]_\tr$ is not in general a Lie bracket; however, we have the following
\begin{prop}[\cite{ebrahimi2014lie}] \label{pr:ebrahimi2014lie}
Let $(L,\tr,[\cdot,\cdot])$ be a post-Lie algebra. The bilinear operation $\llbracket \cdot,\cdot \rrbracket:L\otimes L\rightarrow L$ defined for all $a,b\in L$ by:
\begin{equation} \label{eq: composition Lie bracket}
    \llbracket a,b \rrbracket: = a\tr b-b\tr a +[a,b]
\end{equation}
is a Lie bracket, that we will call here the \textbf{composition Lie bracket}.
\end{prop}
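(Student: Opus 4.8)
The plan is to verify the two defining properties of a Lie bracket for $\llbracket\cdot,\cdot\rrbracket$ in turn. Antisymmetry is immediate: the term $a\tr b-b\tr a$ is manifestly antisymmetric, and $[\cdot,\cdot]$ is a Lie bracket by axiom (1), so one reads off $\llbracket b,a\rrbracket=-\llbracket a,b\rrbracket$ directly from \eqref{eq: composition Lie bracket}. All the work therefore goes into the Jacobi identity
\[
\sum_{\mathrm{cyc}}\llbracket a,\llbracket b,c\rrbracket\rrbracket=0,
\]
where the sum runs over the three cyclic permutations of $(a,b,c)$.

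Before expanding, I would record the key auxiliary identity
\[
\llbracket a,b\rrbracket\tr c=a\tr(b\tr c)-b\tr(a\tr c),\qquad(\star)
\]
which is a direct consequence of axiom (3). Indeed, expanding $\llbracket a,b\rrbracket\tr c=(a\tr b)\tr c-(b\tr a)\tr c+[a,b]\tr c$ and substituting $[a,b]\tr c=\ass_\tr(a,b,c)-\ass_\tr(b,a,c)$, the four iterated-product terms containing $(a\tr b)\tr c$ and $(b\tr a)\tr c$ cancel, leaving only $a\tr(b\tr c)-b\tr(a\tr c)$. This is the crucial simplification: it absorbs all the associators of axiom (3) into a clean commutator of the operators $a\tr(\cdot)$ and $b\tr(\cdot)$, and it is what makes the whole computation tractable.

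With $(\star)$ in hand, I would expand a single summand. Writing $\llbracket b,c\rrbracket=b\tr c-c\tr b+[b,c]$ and using bilinearity of $\tr$ and $[\cdot,\cdot]$, axiom (2) to rewrite $a\tr[b,c]=[a\tr b,c]+[b,a\tr c]$, and $(\star)$ to rewrite $\llbracket b,c\rrbracket\tr a=b\tr(c\tr a)-c\tr(b\tr a)$, the term $\llbracket a,\llbracket b,c\rrbracket\rrbracket$ splits into three groups: (i) purely iterated $\tr$-products of the form $x\tr(y\tr z)$; (ii) mixed terms of the form $[x\tr y,z]$; and (iii) the single pure bracket $[a,[b,c]]$. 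I would then sum each group cyclically and check that it vanishes on its own: group (iii) by the Jacobi identity for $[\cdot,\cdot]$; group (i) by a direct count showing that each of the six expressions $x\tr(y\tr z)$ occurs exactly once with a plus sign and once with a minus sign; group (ii) by normalizing every bracket via antisymmetry so that the $\tr$-product always sits in the first slot, after which the twelve terms collapse into six cancelling pairs.

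The main obstacle is purely bookkeeping: arranging the six triple-products and the twelve mixed brackets so that the cancellations are transparent. Identity $(\star)$ is precisely what keeps the three groups decoupled, since without it the $\llbracket b,c\rrbracket\tr a$ contribution would reintroduce associators and entangle groups (i) and (ii). Beyond this, no input is needed other than axioms (1)--(3), the Jacobi identity of $[\cdot,\cdot]$, and bilinearity, so I expect the argument to reduce to the three elementary cancellation counts described above.
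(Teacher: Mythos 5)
Your proof is correct. Note that the paper itself does not prove this statement: it is quoted from \cite{ebrahimi2014lie}, so there is no in-text argument to compare against, and your self-contained verification is the standard computation underlying the cited result. Antisymmetry is indeed immediate, and your auxiliary identity $(\star)$, $\db{a,b}\tr c = a\tr(b\tr c)-b\tr(a\tr c)$, checks out: expanding $[a,b]\tr c=\ass_\tr(a,b,c)-\ass_\tr(b,a,c)$ produces exactly the terms $-(a\tr b)\tr c+(b\tr a)\tr c$ needed to cancel the two iterated products coming from $(a\tr b)\tr c-(b\tr a)\tr c$. Conceptually, $(\star)$ says that $a\mapsto a\tr(\cdot)$ is a Lie algebra morphism from $(L,\db{\cdot,\cdot})$ to $(\End(L),[\cdot,\cdot]_\circ)$, which is precisely why the bookkeeping decouples into your three groups; this is also the structural reason the bracket $\db{\cdot,\cdot}$ appears in \cite{ebrahimi2014lie}, where it is recognized as the restriction to $L$ of the commutator $[\cdot,\cdot]_\trbar$ in the enveloping algebra (compare \eqref{eq:=} in the present paper, which gives an alternative one-line proof of the proposition once Proposition \ref{prop: associative product} is available, since the commutator of any associative product satisfies Jacobi). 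I verified your cancellation counts: group (i) consists of the six expressions $x\tr(y\tr z)$ with $\{x,y,z\}=\{a,b,c\}$, each occurring once with sign $+$ and once with sign $-$ across the cyclic sum; group (ii), after normalizing via antisymmetry of $[\cdot,\cdot]$ so that the $\tr$-product occupies the first slot, consists of twelve terms (four per cyclic summand) pairing off into six cancelling pairs; group (iii) vanishes by the Jacobi identity of $[\cdot,\cdot]$. No gap.
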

\medskip

\subsection{The Lie enveloping algebra}\label{subsec: Env. alg.}

Given a Lie algebra $(L,[\cdot,\cdot])$, we denote by $\tens(L)=\bigoplus_{k\geq 0} L^{\otimes k}$  the tensor algebra over $L$ (with the convention that $L^0=\R\{\ind\}$), whose elements are, given a basis $\B_L$ of $L$, linear combinations of (non-commutative) monomials often called \textit{words} $a_1\otimes \cdots \otimes a_n$ (also noted simply
$a_1\cdots a_n$ if no confusion arises) for $(a_1,\ldots,a_n)\in (\B_L)^{n}$.

The \textit{Lie enveloping algebra} of a Lie algebra $(L,[\cdot,\cdot])$, denoted $\env_{[\cdot,\cdot]}(L)$, is defined as the tensor algebra $\tens(L)=\bigoplus_{k\geq 0} L^{\otimes k}$ over $L$ quotiented by the two-sided ideal $\mathfrak{c}$ generated by $\{a\otimes b - b\otimes a- [a,b]:\,a,b\in L\}$:
$$\env_{[\cdot,\cdot]}(L):=\tens(L)/ \mathfrak{c}.$$

The vectors of $\env_{[\cdot,\cdot]}(L)$ are by definition equivalence classes on $\tens(L)$.
Since no confusion can occur, we will adopt the same notation $a_1 \cdots a_n$ for the equivalence class in $\env_{[\cdot,\cdot]}(L)$ as for its representative in $\tens(L)$. We have a canonical injection $\R\ni t\mapsto
t\ind\in\env_{[\cdot,\cdot]}(L)$ and the counit map $\env_{[\cdot,\cdot]}(L)\ni x\mapsto \varepsilon(x)\in\R$ where 
$x-\varepsilon(x){\ind}\in \bigoplus_{k\geq 1} L^{\otimes k}/{\mathfrak c}$.

A natural filtration can be given on the enveloping algebra: denoting $\tens^{(n)}(L):=\bigoplus_{k=0}^n L^{\otimes k}$ for $n\in\N$ and $\mathfrak{c}^{(n)}:=\mathfrak{c}\cap \tens^{(n)}(L)$,  one has the following sequence of inclusions:
\begin{equation}\label{eq:filtration}
\env^{(0)}_{[\cdot,\cdot]}(L) \subset \env^{(1)}_{[\cdot,\cdot]}(L)\subset \env^{(2)}_{[\cdot,\cdot]}(L)\subset \dots \subset \env_{[\cdot,\cdot]}(L),
\end{equation}
where $\env^{(0)}_{[\cdot,\cdot]}(L)=\R\ind$, $\env^{(1)}_{[\cdot,\cdot]}(L)=\R\ind\oplus L$ and $\env^{(n)}_{[\cdot,\cdot]}(L)=\tens^{(n)}(L)/\mathfrak{c}^{(n)}$ for all $n\geq 2$. Then obviously:
$$\env_{[\cdot,\cdot]}(L)=\bigcup_{n=1}^\infty \env^{(n)}_{[\cdot,\cdot]}(L).$$

In the rest of the paper, we will consider $L$ as a subspace of $\env_{[\cdot,\cdot]}(L)$ by the composition of the canonical injection into the tensor algebra composed with the projection:
$$L\hookrightarrow \tens(L) \twoheadrightarrow \env_{[\cdot,\cdot]}(L).$$

The space $\env_{[\cdot,\cdot]}(L)$ inherits from $\tens(L)$ the associative algebra structure $({\conc},{\mathds 1})$, where $\mathsf{conc}:\env_{[\cdot,\cdot]}(L)\otimes \env_{[\cdot,\cdot]}(L)\to \env_{[\cdot,\cdot]}(L)$ is the concatenation product:
$$\mathsf{conc}:a_1\cdots a_n \otimes b_1\cdots b_m \mapsto a_1\cdots a_n  b_1\cdots b_m.$$
Then $\env_{[\cdot,\cdot]}(L)$ endowed with the concatenation product $\mathsf{conc}$ is an algebra with unit $\ind$. 

In the particular case of the trivial Lie algebra $(L,[\cdot,\cdot])$ with null Lie bracket $[\cdot,\cdot]\equiv 0$, the algebra $(\env_{[\cdot,\cdot]}(L),\mathsf{conc},\ind)$ is the \textit{symmetric tensor algebra}, denoted $(\sym(L),\mathsf{conc},\ind)$. It is a commutative algebra which is isomorphic to the polynomial algebra $\R[\B_L]$ once a basis $\B_L$ of $L$ has been fixed.

If $[\cdot,\cdot]$ is non-trivial, the order of the letters in the monomials of $\env_{[\cdot,\cdot]}$ matters and the following famous theorem permits to exhibit a basis for $\env_{[\cdot,\cdot]}(L)$.

\begin{theorem}[Poincaré-Birkhoff-Witt]\label{theo: PBW}
    Given a basis $\B_L$ of $L$ and a total order $\leq$ on it, a basis $\B_{\envU(L)}=\B_{\envU(L)}^\leq$ of $\env_{[\cdot,\cdot]}(L)$ is given by 
    \begin{equation}\label{eq:basis}
\begin{split}
\B_{\envU(L)}:=&\{\mathds 1\}\sqcup\Bigg\{\frac1{m_1!\cdots m_k!}\,x_1^{m_1}\cdots x_k^{m_k}: \   k, m_1,\ldots,m_k\geq 1, \\
& \qquad \qquad x_1<\ldots<x_k, \ x_i\in\B_L
\Bigg\}.
\end{split}
\end{equation}
\end{theorem}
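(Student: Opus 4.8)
The statement is the classical Poincaré--Birkhoff--Witt theorem, so the plan is to reproduce one of its standard proofs, adapted to the normalized (divided-power) basis. Since we work over $\R$, the factors $1/(m_1!\cdots m_k!)$ are invertible scalars and play no essential role; hence it suffices to prove that the un-normalized ordered monomials $x_1^{m_1}\cdots x_k^{m_k}$ with $x_1<\cdots<x_k$, $x_i\in\B_L$, form a basis of $\env_{[\cdot,\cdot]}(L)$. I would split this into two independent parts: these ordered monomials \emph{span} $\env_{[\cdot,\cdot]}(L)$, and they are \emph{linearly independent}.

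For spanning, words $x_{i_1}\cdots x_{i_n}$ in the letters of $\B_L$ span $\tens(L)$, hence their classes span $\env_{[\cdot,\cdot]}(L)$. I would assign to each word a weight $(n,d)$, where $n$ is its length (degree) and $d$ is its number of inversions (pairs $p<q$ with $x_{i_p}>x_{i_q}$), ordered lexicographically. If a word is not already ordered it contains two adjacent out-of-order letters $x\,y$ with $x>y$, and the defining relations of $\mathfrak{c}$ let us rewrite $\cdots x\,y\cdots=\cdots y\,x\cdots+\cdots[x,y]\cdots$ in the quotient. The first term on the right has the same degree but strictly fewer inversions, while the second, after expanding $[x,y]\in L$ in the basis $\B_L$, is a combination of words of strictly smaller degree. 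By induction on $(n,d)$ every word reduces to a linear combination of ordered monomials.

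For linear independence, which is the heart of PBW, I would build an explicit representation of $L$ on the polynomial algebra $S=\R[\B_L]$, whose basis is exactly the ordered monomials. Define $\lambda\colon L\to\End(S)$ on basis elements $x\in\B_L$ and ordered monomials $w=x_{i_1}\cdots x_{i_n}$ (with $x_{i_1}\le\cdots\le x_{i_n}$) by the recursion $\lambda_x(w)=x\,w$ when $x\le x_{i_1}$, and $\lambda_x(w)=\lambda_{x_{i_1}}\big(\lambda_x(w')\big)+\lambda_{[x,x_{i_1}]}(w')$ otherwise, where $w=x_{i_1}w'$ and the right-hand side is already defined by induction on degree. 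The crucial point is to check that $\lambda$ is a Lie-algebra morphism into $(\End(S),[\cdot,\cdot]_\circ)$, that is $\lambda_{[x,y]}=\lambda_x\lambda_y-\lambda_y\lambda_x$ on all of $S$; this is the single nontrivial verification, and it is precisely where the Jacobi identity for $[\cdot,\cdot]$ is consumed, via a further induction on degree.

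Once $\lambda$ is a representation, the universal property of $\env_{[\cdot,\cdot]}(L)$ extends it to an algebra morphism $\env_{[\cdot,\cdot]}(L)\to\End(S)$, and evaluation at $\ind\in S$ yields a linear map $\Phi\colon\env_{[\cdot,\cdot]}(L)\to S$ with $\Phi(x_1^{m_1}\cdots x_k^{m_k})=x_1^{m_1}\cdots x_k^{m_k}$ for ordered monomials (one reads this off the recursion, since at each step the leading letter is already minimal). As $\Phi$ sends the ordered monomials to \emph{distinct} basis elements of $S$, they cannot satisfy any nontrivial linear relation in $\env_{[\cdot,\cdot]}(L)$; combined with spanning, this proves they form a basis. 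The main obstacle is the module-axiom verification: the nested inductions must be set up carefully so that the recursion defining $\lambda$ is well posed and the Jacobi identity closes the compatibility check, while everything else is routine bookkeeping. (Alternatively, one may simply invoke PBW as a classical result.)
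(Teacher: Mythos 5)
Your proposal is correct, but note that the paper does not prove this statement at all: it invokes Poincar\'e--Birkhoff--Witt as a classical black box (the theorem is stated, with the normalisation $1/(m_1!\cdots m_k!)$ chosen purely for the convenience of the coproduct formula \eqref{eq:cop1}), whereas you reconstruct the standard textbook proof in the style of Jacobson or Humphreys. Your reduction to un-normalised monomials is legitimate since the factorials are invertible in $\R$; your spanning argument by lexicographic induction on (length, number of inversions) is the canonical one and is airtight, since swapping one adjacent out-of-order pair lowers the inversion count by exactly one while the commutator term drops the length. For independence, your representation $\lambda$ of $L$ on $S=\R[\B_L]$ is the classical device, and you correctly identify the two genuinely delicate points: well-posedness of the recursion and the verification that $\lambda_{[x,y]}=[\lambda_x,\lambda_y]_\circ$, which consumes the Jacobi identity. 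One refinement worth making explicit if you write this out in full: the recursion for $\lambda_x(x_{i_1}w')$ with $x>x_{i_1}$ is not justified by induction on degree alone, because $\lambda_x(w')$ contains a top-degree component of the same degree as $w$; you need the auxiliary estimate $\lambda_x(w)\in xw+S_{\le \deg w}$ (proved simultaneously in the induction), together with the observation that $x_{i_1}$ is $\le$ the minimal letter of $xw'$, so that the outer $\lambda_{x_{i_1}}$ acts on the top term by the direct clause and on the remainder by the degree induction. With that bookkeeping in place, your evaluation map $\Phi$ at $\ind\in S$ sends distinct ordered monomials of $\env_{[\cdot,\cdot]}(L)$ to distinct monomials of $S$, which closes the independence argument; combined with spanning this gives the basis, exactly as claimed. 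In short: a correct, self-contained proof of a result the paper simply cites, and your final parenthetical remark (``invoke PBW as a classical result'') is precisely what the authors do.
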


The enveloping algebra gives a functor $L\mapsto \env_{[\cdot,\cdot]}(L)$ from the category of Lie algebras to the category of associative algebras 
which satisfies the following universal property:
\begin{theorem}[\textbf{Universal property}]\label{theo: universal property env alg}
    Given a Lie algebra $(L,[\cdot,\cdot])$, an associative algebra $(A,\circ)$ and a Lie algebra morphism $\varphi:(L,[\cdot,\cdot])\to (A,[\cdot,\cdot]_\circ)$,
namely such that $\varphi([a,b])=[\varphi(a),\varphi(b)]_\circ$ for all $a,b\in L$,     there exists a unique algebra morphism $\bar\varphi:(\env_{[\cdot,\cdot]}(L),\conc) \to (A,\circ)$ such that $\bar\varphi(a)=\varphi(a)$ for all $a\in L$.
\end{theorem}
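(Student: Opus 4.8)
The plan is to prove the universal property of the enveloping algebra by the standard construction: first extend $\varphi$ to an algebra morphism out of the tensor algebra, then check it descends to the quotient, and finally verify uniqueness. Concretely, I would start from the fact that $(A,\circ)$ is an associative algebra and $\varphi:L\to A$ is a linear map. By the universal property of the tensor algebra $\tens(L)$ — which is that any linear map from $L$ into an associative algebra extends uniquely to an algebra morphism — there exists a unique algebra morphism $\Phi:(\tens(L),\conc)\to(A,\circ)$ with $\Phi(a)=\varphi(a)$ for all $a\in L$. Explicitly, $\Phi$ is forced to act on words by $\Phi(a_1\cdots a_n)=\varphi(a_1)\circ\cdots\circ\varphi(a_n)$, extended linearly, with $\Phi(\ind)=1_A$.

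The key step is then to show that $\Phi$ vanishes on the ideal $\mathfrak c$ generated by the elements $a\otimes b-b\otimes a-[a,b]$ for $a,b\in L$. Since $\Phi$ is an algebra morphism and $\mathfrak c$ is the two-sided ideal generated by these elements, it suffices to check that $\Phi$ kills each generator; the ideal property and multiplicativity of $\Phi$ then propagate this to all of $\mathfrak c$. For a single generator I compute
\begin{equation*}
\Phi(a\otimes b-b\otimes a-[a,b])=\varphi(a)\circ\varphi(b)-\varphi(b)\circ\varphi(a)-\varphi([a,b]),
\end{equation*}
which equals $[\varphi(a),\varphi(b)]_\circ-\varphi([a,b])$, and this is zero precisely because $\varphi$ is assumed to be a Lie algebra morphism from $(L,[\cdot,\cdot])$ to $(A,[\cdot,\cdot]_\circ)$. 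Hence $\mathfrak c\subseteq\ker\Phi$, and by the universal property of the quotient there is an induced algebra morphism $\bar\varphi:\env_{[\cdot,\cdot]}(L)=\tens(L)/\mathfrak c\to A$ satisfying $\bar\varphi\circ\pi=\Phi$, where $\pi:\tens(L)\twoheadrightarrow\env_{[\cdot,\cdot]}(L)$ is the canonical projection. In particular $\bar\varphi(a)=\Phi(a)=\varphi(a)$ for all $a\in L$, viewing $L\hookrightarrow\env_{[\cdot,\cdot]}(L)$ as in the excerpt.

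Finally, for uniqueness, suppose $\psi:\env_{[\cdot,\cdot]}(L)\to A$ is any algebra morphism with $\psi(a)=\varphi(a)$ for all $a\in L$. Since $L$ generates $\env_{[\cdot,\cdot]}(L)$ as an algebra — every class is a linear combination of (classes of) words in the letters of $L$ — and both $\psi$ and $\bar\varphi$ are algebra morphisms agreeing on $L$ and on the unit, they must agree on all products of elements of $L$, hence on all of $\env_{[\cdot,\cdot]}(L)$ by linearity. This gives $\psi=\bar\varphi$.

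I do not expect any genuine obstacle here: the statement is the classical universal property of the universal enveloping algebra, and the only nontrivial content is the one-line verification that the Lie-morphism hypothesis on $\varphi$ is exactly what is needed to annihilate the defining generators of $\mathfrak c$. The mild point requiring care is keeping the two brackets straight — the abstract bracket $[\cdot,\cdot]$ on $L$ versus the commutator bracket $[\cdot,\cdot]_\circ$ on $A$ induced by the associative product $\circ$ — and confirming that $\Phi$ respects the unit so that the induced map is genuinely a morphism of unital associative algebras.
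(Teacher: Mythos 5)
Your proof is correct and complete: the extension of $\varphi$ to the tensor algebra, the one-line check that $\Phi$ annihilates the generators $a\otimes b-b\otimes a-[a,b]$ of $\mathfrak{c}$ (so that $\mathfrak{c}\subseteq\ker\Phi$ by the ideal property), and uniqueness via the fact that $L$ and $\ind$ generate $\env_{[\cdot,\cdot]}(L)$ constitute precisely the standard argument. The paper states this theorem as a classical fact without supplying a proof, so there is no in-paper argument to diverge from; your write-up, including the attention to unitality and to distinguishing $[\cdot,\cdot]$ on $L$ from $[\cdot,\cdot]_\circ$ on $A$, is the canonical one.
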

\medskip

\subsection{The coshuffle coproduct and its dual product}\label{sec:ast}

It is a well known fact that there exists a unique coproduct $\Cop:\env_{[\cdot,\cdot]}(L)\to \env_{[\cdot,\cdot]}(L)\otimes \env_{[\cdot,\cdot]}(L)$, which turns $\env_{[\cdot,\cdot]}(L)$ into a bialgebra $(\env_{[\cdot,\cdot]}(L),\conc,\Cop,\ind,\varepsilon)$ for which the Lie algebra of primitive elements is $L$, in other terms:
$$\Cop(a)=a\otimes \mathds 1 + \mathds 1\otimes a\qquad \text{for all}~a\in L$$
and the counit map $\varepsilon: \env_{[\cdot,\cdot]}(L)\to\R$ is the linear map given by $\varepsilon(\ind)=1$ and  $Ker(\varepsilon)=\bigoplus_{k\geq 1}L^{\otimes k}/\mathfrak{c}$.

The existence and uniqueness of $\Cop$ is guaranteed by the universal property \ref{theo: universal property env alg} owing to the fact that 
$[\Cop(a),\Cop(b)]_\conc=\Cop[a,b]_\conc$ for all $a,b\in L$, which indicates that $\Cop:L\to \env_{[\cdot,\cdot]}(L)\otimes \env_{[\cdot,\cdot]}(L)$ is a Lie algebra morphism. Coassociativity and cocommutativity are easily proved on $L$ and extended by multiplicativity on $\env_{[\cdot,\cdot]}(L)$, as well as the counit property, see \cite[\S II.1.4]{nicolas1989lie}.

On (equivalence classes of) words we have
\begin{equation}\label{eq:coshuffle}
    \Cop (a_1 \cdots a_n)=\Cop (a_1)\cdots\Cop (a_n)= \sum_{I\subseteq\{1,\ldots,n\}} a_I \otimes a_{\{1,\ldots,n\}\setminus I},
\end{equation}
where we denote:
\begin{equation}\label{eq:notationaI}
a_\emptyset:=\mathds{1}, \qquad
a_I:=a_{i_1} \ldots a_{i_p}, \qquad I=\{i_1,\ldots,i_p\}, \qquad i_1<\cdots<i_p.
\end{equation}

On the basis $\B_{\envU(L)}$ in \eqref{eq:basis} the coshuffle coproduct has a very convenient form
\begin{equation}\label{eq:cop1}
\begin{split}
\Cop \prod_{i=1}^k \frac{x_i^{m_i}}{m_i!} &= \prod_{i=1}^k \Cop  \frac{x_i^{m_i}}{m_i!}
= \prod_{i=1}^k \sum_{\ell=0}^{m_i} \frac{x_i^{\ell}}{\ell!}\otimes\frac{x_i^{m_i-\ell}}{(m_i-\ell)!}
\\ & = \sum_{0\le\ell_i\le m_i} \left(\prod_{i=1}^k\frac{x_i^{\ell_i}}{\ell_i!}\right)\otimes
\left(\prod_{i=1}^k\frac{x_i^{m_i-\ell_i}}{(m_i-\ell_i)!}\right),
\end{split}
\end{equation}
which is the reason for the normalisation chosen in \eqref{eq:basis}. We often use Sweedler's notation 
\begin{equation}\label{eq:Sweedler}
\Cop u = \sum_{(u)} u^{(1)}\otimes u^{(2)}.
\end{equation}
\medskip

\subsection{Hopf algebra structure on the post-Lie enveloping algebra}\label{sec:hopfa}
In the case of a pre-Lie algebra $(L,\tr)$, Guin-Oudom \cite{oudom2008lie} developed a procedure in order to extend the pre-Lie product to the symmetric tensor algebra $\sym(L)$, and defined a product $\trbar$ which turns $(\sym(L),\trbar,\Cop)$ into an associative and cocommutative Hopf algebra. The space $L$, considered as a subspace of $\sym(L)$, turns out to be the Lie algebra of primitive elements for the bracket $\llbracket\cdot,\cdot\rrbracket$ in \eqref{pr:ebrahimi2014lie}. 
The Cartier-Milnor-Moore theorem for filtered Hopf algebra (see \cite{milnor1965structure} 
or \cite[Theorem 1, \S II.6]{nicolas1989lie} for the filtered bialgebra version) applies and gives an isomorphism of Hopf algebras between $(\sym(L),\trbar,\Cop)$ and $(\env_{\llbracket\cdot,\cdot\rrbracket}(L),\mathsf{conc}, \Cop)$. 

Later in \cite{ebrahimi2014lie}, the authors showed that the machinery developped in \cite{oudom2008lie} in the case of pre-Lie algebras can be applied to the more general case of post-Lie algebras, giving an extension of the post-Lie product $\tr$ to $\env_{[\cdot,\cdot]}(L)$ and an associative product $\trbar$ which turns $(\env_{[\cdot,\cdot]}(L),\trbar,\Cop)$ into an associative Hopf algebra. The Milnor-Moore theorem applies again and gives an isomorphism of Hopf algebras between $(\envU(L),\trbar,\Cop)$ and $(\env_{[\cdot,\cdot]_\trbar}(L),\mathsf{conc}, \Cop)$, as we will see below. If the bracket $[\cdot,\cdot]$ is null, the concatenation product of $\env_{[\cdot,\cdot]}(L)$ is commutative, and the space is equal to $\sym(L)$, which gives back the case of pre-Lie algebras. We refer to the monograph \cite{Patras} for the details of the theory of Hopf algebras.

\medskip
First let us recall the extension of the product $\tr$ to all $u,v\in \env_{[\cdot,\cdot]}(L)$, see Proposition 3.1 in \cite{ebrahimi2014lie}.

\begin{prop}
\label{prop: extension post-lie product}
Let $(L,\tr,[\cdot,\cdot])$ be a (left) post-Lie algebra. There exists a unique extension of the product $\tr$ to $\env_{[\cdot,\cdot]}(L)$ which verifies for all $a\in L$ and $u,v,w\in \env_{[\cdot,\cdot]}(L)$:
\begin{enumerate}
    \item $\ind \tr u = u$, $u\tr \ind=\varepsilon(u)$;
    \item $av\tr w=a\tr(v\tr w)-(a\tr v)\tr w$;
    \item $u\tr (v w)=\sum_{(u)} (u^{(1)}\tr v)(u^{(2)}\tr w)$;
\end{enumerate}
with Sweedler's notation \eqref{eq:Sweedler} for the coproduct \eqref{eq:coshuffle}.
\end{prop}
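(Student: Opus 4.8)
The plan is to handle uniqueness and existence separately, the latter by extending $\tr$ in stages and isolating the one structural identity that makes the whole recursion consistent.

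\textbf{Uniqueness.} I would argue that (1)--(3) leave no freedom. Property (1) fixes $u\tr v$ whenever $u=\1$ or $v=\1$. For $a\in L$, property (3) with $\Cop a=a\otimes\1+\1\otimes a$ forces $a\tr(vw)=(a\tr v)\,w+v\,(a\tr w)$, so $a\tr-$ is the unique concatenation-derivation of $\env_{[\cdot,\cdot]}(L)$ extending the prescribed values $a\tr b$ ($b\in L$). Finally, writing $u=au'$ with $a\in L$, property (2) expresses $u\tr v$ through $u'\tr v$ and $(a\tr u')\tr v$, both of strictly smaller left-degree; an induction on the left-degree then pins down $\tr$ from the post-Lie data on $L$.

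\textbf{Existence.} For $a\in L$ let $D_a$ be the unique derivation of $(\tens(L),\conc)$ extending $b\mapsto a\tr b$. A direct computation gives $D_a(bc-cb-[b,c])\equiv[a\tr b,c]+[b,a\tr c]-a\tr[b,c]\pmod{\mathfrak c}$, which vanishes by post-Lie axiom (2); since a derivation carrying the generators of $\mathfrak c$ into $\mathfrak c$ preserves $\mathfrak c$, the operator $D_a$ descends to $\env_{[\cdot,\cdot]}(L)$, and I set $a\tr v:=D_a(v)$. By construction this satisfies (1) and satisfies (3) for $u\in L$ (Leibniz). I then extend to arbitrary left factors by defining operators $\Phi_u\in\End(\env_{[\cdot,\cdot]}(L))$ on the free algebra $\tens(L)$ via $\Phi_\1:=\id$ and $\Phi_{au'}:=D_a\circ\Phi_{u'}-\Phi_{D_a u'}$, setting $u\tr v:=\Phi_u(v)$; this is well-founded because both $u'$ and $D_a u'$ have strictly smaller length than $au'$, and it reproduces (2) by design.

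\textbf{Main obstacle: descent to the quotient.} The crux is that this recursion, a priori living on $\tens(L)$, respects the relations $r_{a,b}:=ab-ba-[a,b]$. The conceptual core is the identity $[D_a,D_b]=D_{\db{a,b}}$ of operators on $\env_{[\cdot,\cdot]}(L)$, where $\db{a,b}$ is the composition bracket \eqref{eq: composition Lie bracket}: both sides are derivations, and they agree on the generating space $L$ precisely by post-Lie axiom (3), hence everywhere. This is exactly $(ab)\tr v-(ba)\tr v=[a,b]\tr v$ for all $v$, i.e. $\Phi_{r_{a,b}}=0$. To promote this to $\mathfrak c\subseteq\ker\Phi$, I would first verify by a short telescoping computation that $\Phi_{r_{a,b}\,y}=0$ for every word $y$, and then treat an embedded relator $x\,r_{a,b}\,y$ by induction on the length of the prefix $x$, using $\Phi_{cw}=D_c\circ\Phi_w-\Phi_{D_c w}$ together with the closure identity $D_c(r_{a,b})=r_{c\tr a,b}+r_{a,c\tr b}$ (again from axiom (2)), which keeps every term of prefix--relator--suffix shape with a strictly shorter prefix. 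This bookkeeping is where the genuine work lies.

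\textbf{Conclusion.} Once $\tr$ is well defined on $\env_{[\cdot,\cdot]}(L)$, property (2) holds by construction and $u\tr\1=\varepsilon(u)$ drops out of the same recursion. Property (3) for general $u$ then follows by induction on the left-degree: writing $u=au'$ and using that $\Cop$ is multiplicative for concatenation, the Leibniz rule for $D_a$ combined with the inductive hypothesis for $u'$ and $D_a u'$ reassembles $\sum_{(u)}(u^{(1)}\tr v)(u^{(2)}\tr w)$. I expect the descent to the quotient---anchored by the morphism identity $[D_a,D_b]=D_{\db{a,b}}$---to be the only delicate point; everything else is a routine induction on degree.
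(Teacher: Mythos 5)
Your proposal is correct, and it is genuinely different from what the paper does, because the paper does not prove this statement at all: it records the three characterizing properties and defers to \cite[Proposition 3.1]{ebrahimi2014lie}. Your argument supplies a self-contained construction, and its two pillars are exactly where the two post-Lie axioms enter, which is conceptually clarifying: the exact identity $D_c(r_{a,b})=r_{c\tr a,b}+r_{a,c\tr b}$ in $\tens(L)$ (equivalently $D_c\mathfrak{c}\subseteq\mathfrak{c}$) is post-Lie axiom (2), while the operator identity $[D_a,D_b]=D_{\db{a,b}}$ driving the descent of the left recursion $\Phi_{au'}=D_a\circ\Phi_{u'}-\Phi_{D_au'}$ is post-Lie axiom (3). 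I checked the delicate steps and they hold: $\Phi_{r_{a,b}}=[D_a,D_b]-D_{a\tr b-b\tr a+[a,b]}=0$; the telescoping gives $\Phi_{r_{a,b}y}=D_c\Phi_y-\Phi_{cy}-\Phi_{D_cy}$ with $c:=a\tr b-b\tr a\in L$, which vanishes by the defining recursion; and the prefix induction closes because every term stays in prefix--relator--suffix form with strictly shorter prefix. Two small points deserve to be made explicit in a final write-up. First, your bracket identity is applied to words $y$ \emph{before} passing to the quotient, so you should state it at the level of $\tens(L)$: both sides are derivations of the free algebra agreeing on $L$, hence equal there, not merely on $\env_{[\cdot,\cdot]}(L)$. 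Second, the induction establishing property (3) needs, besides multiplicativity of $\Cop$, that each $D_a$ is a \emph{coderivation}, i.e.\ $\Cop\circ D_a=(D_a\otimes\id+\id\otimes D_a)\circ\Cop$; this follows since $a\tr L\subseteq L$ and $L$ consists of primitives, and it is what lets the terms $-\sum\bigl((a\tr u'^{(1)})\tr v\bigr)(u'^{(2)}\tr w)-\sum(u'^{(1)}\tr v)\bigl((a\tr u'^{(2)})\tr w\bigr)$ reassemble into $-(a\tr u')\tr(vw)$. With these two clarifications your uniqueness argument (induction on the filtration degree of the left argument) and existence argument are complete, and in fact your proof makes visible the division of labour between the two post-Lie axioms that the cited reference leaves implicit.
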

By definition, $L$ is the space of primitive elements in $(\env_{[\cdot,\cdot]}(L),\Cop)$, which means that for all $a\in L$: $\Cop(a)=a\otimes \mathds{1}+\mathds{1}\otimes a$. 
By property (3) in Proposition \ref{prop: extension post-lie product} and by \eqref{eq:coshuffle}, 
for all $a\in L$ and $b_1,\ldots,b_n\in\env_{[\cdot,\cdot]}(L)$ we have
\begin{equation}\label{eq:derivation}
a\tr(b_1\cdots b_n) = \sum_{i=1}^n b_1 \cdots (a\tr b_i) \cdots b_n.
\end{equation}
More generally, for all $a_1,\ldots,a_m\in L$ and $b_1,\ldots, b_n\in \env_{[\cdot,\cdot]}(L)$ we have by \eqref{eq:coshuffle}
\begin{equation}\label{eq: grafting word on word}
    a_1\cdots a_m \tr b_1\cdots b_n= \sum_{I_1\sqcup\cdots\sqcup I_n=\{1,\ldots,m\}} (a_{I_1}\tr b_1) \cdots (a_{I_n}\tr b_n)
\end{equation}
where we use the notation \eqref{eq:notationaI}.

Proposition \ref{prop: associative product} below appears in \cite[Proposition 3.3]{ebrahimi2014lie}, which extends the Guin-Oudom approach \cite{oudom2008lie}, originally used in the case of a pre-Lie algebra, to the case of a post-Lie algebra:

\begin{prop} \label{prop: associative product} 
Let $(L,\tr,[\cdot,\cdot])$ be a post-Lie algebra. The product $$\trbar:\env_{[\cdot,\cdot]}(L)\otimes \env_{[\cdot,\cdot]}(L)\to \env_{[\cdot,\cdot]}(L)$$ defined by $\star:=\conc\circ(\id\otimes \tr)\circ(\Delta_\ast\otimes \id)$ 
is associative and $(\env_{[\cdot,\cdot]}(L),\trbar,\Cop,\ind,\varepsilon)$ is a connected filtered Hopf algebra for the filtration given by \eqref{eq:filtration}, which antipode $S$ is given by $S(\ind)=\ind$ and the following formula on $\ker(\varepsilon)$:
\[
    S=-\id+\sum_{n\geq 1} (-1)^n \star^n\circ (\Delta_\ast')^n,
\]
where $\Delta_\ast':= \Delta_\ast-\ind\otimes \id -\id\otimes \ind$ denotes the reduced coproduct.
\end{prop}

First of all, we show that $\star$ respects the filtration \eqref{eq:filtration}, indeed for all $n,m\geq 0$:
\begin{align*}
    \envU^{(n)}\otimes \envU^{(m)}&\xrightarrow[]{\Delta_\ast\otimes \id} \bigoplus_{p,q\geq0,~p+q=n}\envU^{(p)}\otimes \envU^{(q)}\otimes \envU^{(m)}\\
    &\xrightarrow[]{\id\otimes \tr}\bigoplus_{p,q\geq0,~p+q=n}\envU^{(p)}\otimes \envU^{(q+m)}\\
    &\xrightarrow[]{\conc}\bigoplus_{p,q\geq0,~p+q=n}\envU^{(p+q+m)}=\envU^{(n+m)}.
\end{align*}
 We recall that every connected filtered bialgebra is a filtered Hopf algebra, see \cite[Corollary 5]{MHandbook} or
\cite[Theorem 3.4]{GG19}, also for the antipode formula. Using Sweedler's notation \eqref{eq:Sweedler} for the coshuffle coproduct $\Delta_\ast$ and recalling \eqref{eq:coshuffle}, we can write the following formula for all $u,v\in \env_{[\cdot,\cdot]}(L)$:
\begin{equation}\label{eq: post-Lie associative product}
    u\trbar v = \sum_{(u)} u^{(1)} (u^{(2)}\tr v).
\end{equation}
Since $L$ is the space of primitive elements in $(\env_{[\cdot,\cdot]}(L),\Cop)$, by definition of $\trbar$, for all $a,b\in L$ one has: 
\begin{equation}\label{eq:atrb}
    a \trbar  b=a\tr b + ab.
\end{equation}
The space $L$, considered as a subspace of $\env_{[\cdot,\cdot]}(L)$, is stable by the commutator \[[a,b]_\trbar:=a\trbar  b-b\trbar  a.\] By associativity of $\trbar$, $[\cdot,\cdot]_\trbar$ is thus a Lie bracket on $L$, and for all $a,b\in L\subset\env_{[\cdot,\cdot]}(L)$ we have
\begin{align*}
    [a,b]_\trbar &= a \trbar  b - b\trbar  a\\
    &= a\tr b - b\tr a + ab - ba\\
    &= a\tr b-b\tr a  + [a,b]=\llbracket a,b \rrbracket,
\end{align*}
where $\llbracket a,b \rrbracket$ is defined in \eqref{eq: composition Lie bracket}.
We thus deduce the equality between brackets for all $a,b\in L\subset\env_{[\cdot,\cdot]}(L)$
\begin{equation}\label{eq:=}
    [a,b]_\trbar=\llbracket a,b \rrbracket.
\end{equation}
Remark that the bracket $\llbracket \cdot,\cdot \rrbracket$ is defined intrinsically on the space $L$, while $[\cdot,\cdot]_\trbar$ is defined extrinsecally since $\trbar$ is a binary operation of $\env_{[\cdot,\cdot]}(L)$.

The Cartier-Milnor-Moore theorem for filtered algebras (see \cite[Theorem 1, \S II.6]{nicolas1989lie}) and the equality between brackets \eqref{eq:=} imply 
that $\left(\env_{\llbracket \cdot,\cdot \rrbracket}(L),\mathsf{conc},\Cop\right)$ and $\left(\env_{[\cdot,\cdot]}(L),\trbar,\Cop\right)$ are isomorphic as Hopf algebras.
In fact the isomorphism can be made very explicit:
\begin{theorem}\label{theo: isomorphism between enveloping algebras} The linear map $\Phi:\env_{\llbracket \cdot,\cdot \rrbracket}(L)\rightarrow \env_{[\cdot,\cdot]}(L)$ defined by:
\begin{equation*}
    \Phi(a_1 \cdots a_n):=a_1 \trbar \cdots \trbar a_n, \qquad a_1,\ldots,a_n\in L,
\end{equation*}
is an isomorphism of Hopf algebras $\left(\env_{\llbracket \cdot,\cdot \rrbracket}(L),\mathsf{conc},\Cop\right)\rightarrow\left(\env_{[\cdot,\cdot]}(L),\trbar,\Cop\right)$.
\end{theorem}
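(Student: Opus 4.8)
The plan is to invoke the Cartier--Milnor--Moore theorem in its filtered version, together with the already-established identity \eqref{eq:=} between the intrinsic bracket $\llbracket\cdot,\cdot\rrbracket$ and the extrinsic commutator $[\cdot,\cdot]_\trbar$. The key observation is that $\Phi$ is precisely the algebra morphism furnished by the universal property \ref{theo: universal property env alg}: the inclusion $L\hookrightarrow(\env_{[\cdot,\cdot]}(L),\trbar)$ is, by \eqref{eq:=}, a Lie algebra morphism from $(L,\llbracket\cdot,\cdot\rrbracket)$ to the commutator Lie algebra $(\env_{[\cdot,\cdot]}(L),[\cdot,\cdot]_\trbar)$, since for $a,b\in L$ one has $[a,b]_\trbar=\llbracket a,b\rrbracket$. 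Hence there is a unique associative algebra morphism $\Phi:(\env_{\llbracket\cdot,\cdot\rrbracket}(L),\conc)\to(\env_{[\cdot,\cdot]}(L),\trbar)$ restricting to the identity on $L$, and it must act on words as $\Phi(a_1\cdots a_n)=a_1\,\trbar\cdots\trbar\,a_n$, which is the stated formula. This settles that $\Phi$ is a well-defined morphism of associative algebras.

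Next I would check that $\Phi$ is a coalgebra morphism, i.e.\ that it intertwines the two copies of $\Cop$. It suffices to verify this on the generators $a\in L$, which are primitive on both sides: $\Cop\,\Phi(a)=\Cop(a)=a\otimes\1+\1\otimes a=(\Phi\otimes\Phi)\Cop(a)$. Since both $\conc$ on the source and $\trbar$ on the target are compatible with $\Cop$ (the former by \eqref{eq:coshuffle}, the latter by Proposition \ref{prop: associative product}, which asserts that $(\env_{[\cdot,\cdot]}(L),\trbar,\Cop)$ is a Hopf algebra), and $\Phi$ is an algebra morphism, the coproduct-compatibility extends multiplicatively from $L$ to all of $\env_{\llbracket\cdot,\cdot\rrbracket}(L)$. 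Therefore $\Phi$ is a morphism of bialgebras, hence of Hopf algebras.

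It remains to prove that $\Phi$ is bijective, and this is where the filtered Cartier--Milnor--Moore theorem does the real work. Both $\env_{\llbracket\cdot,\cdot\rrbracket}(L)$ and $(\env_{[\cdot,\cdot]}(L),\trbar,\Cop)$ are connected, filtered, cocommutative Hopf algebras over a field of characteristic zero whose primitive elements coincide with $L$: on the source this is the defining property of the enveloping algebra, and on the target it is part of the statement that $L$ is the space of primitives for $\Cop$ (as recorded before Proposition \ref{prop: extension post-lie product}). The morphism $\Phi$ restricts to the identity on these primitive subspaces. By the structure theorem, any such morphism of connected filtered cocommutative Hopf algebras inducing an isomorphism on primitives is itself an isomorphism; concretely, one passes to the associated graded Hopf algebras, where both sides become the symmetric algebra $S(L)$ and $\Phi$ induces the identity, so that $\mathrm{gr}\,\Phi$ is an isomorphism, and then one lifts this back to conclude that $\Phi$ is an isomorphism on each filtration level and hence globally.

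The main obstacle I anticipate is not the algebraic bookkeeping but ensuring that the hypotheses of the \emph{filtered} Milnor--Moore theorem are genuinely met, in particular that the filtration on $(\env_{[\cdot,\cdot]}(L),\trbar)$ is compatible with $\trbar$ and $\Cop$ and that the coalgebra is connected. One should verify that the natural filtration inherited from $\T(L)$ descends correctly and that $\trbar$ respects it (so that $\mathrm{gr}$ is well-defined and yields $S(L)$ with its commutative product); once this compatibility is in place, the identification of primitives with $L$ and the coincidence of brackets via \eqref{eq:=} make the application of the theorem routine. I would therefore spend most of the care on stating the filtered structure precisely and citing \cite[Theorem 1, \S II.6]{nicolas1989lie} for the exact form of the theorem used, rather than on the explicit formula for $\Phi$, which follows immediately from the universal property.
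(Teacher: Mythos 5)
Your proof is correct and takes essentially the same approach as the paper: the paper itself derives the theorem from the bracket identity \eqref{eq:=} together with the filtered Cartier--Milnor--Moore theorem, delegating the verification to \cite[Theorem 3.14]{oudom2008lie} and \cite[Theorem 3.4]{ebrahimi2014lie}. Your argument simply writes out that delegated verification in full --- the universal property giving $\Phi$ as an algebra morphism, primitivity plus multiplicativity giving coproduct compatibility, and the passage to the associated graded (both sides becoming $S(L)$ with $\mathrm{gr}\,\Phi$ the identity) giving bijectivity --- all of which is sound.
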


\begin{proof} 
This is \cite[Theorem 3.14]{oudom2008lie} in the pre-Lie case, which has been extended to the post-Lie case in \cite[Theorem 3.4]{ebrahimi2014lie}, see also
\cite[Proposition 4]{Foissy} and \cite[Theorem 10]{Mencattini}.
\end{proof}

We note the following extension of \eqref{eq:atrb}: for $a_0,a_1,\ldots,a_n\in L$ we have
\begin{equation}\label{eq:atrbbb}
\begin{split}
a_0\trbar (a_1\cdots a_n) & = a_0\tr (a_1\cdots a_n)+ a_0a_1\cdots a_n
\\ & = \sum_{i=1}^n a_1 \cdots (a_0\tr a_i) \cdots a_n+ a_0a_1\cdots a_n,
\end{split}
\end{equation}
where we have used \eqref{eq:derivation} in the last equality.

\medskip
\subsection{The dual structure}\label{sec:dual}

Recalling the basis $\B_{\envU(L)}$ of $\envU(L)$ from \eqref{eq:basis} given by the PBW Theorem \ref{theo: PBW}, 
we introduce now a second basis on $\envU(L)$ given by 
    \begin{equation}\label{eq:basis*}
\begin{split}
{\overline\B}_{\envU(L)}:=&\{\mathds 1\}\sqcup\{x_1^{m_1}\cdots x_k^{m_k}: \ k, m_1,\ldots,m_k\geq 1, \\
& \qquad  \qquad x_1<\cdots<x_k, \ x_i\in\B_L\}.
\end{split}
\end{equation}
We have a map $T:\B_{\envU(L)}
\to{\overline\B}_{\envU(L)}$ given by $T(\mathds 1)=\mathds 1$ and
\begin{equation}\label{eq:T}
T\left(\frac1{m_1!\cdots m_k!}x_1^{m_1}\cdots x_k^{m_k}\right)=x_1^{m_1}\cdots x_k^{m_k},
\end{equation}
whish has a unique linear extension $T:\envU(L)\to\envU(L)$.
Then we introduce the pairing on $\envU(L)\otimes\envU(L)$ given by the bilinear extension of 
\begin{equation}\label{eq:dualityL}
\B_{\envU(L)}\times{\overline\B}_{\envU(L)}\ni(u,v)\mapsto \la u,v\ra :=\ind_{(Tu=v)}.
\end{equation}
Then we can define an associative and commutative product $\ast$ on $\envU(L)$:
\begin{equation}\label{eq:*}
\begin{split}
u*v &:=\sum_{w\in\B_{\envU(L)}}\la \Cop w,u\otimes v\ra \,Tw
\\ & = 
\sum_{w\in\B_{\envU(L)}}\sum_{(w)}
\la w^{(1)},u\ra \,\la w^{(2)},v\ra \,Tw,
\end{split}
\end{equation}
 which is dual to the coproduct $\Cop$ in the sense that for all $u,v,w\in \envU(L)$, we have
\begin{equation*}
\la w,u*v\ra = \sum_{(w)} \la w^{(1)},u\ra \,\la w^{(2)},v\ra =\la \Cop w,u\otimes v\ra,
\end{equation*}
where we use Sweedler's notation \eqref{eq:Sweedler} for the coproduct \eqref{eq:coshuffle}.

The multiplication table of $\ast$ on ${\overline\B}_{\envU(L)}$  is given as follows:
\begin{equation}\label{coro: commutative product env alg}
\prod_{i=1}^k x_i^{\alpha_i}\ast\prod_{i=1}^k x_i^{\beta_i}=\prod_{i=1}^k x_i^{\alpha_i+\beta_i}
\end{equation}
for all $x_1<\ldots<x_k$ with $x_i\in\B_L$ and $\alpha_i,\beta_i\in\N$.
Therefore, we obtain from \eqref{eq:cop1}-\eqref{coro: commutative product env alg} 
the following relation between $\Cop$ and the product $\ast$ in \eqref{coro: commutative product env alg}
\begin{equation}\label{eq:copast}
\Cop u = \sum_{\substack{u_1,u_2\in\B_{\envU(L)}\\ (Tu_1)\ast (Tu_2) = Tu}} u_1\otimes u_2, \qquad u\in\B_{\envU(L)}.
\end{equation}
We stress that we use $\B_{\envU(L)}$ in \eqref{eq:basis} as a basis for $(\envU(L),\trbar)$ and 
${\overline\B}_{\envU(L)}$ in \eqref{eq:basis*} as a basis for $(\envU(L),\ast)$.
\begin{remark}
The choice of the basis $\B_{\envU(L)}$ in \eqref{eq:basis} and of the duality \eqref{eq:dualityL} may look unnatural, with respect to
the basis ${\overline\B}_{\envU(L)}$ in \eqref{eq:basis*}. One one hand, the basis $\B_{\envU(L)}$ in \eqref{eq:basis}
gives a particularly simple form to the coshuffle coproduct $\Cop$, see \eqref{eq:cop1}-\eqref{eq:copast}. On the other hand, the basis 
${\overline\B}_{\envU(L)}$ in \eqref{eq:basis*} and the duality \eqref{eq:dualityL} give the multiplication table \eqref{coro: commutative product env alg} for $\ast$, which corresponds to the polynomial product in the symmetric algebra over $L$ in the pre-Lie case, for example in the
Butcher-Connes-Kreimer Hopf algebra.
\end{remark}

A coagebra structure like $(\envU(L),\Cop,\varepsilon)$ endowed with a pairing $\la \cdot,\cdot \ra$ like in \eqref{eq:dualityL}, and admitting dual bases like $\B_{\envU(L)}$ and ${\overline\B}_{\envU(L)}$, can always be dualised by the formula \eqref{eq:*} into an algebra structure $(\envU(L),\ast,\mathds 1)$.

However, an algebra structure like for example $(\envU(L),\trbar,\ind)$ can not always be dualised into a coalgebra structure.
Therefore in order to define $\Delta_{\trbar}:\envU(L)\rightarrow \envU(L)\otimesbold \envU(L)$ such that
$$\la u_1\trbar  u_2,v \ra =\la u_1 \otimes u_2,\Delta_{\trbar}v \ra, \qquad \forall v,u_1,u_2\in\envU(L),$$
we need to make the following finiteness assumption on $L$:

Under the finiteness assumptions \ref{assump: finiteness Lie bracket} and \ref{assump: finiteness tr} that we are going to introduce, Corollary \ref{lem:wd} below will ensure that for all $v\in \envU(L)$ the following sum is well-defined, proving the existence of the coproduct dual to the product $\star$:
\[
    \Delta_\star v:=\sum_{u_1,u_2\in\B_{\envU(L)}}\la u_1\trbar  u_2,v \ra (Tu_1)\otimes (Tu_2)
\]
We define the length of $w=w_1\cdots w_n\in \Bbar_{\envU(L)}$ with $w_i\in\B_L$ by $\ell(w):=n$ (and $\ell(\mathds 1):=0$).
\begin{lemma}\label{lem: permutations}
    For $w\in\Bbar_{\envU(L)}$ of length $\ell(w)\geq 1$, and $a_1,\ldots,a_{\ell(w)}\in\B_L$, we have:
    $$\la a_{1}\ldots a_{\ell(w)},w\ra=\la a_{\sigma(1)}\ldots a_{\sigma(\ell(w))},w\ra$$
    for every permutation $\sigma$ of $\{1,\ldots,\ell(w)\}$.
\end{lemma}
\begin{proof} If $\ell(w)=1$, then there is nothing to prove. Let suppose that $\ell(w)\geq 2$. By definition of the Lie enveloping algebra $\envU(L)$, for all $a_1,\ldots,a_{\ell(w)}\in\B_L$:
    $$a_1\ldots a_i a_{i+1}\ldots a_{\ell(w)}=a_1\ldots a_{i+1}a_i\ldots a_{\ell(w)} + a_1\ldots [a_i,a_{i+1}]\ldots a_{\ell(w)}$$
    and $a_1\ldots [a_i,a_{i+1}]\ldots a_{\ell(w)}\in \envU^{(\ell(w)-1)}(L)$, see \eqref{eq:filtration}. By definition of the pairing $\la\cdot,\cdot\ra$ (see \eqref{eq:dualityL}), one has that:
    $$\la a_1\ldots [a_i,a_{i+1}]\ldots a_{\ell(w)},w\ra =0$$
    and thus we obtain that for all $i\in\{1,\ldots,\ell(w)-1\}$:
    $$\la a_1\ldots a_i a_{i+1}\ldots a_{\ell(w)},w\ra = \la a_1\ldots a_{i+1}a_i\ldots a_{\ell(w)},w\ra.$$
    We conclude the proof by recalling that the set of adjacent transpositions $\{(i,i+1):i=1,\ldots,n-1\}$ generates the symmetric group on $n$ elements.
\end{proof}

\begin{assumption}\label{assump: finiteness Lie bracket}
    For all $c\in\B_L$ the following set is finite:
    \[\Big\{(a,b)\in \B_L\times\B_L: \la [a,b],c \ra\neq 0\Big\}.\] 
\end{assumption}

\begin{lemma}\label{lem: finiteness Lie bracket}
     Under Assumption \ref{assump: finiteness Lie bracket}, for all $w\in\Bbar_{\envU(L)}$ of length $\ell(w)\geq 1$ and for all $m\in\N$, the following set is finite:
     \[\Big\{(a_1,\ldots, a_m)\in(\B_L)^m,~\la a_1\cdots a_m,w\ra \neq 0\Big\}.\]
\end{lemma}
\begin{proof}
    Fix $w\in\Bbar_{\envU(L)}$ of length $\ell(w)\geq 1$. For all $m\in\N$, let us denote by $\P(m)$ the assertion:
    \begin{center}
    $\P(m)$: "the set $\{(a_1,\ldots, a_m)\in(\B_L)^m,~\la a_1\cdots a_m,w\ra \neq 0\}$ is finite."
    \end{center}
    \begin{enumerate}[leftmargin=*]
        \item If $m<\ell(w)$, then $\la a_1\cdots a_m,w\ra=0$ for all $(a_1,\ldots, a_m)\in(\B_L)^m$, because $a_1\cdots a_m\in \envU^{(m)}(L)\subset \envU^{(\ell(w)-1)}(L)$.
        \item If $m=\ell(w)$, then from Lemma \ref{lem: permutations} $\la a_1\cdots a_m,w\ra \neq 0$ if and only if there exists a permutation $\sigma$ of $\{1,\ldots,m\}$ such that $w=a_{\sigma(1)}\cdots a_{\sigma(\ell(w))}$ and the number of such permutation is at most $\ell(w)!$.
        \item Now suppose that the finiteness property is proved up to $m-1\geq \ell(w)$.
        Take $a_1\ldots,a_{m}\in(\B_L)^{m}$, and consider a permutation $\sigma$ of $\{1,\ldots,m\}$ such that $a_{\sigma(1)}\leq \ldots \leq a_{\sigma(m)}$. Let us write 
        $\sigma$ as a composition of adjacent transpositions: $$\sigma=(i_1,i_1+1)\circ \cdots\circ (i_k,i_k+1),$$
        with $i_1,\ldots,i_k\in\{1,\ldots,m\}$.
        Consider the family $\{\sigma_0,\ldots,\sigma_k\}$ of permutations of $\{1,\ldots,m\}$ defined by: 
        $$\sigma_0=\id,\qquad \forall \ell \in\{0,\ldots,k-1\}:~ \sigma_{\ell}=(i_1,i_1+1)\circ \cdots\circ (i_\ell,i_\ell+1).$$
        We have for all $\ell\in\{1,\ldots,k\}$ that $\sigma_\ell=\sigma_{\ell-1}\circ(i_\ell,i_\ell+1)$ and therefore
	\[
        \begin{split}
          &  a_{\sigma_\ell(1)}\cdots a_{\sigma_\ell(m)}= a_{\sigma_{\ell-1}(1)}\cdots a_{\sigma_{\ell-1}(i_\ell+1)}a_{\sigma_{\ell-1}(i_\ell)} \cdots a_{\sigma_\ell(m)}
            \\&=a_{\sigma_{\ell-1}(1)}\cdots a_{\sigma_{\ell-1}(m)} -  a_{\sigma_{\ell-1}(1)}\cdots [a_{\sigma_{\ell-1}(i_\ell)},a_{\sigma_{\ell-1}(i_\ell+1)}] \cdots a_{\sigma_{\ell-1}(m)}.
        \end{split}
        \]
        Iterating, we obtain since $\sigma_0=\id$ and $\sigma_k=\sigma$:
	\[
        \begin{split}
           a_{\sigma(1)}\cdots a_{\sigma(m)} = & \, a_1\cdots a_{m}
            \\& - \sum_{\ell=1}^k  a_{\sigma_{\ell-1}(1)}\cdots [a_{\sigma_{\ell-1}(i_\ell)},a_{\sigma_{\ell-1}(i_\ell+1)}] \cdots a_{\sigma_{\ell-1}(m)}.
        \end{split}
        \]
        As $m>\ell(w)$, we have by definition of the pairing \eqref{eq:dualityL}, that: $$\la a_{\sigma(1)}\cdots a_{\sigma(m)},w\ra=0.$$ Therefore, by linearity:
        $$\la a_1\cdots a_{m},w\ra = \sum_{\ell=1}^{k} \la a_{\sigma_{\ell-1}(1)}\cdots [a_{\sigma_{\ell-1}(i_\ell)},a_{\sigma_{\ell-1}(i_\ell+1)}] \cdots a_{\sigma_{\ell-1}(m)} ,w \ra.
$$
        Then, $\la a_1\cdots a_{m},w\ra \neq 0$ implies that:
\[
            \exists \ell\in\{0,\ldots,k-1\},~\la a_{\sigma_\ell(1)}\cdots [a_{\sigma_\ell(i_l)},a_{\sigma_\ell(i_l+1)}] \cdots a_{\sigma_\ell(m)}, w\ra\neq 0.
\]
       For such an $\ell\in\{0,\ldots,k-1\}$, we write 
       \[
       [a_{\sigma_\ell(i_l)},a_{\sigma_\ell(i_l+1)}]=\sum_{d\in\B_L} \la [a_{\sigma_\ell(i_l)},a_{\sigma_\ell(i_l+1)}],d\ra \, d
       \] 
       and
       \begin{multline*}
        0\ne\la a_{\sigma_\ell(1)}\cdots [a_{\sigma_\ell(i_l)},a_{\sigma_\ell(i_l+1)}] \cdots a_{\sigma_\ell(m)}, w\ra 
  \\ =   \sum_{d\in\B_L} \la a_{\sigma_\ell(1)}\cdots d \cdots a_{\sigma_\ell(m)}, w\ra\, \la [a_{\sigma_\ell(i_l)},a_{\sigma_\ell(i_l+1)}],d\ra.
       \end{multline*}
       By  the inductive hypothesis $\P(m-1)$, there are only finitely many $(a_{\sigma_\ell(1)},\ldots,d,\ldots,a_{\sigma_\ell(m)})\in(\B_L)^{m-1}$ such that 
       $$\la a_{\sigma_\ell(1)}\cdots d \cdots a_{\sigma_\ell(m)}, w\ra \ne 0,$$
       and for each such choice of $d\in\B_L$ by Assumption \ref{assump: finiteness Lie bracket} there are only finitely many $(a_{\sigma_\ell(i_l)},a_{\sigma_\ell(i_l+1)})\in(\B_L)^2$ such that $$\la [a_{\sigma_\ell(i_l)},a_{\sigma_\ell(i_l+1)}],d\ra\ne 0;$$ from this we obtain the desired finiteness property $\P(m)$.
    \end{enumerate}
    This concludes the proof.
\end{proof}

\begin{assumption}\label{assump: finiteness tr}
For all $c\in \B_L$ the following set is finite: 
\[\Big\{(a,b)\in \B_L\times\B_L: \la a\tr b,c \ra\neq 0\Big\}.\]
\end{assumption}

\begin{lemma}\label{lem: finiteness tr}
Under Assumption \ref{assump: finiteness tr}, for any $c\in\B_L$ the following set is finite:
\[\left\{(a_1,\ldots,a_n,b)\in(\B_L)^{n+1}: \ \la (a_1\cdots a_n)\tr b,c \ra\neq 0\right\}.\]
\end{lemma}
\begin{proof}
    First we prove by induction on $n\geq 1$ the assertion:  
\begin{align*}
    \P(n): ~"&~ \text{for every $c\in \B_L$, the following set is finite:}\\
&\left\{(a_1,\ldots,a_n,b)\in(\B_L)^{n+1}: \ \la (a_1\cdots a_n)\tr b,c \ra\neq 0\right\}"
\end{align*}
 
If $n=1$, then $\P(1)$ is the Assumption \eqref{assump: finiteness tr}.   
Suppose that $\P(n)$ is true for a certain $n\geq 1$. For $(a,a_1,\ldots,a_n,b,c)\in(\B_L)^{n+3}$ we set $v:=a_1\cdots a_n$. 
By Proposition \ref{prop: extension post-lie product} and by linearity:
$$\la (av)\tr b,c\ra = \la a\tr(v\tr b),c\ra - \la(a\tr v)\tr b,c\ra.$$
Therefore:
$$\la (av)\tr b,c\ra\neq 0~\Rightarrow~ \la a\tr(v\tr b),c\ra\neq 0\quad \vee \quad \la(a\tr v)\tr b,c\ra\neq 0.$$

For the first term, we know from \eqref{eq: grafting word on word} that $v\tr b\in L$, and by definition of the pairing $\la\cdot,\cdot\ra$, one has that: 
$$v\tr b=\sum_{d\in\B_L}\la v\tr b,d \ra d.$$
Then one can write by linearity:
$$\la a\tr(v\tr b),c\ra = \sum_{d\in\B_L} \la v\tr b,d \ra \la a\tr d,c\ra.$$
By $\P(1)$, there exists finitely many couples $(a,d)\in \B_L^2$ such that $\la a\tr d,c\ra\neq 0$ and for every such couple $(a,d)$, by $\P(n)$, there exists finitely many $(a_1,\ldots,a_n,b)\in(\B_L)^{n+1}$ such that $\la (a_1\cdots a_n)\tr b,d \ra\neq 0$. We deduce that there exist finitely many $(a,a_1,\ldots,a_n,b,d)\in(\B_L)^{n+3}$ such that:
$$ \la (a_1\cdots a_n)\tr b,d \ra \la a\tr d,c\ra\neq 0,$$
and therefore finitely many $(a_1,\ldots,a_n,b)\in(\B_L)^{n+2}$ such that: $$\la a\tr((a_1\cdots a_n)\tr b),c\ra\neq 0.$$
For the second term, we have $a\tr a_i=\sum_{v\in\B_L} \la a\tr a_i,d\ra \, d$  and then 
by \eqref{eq:derivation} 
\[
\begin{split}
a\tr v = a\tr(a_1\cdots a_n)&= \sum_{i=1}^n a_1 \cdots (a\tr a_i) \cdots a_n  
\\ &= \sum_{i=1}^n \sum_{d\in\B_L} \la a\tr a_i,d\ra   (a_1 \cdots d \cdots a_n),
\end{split}
\]
where in the last sum, $d\in \B_L$ has the $i$-th position, so that
\[
\la (a\tr v)\tr b,c\ra = \sum_{i=1}^n \sum_{d\in\B_L} \la a\tr a_i,d\ra  \la (a_1 \cdots d \cdots a_n)\tr b,c\ra.
\]
By $\P(n)$, for every $i=1,\ldots,n$, the set of all $(a_1,\ldots,d,\ldots,a_n,b)\in(\B_L)^{n+1}$, such that $\la(a_1 \cdots d \cdots a_n)\tr b,c\ra\ne 0$ is finite, and by $\P(1)$, for all such $d\in \B_L$, there exist finitely many couples $(a,a_i)\in \B_L^2$ such that $\la a\tr a_i,d\ra\neq 0$ is finite; therefore $\P(n+1)$ follows.
\end{proof}

\begin{corollary}\label{lem:wd}
Under Assumptions \ref{assump: finiteness Lie bracket} and \ref{assump: finiteness tr}, for any $w\in\Bbar_{\envU(L)}$ the following set is finite:
\[\left\{(u,v)\in \B_{\envU(L)}\times\B_{\envU(L)}:\,\la u\trbar v,w \ra\neq 0\right\}\]
\end{corollary}

\begin{proof} 
Let $w\in\Bbar_{\envU(L)}$. If $w=\ind$, then the only possibility is $u=v=\ind$.
Now, let suppose that $\ell(w)\geq 1$ and consider
$a_1,\ldots,a_m,b_1,\ldots, b_n\in \B_L$. Using first \eqref{eq: post-Lie associative product} and then \eqref{eq: grafting word on word}, we have
\begin{multline*}
    a_1\cdots a_m \star b_1\cdots b_n = \sum_{I_0\sqcup\cdots\sqcup I_n=\{1,\ldots,m\}}a_{I_0}(a_{I_1}\tr b_1) \cdots (a_{I_n}\tr b_n)\\
    = \sum_{\substack{I_0\sqcup\cdots\sqcup I_n=\{1,\ldots,m\}\\d_1,\ldots,d_n\in \B_L}}a_{I_0}\la a_{I_1}\tr b_1, d_1\ra\cdots \la a_{I_n}\tr b_n, d_n\ra d_1\cdots d_n.
\end{multline*}
Then
\begin{multline*}
    \la a_1\cdots a_m \trbar b_1\cdots b_n , w\ra\\
    =\sum_{\substack{I_0\sqcup\cdots\sqcup I_n=\{1,\ldots,m\}\\d_1,\ldots,d_n\in \B_L}} \la a_{I_1}\tr b_1, d_1\ra\cdots \la a_{I_n}\tr b_n, d_n\ra \la a_{I_0} d_1\cdots d_n , w\ra.
\end{multline*}
Then, the last sum is non-zero if at least one of its terms is non-zero, and it is easy to conclude using Lemmas \ref{lem: finiteness Lie bracket} and \ref{lem: finiteness tr}.
\end{proof}

Under Assumptions \ref{assump: finiteness Lie bracket} and \ref{assump: finiteness tr}, the coproduct $\Delta_{\trbar}:\envU(L)\to\envU(L)\otimes\envU(L)$ given by
\begin{equation}\label{eq: Delta_trbar}
 \Delta_{\trbar}v=\sum_{u_1,u_2\in \B_{\envU(L)}} \la u_1\trbar  u_2,v\ra (Tu_1)\otimes (Tu_2), \qquad v\in\envU(L),
\end{equation}
is well-defined thanks to Corollary \ref{lem:wd}.

\begin{prop}\label{prop: dual Hopf algebra}
If Assumptions \ref{assump: finiteness Lie bracket} and \ref{assump: finiteness tr} are satisfied, then the Hopf algebra $(\envU(L),\trbar,\Delta_\ast,\ind,\varepsilon)$ can be
dualized into the Hopf algebra $(\envU(L),\ast,\Delta_{\trbar},{\mathds 1},\varepsilon)$ via the pairing \eqref{eq:dualityL}.
\end{prop}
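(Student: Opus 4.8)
The plan is to read the statement as a pure \emph{dualization} result: both Hopf algebras are carried by the same space $\envU(L)$ and are paired by the single form \eqref{eq:dualityL}, so the content is that the dual of the Hopf algebra $(\envU(L),\trbar,\Cop)$ of Proposition \ref{prop: associative product} is exactly $(\envU(L),\ast,\Delta_{\trbar})$. Two of the four structural dualities hold by construction: the product $\ast$ is defined in \eqref{eq:*} so that $\la u,\, w\ast v\ra=\la\Cop u,\, w\otimes v\ra$, and $\Delta_{\trbar}$ is defined in \eqref{eq: Delta_trbar} so that $\la u_1\,\trbar\,u_2,\, v\ra=\la u_1\otimes u_2,\, \Delta_{\trbar}v\ra$; the unit/counit dualities $\la\ind,v\ra=\varepsilon(v)$ and $\la u,\ind\ra=\varepsilon(u)$ are immediate from \eqref{eq:dualityL}. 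Hence, once $\Delta_{\trbar}$ is shown to be well defined, everything reduces to transferring the Hopf axioms across the pairing.

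First I would settle the well-definedness of $\Delta_{\trbar}$, the only genuinely non-formal point and the reason Assumption \ref{assump: finiteness trbar} is imposed. For fixed $v\in\overline{\B}_{\envU(L)}$, writing $v=Tw_0$ with $w_0\in\B_{\envU(L)}$, the scalar $\la u_1\,\trbar\,u_2,\,v\ra$ is just the coefficient of $w_0$ in $u_1\,\trbar\,u_2$ expanded in $\B_{\envU(L)}$, so \eqref{eq: Delta_trbar} is a legitimate element of $\envU(L)\otimesbold\envU(L)$ iff only finitely many pairs $(u_1,u_2)$ contribute. When $v\in L$ I would compute this coefficient from $u_1\,\trbar\,u_2=\sum_{(u_1)}u_1^{(1)}(u_1^{(2)}\tr u_2)$, using that $\tr$ preserves the word-length of its right argument and that $\Cop$ is graded: grading by word-length forces $u_2\in L$, and then $\la u_1\,\trbar\,u_2,\,v\ra=\la u_1\tr u_2,\,v\ra$ except for the single unit term $u_2=\ind$, which contributes only the pair $(v,\ind)$. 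Thus on primitive $v$ the contributing set differs from the set in Assumption \ref{assump: finiteness trbar} by one element, and is therefore finite.

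To pass from $L$ to all of $\envU(L)$ I would first extend the assignment $x\mapsto\Delta_{\trbar}x$, finite for $x\in L$ by the previous step, to an $\ast$-algebra morphism $\widetilde{\Delta}:\envU(L)\to\envU(L)\otimesbold\envU(L)$; this is automatically finite on every basis element, since by \eqref{coro: commutative product env alg} each element of $\overline{\B}_{\envU(L)}$ is a finite $\ast$-product of generators of $L$, so $\widetilde{\Delta}(x_1^{m_1}\cdots x_k^{m_k})$ is a finite $\ast$-product of the finite tensors $\widetilde{\Delta}x_i$. I would then identify $\widetilde{\Delta}$ with \eqref{eq: Delta_trbar} by checking $\la u_1\,\trbar\,u_2,\,v\ra=\la u_1\otimes u_2,\,\widetilde{\Delta}v\ra$, a relation multiplicative in $v$ on both sides: on the right because $\widetilde{\Delta}$ is an $\ast$-morphism, and on the left because $\ast$ is dual to $\Cop$ and $\Cop$ is an algebra morphism for $\trbar$ in $(\envU(L),\trbar,\Cop)$. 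This shows $\Delta_{\trbar}$ is well defined and finite everywhere.

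Finally I would invoke the non-degeneracy of \eqref{eq:dualityL}: since $\{u\}_{u\in\B_{\envU(L)}}$ and $\{Tu\}_{u\in\B_{\envU(L)}}$ are dual bases through the bijection $T$, the pairing separates points on $\envU(L)$ and on its tensor powers. Each remaining axiom for $(\envU(L),\ast,\Delta_{\trbar},\ind,\varepsilon)$ is then obtained by testing against the pairing and reading off the matching axiom of $(\envU(L),\trbar,\Cop)$: coassociativity of $\Delta_{\trbar}$ dualizes associativity of $\trbar$, multiplicativity of $\Delta_{\trbar}$ for $\ast$ dualizes the compatibility of $\Cop$ with $\trbar$, counitality dualizes unitality, and the antipode of $(\envU(L),\ast,\Delta_{\trbar})$ is the transpose of that of $(\envU(L),\trbar,\Cop)$; these routine dualizations are exactly the general statements of \cite{grinberg2014hopf}. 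The main obstacle is not any of these formal transfers but the well-definedness step above, i.e. showing that Assumption \ref{assump: finiteness trbar}, stated only for $\tr$ on primitive elements, indeed propagates to make the full coproduct $\Delta_{\trbar}$ finite on all of $\envU(L)$.
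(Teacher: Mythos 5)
Your overall route is essentially the paper's own: prove well-definedness of $\Delta_{\trbar}$ on the primitive space $L$ using Assumption \ref{assump: finiteness trbar}, extend to all of $\envU(L)$ via the fact that $\envU(L)$ is generated from $L$ under $\ast$ together with multiplicativity of the pairing relation (this is exactly what the paper compresses into ``an argument of duality''), and obtain the remaining bialgebra axioms and the antipode by reversing arrows. You are in fact more explicit than the paper on the extension step and on the exceptional contribution of $u_2=\ind$.

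There is, however, a genuine gap in your key step ``grading by word-length forces $u_2\in L$''. The enveloping algebra $\env_{[\cdot,\cdot]}(L)$ is only \emph{filtered}, not graded, by word length as soon as $[\cdot,\cdot]\neq 0$: normal-ordering a product of $m\geq 2$ letters in the PBW basis \eqref{eq:basis} produces bracket terms of strictly smaller length. Concretely, for basis letters $x,y\in\B_L$ with $x>y$ one has $x\,\trbar\,y = xy + x\tr y = yx + [x,y] + x\tr y$, where $yx$ is a normal-ordered length-two basis element, so for $v\in L$ the pairing gives $\la x\,\trbar\,y,v\ra = \la x\tr y,v\ra + \la [x,y],v\ra$, and the bracket term need not vanish --- in the paper's main example one has $[\z^\gamma\otimes D^{(\n)},\1\otimes\partial_i]=n_i\,\z^\gamma\otimes D^{(\n-\e_i)}\neq 0$, see \eqref{eq:zgaDn}. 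Similarly, pairs $(u_1,u_2)$ with $u_2$ of word length $\geq 2$ can contribute to $\la u_1\,\trbar\,u_2,v\ra$ through iterated brackets arising when the products $u_1^{(1)}(u_1^{(2)}\tr u_2)$ are rewritten in the PBW basis. Hence finiteness of the contributing set does not follow from Assumption \ref{assump: finiteness trbar} by your length argument alone; the bracket contributions require separate control. Your argument is exact precisely in the pre-Lie case $[\cdot,\cdot]\equiv 0$, where $\envU(L)=S(L)$ is genuinely graded by word length. To be fair, the paper's proof asserts the identity $\la u_1\,\trbar\,u_2,v\ra=\la u_1\tr u_2,v\ra$ with the same gloss (it even omits the $u_2=\ind$ term you caught), and in its application the issue is resolved by the homogeneity grading of Section \ref{sec:group}; but as a general argument your step, like the paper's, would fail without an additional hypothesis or argument taming the bracket terms.
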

\begin{proof}
Every connected filtered bialgebra is a filtered Hopf algebra, see \cite[Corollary 5]{MHandbook} or
\cite[Theorem 3.4]{GG19}; since $(\envU(L),\trbar,\Delta_\ast,\ind,\varepsilon)$ is endowed with the filtration \eqref{eq:filtration},
it is a Hopf algebra.

The  bialgebra structure of $(\envU(L),\ast,\Delta_{\trbar},{\mathds 1},\varepsilon)$ is given by duality with the bialgebra $(\envU(L),\trbar,\Delta_\ast,\ind,\varepsilon)$ by reversing the arrows in the defining commutative diagrams. The existence of an antipode for $(\envU(L),\ast,\Delta_{\trbar},{\mathds 1},\varepsilon)$ follows also by duality: if $S$ is the antipode of  $(\envU(L),\trbar,\Delta_\ast,\ind,\varepsilon)$, then $S^*:\envU(L)\to\envU(L)$ defined by
 \[
 S^*v:=\sum_{u\in \B_{\envU(L)}} \la Su, v\ra T u,
 \]
 is an antipode for $(\envU(L),\ast,\Delta_{\trbar},{\mathds 1},\varepsilon)$.
 \end{proof}

\begin{remark}
A particular case for which Assumptions \ref{assump: finiteness Lie bracket} and \ref{assump: finiteness tr} are trivially satisfied is when $L$ is "graded of finite type", that is to say when $L=\bigoplus_{n=0}^\infty L_n$ with $\dim(L_n)<\infty$, and the operations $\tr$ and $[\cdot,\cdot]$ respect the gradation, that is to say their restrictions are mapping $L_p\otimes L_q$ into $L_{p+q}$.
Two particular instances of such graded post-Lie algebras of finite type in the literature are:
\begin{itemize}[leftmargin=*]
    \item The free pre-Lie algebra, being the free vector space on the set of (decorated) rooted trees, whose grading is given by the number of vertices, endowed with the grafting product, see for example \cite{CL}, which is the framework for Branched Rough Paths theory \cite{ramification}.
    \item The free post-Lie algebra, being the free Lie algebra on the set of planary (decorated) rooted trees, whose grading is given by the number of vertices, endowed with the grafting product, extended on formal Lie brackets using the axioms of post-Lie algebras, see \cite{munthe2013post}, which is the framework for planarly branched rough paths \cite{CEFMM}.
\end{itemize}
However, we emphasize that this hypothesis of finite type will not be satisfied in our context, which motivates the need for Assumptions \ref{assump: finiteness Lie bracket} and \ref{assump: finiteness tr}.
\end{remark}

\subsection{The character group} 
We note that for all $v\in\envU(L)$ we have
\begin{equation*}
v=\sum_{u\in\B_{\envU(L)}}  \la u,v\ra \, Tu,
\end{equation*}
where $T:\envU(L)\to\envU(L)$ is the linear operator defined in \eqref{eq:T}.

We define the (real) dual space $\envU(L)^*$ as the space of linear maps $f:\envU(L)\to \R$. As before, we consider a basis $\B_L$ of $L$ and a total order $\leq$ on it and the PBW Theorem \ref{theo: PBW} induces the basis $\B_{\envU(L)}$ in \eqref{eq:basis} and
${\overline\B}_{\envU(L)}$ in \eqref{eq:basis*} of $\envU(L)$.
Then for all $f\in\envU(L)^*$ and all $v\in\envU(L)$, we have
$$f(v)=\sum_{u\in\B_{\envU(L)}}  \la u,v\ra \, f(Tu).$$
This allows us to identify $\envU(L)^*$ with a space of formal series
\begin{equation}\label{eq: duality vs formal series}
\begin{gathered}
 \envU(L)^*\ni f\longleftrightarrow \sum_{u\in\B_{\envU(L)}} f(Tu)\, u\in\left\{\sum_{u\in\B_{\envU(L)}}\alpha_u u:\,\alpha_v\in \R\right\}, \\
    \left(\sum_{u\in\B_{\envU(L)}}\alpha_u u\right)(v):=\sum_{u\in\B_{\envU(L)}} \alpha_u  \la u,v\ra.
        \end{gathered}
\end{equation}

\begin{definition}\label{def:grch}
The set $G\subset\envU(L)^*$ of (real-valued) \textbf{characters} on $(\envU(L),\ast)$ is defined as the set of $\ast$-multiplicative linear forms $f\in\envU(L)^*$ such that $f(\mathds{1})=1$
\[
f(u_1\ast u_2) = f(u_1)\, f(u_2), \qquad u_1,u_2\in\envU(L).
\]
We also define $H:=\{f\in\envU(L)^*: f(\mathds{1})=1\}$.
\end{definition}

If Assumptions \ref{assump: finiteness Lie bracket} and \ref{assump: finiteness tr} are satisfied, we have proved in Proposition \ref{prop: dual Hopf algebra} that $(\envU(L),\ast,\Delta_{\trbar},\ind,\varepsilon)$ is a Hopf algebra. This leads to the following well-known result (see for example \cite[Proposition 19]{MHandbook}):
\begin{prop}\label{pr:grch}
    If Assumptions \ref{assump: finiteness Lie bracket} and \ref{assump: finiteness tr} are satisfied, the set $H$ in Definition \ref{def:grch}
    can be endowed with a group structure $(\trbar,\ind^*)$, where the unit element is given by duality as $\mathds{1}^*(\cdot):=\la \mathds{1},\cdot\ra$, the product is given by:
\begin{equation*}
f_1\trbar  f_2  := m_\R(f_1\otimes f_2)\Delta_\trbar.
\end{equation*}
where $m_\R$ denotes the multiplication in $\R$, and the inverse of $f\in H$ is computed as:
$$f^{-1}=\sum_{n\geq 0} (\varepsilon-f)^{\star n}.$$

Moreover the set $G$ of characters is a subgroup of $H$.
\end{prop}
Using the identification \eqref{eq: duality vs formal series}, we can also write
\begin{equation*}
(f_1\trbar  f_2)(v)=\sum_{u_1,u_2\in\B_{\envU(L)}} f_1(Tu_1)f_2(Tu_2)\la u_1\trbar u_2,v\ra.
\end{equation*}
\bigskip

\section{The post-Lie algebra of derivations}\label{sec:derivations}
\medskip
\subsection{Derivations and post-Lie algebra structure} In this section, we use the notations of \cite{ebrahimi2014lie}. We fix once and for
all an associative and commutative $\K$-algebra $(\A,\cdot)$.

The space of derivations $\Der(\A)$ on $\A$ is the subspace of all $D\in\End(\A)$ satisfying the following Leibniz rule that for all $a,b\in\A$
\[D(a\cdot b)=D(a)\cdot b + a\cdot D(b).\]
One can easily prove a generalisation of the Leibniz rule for products of $n\geq 2$ terms,
by induction on $n$, given for all $a_1,\ldots,a_n\in\A$ by
\begin{equation*}
    D(a_1\cdots a_n)=\sum_{i=1}^n a_1\cdots D(a_i)\cdots a_n.
\end{equation*}

One of the most common examples for the algebra $\A$ is the space of smooth functions $C^\infty(\K^n)$, where $\K=\R~\text{or}~\C$ endowed with the pointwise product. In particular polynomials in $\K[z_1,\ldots,z_n,\ldots]$ fulfill that condition and each derivation $D$ on that algebra is given as formal series of partial derivations along each coordinate $\partial_{z_i}$:
$$D=\sum_{i}D(z_i)\partial_{z_i}.$$

Another relevant example in our setting is the following: given a post-Lie algebra $(L,\tr,[\cdot,\cdot])$ and the universal enveloping algebra $\env_{[\cdot,\cdot]}(L)$, then every element $a\in L$ defines a derivation on $\env_{[\cdot,\cdot]}(L)$ via the extension of $\tr$ to $\env_{[\cdot,\cdot]}(L)$, see \eqref{eq:derivation}. \\

In the following, we will denote by $\circ$, as usual, the composition operation in
$\End(\A)$. The \textit{commutator} of $\circ$ is the anti-commutative binary operation on $\End(\A)$ defined by
\begin{equation*}
    [D_1,D_2]_\circ=D_1\circ D_2 - D_2\circ D_1.
\end{equation*}

This is a Lie bracket by associativity of $\circ$, which moreover stabilises $\Der(\A)$, since
for all $a,b\in\A$, we have
\begin{align*}
    D_1\circ D_2 (ab) &= D_1\circ \Big(D_2 (a)b + aD_2 (b)\Big)\\
    &= D_1\circ D_2 (a)b + D_2 (a)D_1 (b) + D_1 (a)D_2 (b) + aD_1\circ D_2 (b).
\end{align*}
Thus, after inverting the indices, one obtains
$$[D_1,D_2]_\circ (ab) = [D_1,D_2]_\circ (a)b + a[D_1,D_2]_\circ (b),$$
which proves that $[D_1,D_2]_\circ\in \Der(\A)$.\\

For $a\in\A$ and $D\in\End(\A)$ we denote 
\begin{equation}\label{eq:aD}
a\cdot D:\A\to\A, \qquad a\cdot D(b):=aD(b). 
\end{equation}
If $D\in\Der(\A)$ then $a\cdot D$ also belongs to $\Der(\A)$.\\

The main tool of the article is the following:
\begin{theorem}\label{theo: post-Lie structure from derivations}
Let $\D$ be a sub-Lie algebra of $\Der(\A)$ for the commutator bracket $[\cdot,\cdot]_\circ$. The vector space $\A\otimes \D$ admits a structure of (left) post-Lie algebra $(\tr,[\cdot,\cdot])$ given for all $a_1,a_2\in\A$ and $D_1,D_2\in \D$ by:
\begin{equation}\label{eq: post-lie product derivation}
    a_1\otimes D_1\tr a_2 \otimes D_2:=a_1 D_1(a_2)\otimes D_2,
\end{equation}
\begin{equation}\label{eq: post-lie bracket derivation}
    [a_1\otimes D_1, a_2 \otimes D_2]:=a_1a_2\otimes [D_1,D_2]_\circ.
\end{equation}
\end{theorem}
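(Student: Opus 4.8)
The plan is to use bilinearity of both $\tr$ and $[\cdot,\cdot]$ to reduce every identity to the case of elementary tensors $a_i\otimes D_i$, and then verify the three post-Lie axioms directly. The only structural inputs are the Leibniz rule for derivations, the commutativity and associativity of $\A$, and the antisymmetry and Jacobi identity of $[\cdot,\cdot]_\circ$ on $\D$; the hypothesis that $\D$ is a sub-Lie algebra of $\Der(\A)$ enters precisely to guarantee that $[D_i,D_j]_\circ\in\D$, so that all expressions stay in $\A\otimes\D$.

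First I would check that $[\cdot,\cdot]$ is a Lie bracket. Antisymmetry is immediate: $[a_2\otimes D_2,a_1\otimes D_1]=a_2a_1\otimes[D_2,D_1]_\circ=-\,a_1a_2\otimes[D_1,D_2]_\circ$ by commutativity of $\A$ and antisymmetry of the commutator. For the Jacobi identity I would observe that the bracket is \emph{diagonal}: iterating gives $[a_1\otimes D_1,[a_2\otimes D_2,a_3\otimes D_3]]=a_1a_2a_3\otimes[D_1,[D_2,D_3]_\circ]_\circ$, so the cyclic sum factors (using associativity and commutativity of $\A$) as $a_1a_2a_3$ times the cyclic sum of iterated commutators, which vanishes because $[\cdot,\cdot]_\circ$ satisfies Jacobi on $\D$.

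Next, for axiom (2) I would set $a=a_0\otimes D_0$, $b=a_1\otimes D_1$, $c=a_2\otimes D_2$ and compute the left-hand side $a\tr[b,c]=a_0D_0(a_1a_2)\otimes[D_1,D_2]_\circ$. Expanding $D_0(a_1a_2)=D_0(a_1)a_2+a_1D_0(a_2)$ by Leibniz splits this into exactly the two terms $[a\tr b,c]=a_0D_0(a_1)a_2\otimes[D_1,D_2]_\circ$ and $[b,a\tr c]=a_1a_0D_0(a_2)\otimes[D_1,D_2]_\circ$ on the right-hand side. So axiom (2) is just the Leibniz rule promoted to the tensor setting.

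The main point is axiom (3), $[a,b]\tr c=\ass_\tr(a,b,c)-\ass_\tr(b,a,c)$, which I expect to be the one genuine computation. With $a=a_1\otimes D_1$, $b=a_2\otimes D_2$, $c=a_3\otimes D_3$, the left-hand side is $a_1a_2[D_1,D_2]_\circ(a_3)\otimes D_3=a_1a_2\big(D_1(D_2(a_3))-D_2(D_1(a_3))\big)\otimes D_3$. For the associator I would compute $b\tr c=a_2D_2(a_3)\otimes D_3$, then $a\tr(b\tr c)=a_1\big(D_1(a_2)D_2(a_3)+a_2D_1(D_2(a_3))\big)\otimes D_3$ using Leibniz, and $(a\tr b)\tr c=a_1D_1(a_2)D_2(a_3)\otimes D_3$. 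The crucial cancellation is that the first-order term $a_1D_1(a_2)D_2(a_3)\otimes D_3$ appears in both and drops out, leaving $\ass_\tr(a,b,c)=a_1a_2D_1(D_2(a_3))\otimes D_3$; swapping the roles of $a$ and $b$ yields $\ass_\tr(b,a,c)=a_1a_2D_2(D_1(a_3))\otimes D_3$, and the difference reproduces exactly the commutator in the left-hand side. The only subtlety is the bookkeeping of which factors each derivation acts on, but the identity falls out once the Leibniz cross-terms are seen to cancel, leaving the second-order part that is precisely $[D_1,D_2]_\circ(a_3)$.
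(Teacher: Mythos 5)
Your proof is correct and follows essentially the same route as the paper: compute $\ass_\tr(a_1\otimes D_1,a_2\otimes D_2,a_3\otimes D_3)=a_1a_2\,D_1\circ D_2(a_3)\otimes D_3$ via the cancellation of the Leibniz cross-term, deduce axiom (3) from the commutator, and obtain axiom (2) directly from the Leibniz rule. Your explicit verification of antisymmetry and the Jacobi identity for $[\cdot,\cdot]$ (using that the bracket is diagonal, so the cyclic sum factors through $a_1a_2a_3$) is a point the paper leaves implicit, but it is a routine addition rather than a different approach.
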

\begin{proof} Let us first compute the value of the associator of $\tr$.
Take $a_1,a_2,a_3\in\A$ and $D_1,D_2,D_3\in \D$.
On one hand by the Leibniz rule:
\begin{align*}
    a_1\otimes D_1&\tr\left(a_2\otimes D_2\tr a_3\otimes D_3\right)\\
    &=a_1\otimes D_1\tr\left(a_2 D_2(a_3)\otimes D_3\right)\\
    &=a_1 D_1(a_2)  D_2(a_3) \otimes D_3+ a_1 a_2 D_1\circ D_2(a_3)\otimes D_3.
\end{align*}
On the other hand:
\begin{align*}
    \left(a_1\otimes D_1\tr a_2\otimes D_2\right)\tr a_3\otimes D_3&=\left(a_1 D_1(a_2)\otimes D_2\right)\tr a_3\otimes D_3\\
    &=a_1D_1(a_2) D_2(a_3)\otimes D_3.
\end{align*}
By subtracting the last two equalities one finally obtains that
$$\ass_\tr(a_1\otimes D_1,a_2\otimes D_2,a_3\otimes D_3)=a_1 a_2 D_1\circ D_2(a_3)\otimes D_3.$$
Now let us verify the two post-Lie conditions. By commutativity of $\A$, one has:
\begin{align*}
    &\ass_\tr(a_1\otimes D_1,a_2\otimes D_2,a_3\otimes D_3)-\ass_\tr(a_2\otimes D_2,a_1\otimes D_1,a_3\otimes D_3)\\
    &=a_1 a_2 D_1\circ D_2(a_3)\otimes D_3 - a_1 a_2 D_2\circ D_1(a_3)\otimes D_3\\
   & =(a_1a_2\otimes [D_1,D_2]_\circ) \tr (a_3\otimes D_3)
    \\ &= [a_1\otimes D_1, a_2 \otimes D_2] \tr a_3\otimes D_3.
\end{align*}
by the definition \eqref{eq: post-lie bracket derivation} of $[\cdot,\cdot]$.
Finally, by the definitions \eqref{eq: post-lie product derivation} and \eqref{eq: post-lie bracket derivation} of $[\cdot,\cdot]$ and $\tr$, one has:
\[
\begin{split}
  &  a_1\otimes D_1 \tr [a_2\otimes D_2 , a_3\otimes D_3]
  \\ & =a_1 D_1(a_2)a_3\otimes [D_2,D_3]_\circ + a_1a_2D_1(a_3)\otimes [D_2,D_3]_\circ\\
  &  =[a_1D_1(a_2)\otimes D_2,a_3\otimes D_3]+[a_2\otimes D_2,a_1D_1(a_3)\otimes D_3]\\
  &  =[a_1\otimes D_1\tr a_2\otimes D_2,a_3\otimes D_3]+[a_2\otimes D_2,a_1\otimes D_1\tr a_3\otimes D_3].
\end{split}
\]
The proof is complete.
\end{proof}

\begin{corollary}[Burde]\label{coro: pre-Lie structure of commutative derivations}
    If $\mathcal{D}\subset \Der(\A)$ is a linear space of derivations which commute with each other for the composition product, then $(\A\otimes \D,\tr)$ is a left pre-Lie algebra, where the pre-Lie product $\tr$ is given by the formula \eqref{eq: post-lie product derivation}.
\end{corollary}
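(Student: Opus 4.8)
The plan is to obtain this as a direct specialisation of Theorem \ref{theo: post-Lie structure from derivations}, combined with the Remark following the definition of post-Lie algebra (the one asserting that a post-Lie algebra with vanishing bracket is precisely a pre-Lie algebra). The whole structural content is already carried by that theorem, so the task reduces to checking that the commutation hypothesis triggers both the applicability of the theorem and the degeneration of the post-Lie bracket.

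First I would check that a linear space $\D$ of pairwise commuting derivations is automatically a sub-Lie algebra of $\Der(\A)$ for the commutator bracket $[\cdot,\cdot]_\circ$. Indeed, the commutation hypothesis says exactly that $[D_1,D_2]_\circ=D_1\circ D_2-D_2\circ D_1=0$ for all $D_1,D_2\in\D$, so $\D$ is an abelian Lie subalgebra and the hypotheses of Theorem \ref{theo: post-Lie structure from derivations} are satisfied. That theorem then equips $\A\otimes\D$ with the post-Lie structure $(\tr,[\cdot,\cdot])$ of \eqref{eq: post-lie product derivation}--\eqref{eq: post-lie bracket derivation}.

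Next I would evaluate the post-Lie bracket \eqref{eq: post-lie bracket derivation} under this hypothesis. For all $a_1,a_2\in\A$ and $D_1,D_2\in\D$,
$$[a_1\otimes D_1,a_2\otimes D_2]=a_1a_2\otimes[D_1,D_2]_\circ=0,$$
since $[D_1,D_2]_\circ=0$; by bilinearity the bracket $[\cdot,\cdot]$ on $\A\otimes\D$ is identically zero. Finally I would invoke the Remark quoted above to conclude that $(\A\otimes\D,\tr)$, with $\tr$ given by \eqref{eq: post-lie product derivation}, is a pre-Lie algebra.

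I expect no genuine obstacle here: all the nontrivial verifications (the associator computation and the two post-Lie axioms) have already been performed in the proof of Theorem \ref{theo: post-Lie structure from derivations}, and the only remaining points are the elementary observations that commutativity forces the abelian sub-Lie-algebra condition needed to apply the theorem and kills the post-Lie bracket, after which the pre-Lie property is immediate from the Remark.
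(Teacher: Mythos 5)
Your proof is correct and is essentially the paper's own argument: the corollary is presented there precisely as a specialisation of Theorem \ref{theo: post-Lie structure from derivations}, with the commutation hypothesis making $\D$ an abelian sub-Lie algebra of $(\Der(\A),[\cdot,\cdot]_\circ)$, killing the bracket \eqref{eq: post-lie bracket derivation}, and the Remark on post-Lie algebras with vanishing bracket yielding the pre-Lie structure. Nothing is missing.
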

The latter result is Proposition 2.1 in \cite{Burde}, where left pre-Lie algebras are called {\it left-symmetric algebras}.

\medskip
We give now an extension of Definition \ref{def:preLie}, namely the notion of multiple pre-Lie algebras, see \cite{foissy18}.

\begin{definition}
A (left) \textbf{multiple pre-Lie algebra} $(\A,\{\tr_i\}_{i\in I})$ is the data of a vector space $\A$, endowed with a family of bilinear operations $\tr_i:\A\otimes \A\rightarrow \A$ indexed by a set $I$, which verifies the following relation for all $i,j\in I$ and $a,b,c\in \A$:
\begin{equation*}
    a\tr_i (b\tr_j c) - (a\tr_i b)\tr_j c = b\tr_j (a\tr_i c) - (b \tr_j a)\tr_i c.
\end{equation*}
If the index set $I$ is a singleton, namely $\{\tr_i\}_{i\in I}=\{\tr\}$, then the data $(\A,\tr)$ is a (left) \textbf{pre-Lie algebra},
namely a particular case of Definition \ref{def:preLie}. 
\end{definition}

Then we have the following
\begin{corollary} Let $\{D_i\}_{i\in I}\subset \Der(\A)$ a set of commuting derivations. The family of binary operations $\{\tr_i\}_{i\in I}$ defined for all $a,b\in \A$ and $i\in I$ by:
    $$a\tr_i b:=aD_i(b)$$
  makes $(\A,\{\tr_i\}_{i\in I})$ a multiple left pre-Lie algebra.
\end{corollary}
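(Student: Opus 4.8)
The plan is to verify the defining multiple pre-Lie relation
\begin{equation*}
a\tr_i(b\tr_j c)-(a\tr_i b)\tr_j c=b\tr_j(a\tr_i c)-(b\tr_j a)\tr_i c
\end{equation*}
by a direct expansion, and to observe that it reduces exactly to the commutativity hypothesis on the $D_i$. The computation is the scalar analogue of the associator computation already performed in the proof of Theorem \ref{theo: post-Lie structure from derivations}, so I expect it to be short.

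First I would expand the left-hand side using $a\tr_i b=aD_i(b)$ together with the Leibniz rule for $D_i$: the term $a\tr_i(b\tr_j c)=aD_i\big(bD_j(c)\big)$ splits, via Leibniz, into a first-order part $aD_i(b)D_j(c)$ and a second-order part $ab\,D_i\big(D_j(c)\big)$. The first-order part is precisely $(a\tr_i b)\tr_j c$, so it cancels in the difference, and the left-hand side collapses to the \emph{mixed associator} $ab\,D_i\big(D_j(c)\big)$.

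Next I would run the same expansion on the right-hand side, now with the roles of $(i,a)$ and $(j,b)$ interchanged, obtaining $ba\,D_j\big(D_i(c)\big)$; using the commutativity of $\A$ this equals $ab\,D_j\big(D_i(c)\big)$. The whole relation therefore amounts to the single identity $ab\,D_i\big(D_j(c)\big)=ab\,D_j\big(D_i(c)\big)$, required to hold for all $a,b,c\in\A$.

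There is no genuine obstacle here: this last identity holds for every $a,b,c$ precisely when $D_i\circ D_j=D_j\circ D_i$, which is the standing assumption that $\{D_i\}_{i\in I}$ is a family of commuting derivations. The only substantive point is the clean cancellation of the first-order Leibniz terms, which is what turns the two-operation associator into a pure second-order composition; once this is observed, the claim is immediate from $[D_i,D_j]_\circ=0$.
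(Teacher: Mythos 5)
Your computation is correct: the Leibniz expansion of $a\tr_i(b\tr_j c)=aD_i(b)D_j(c)+ab\,D_iD_j(c)$ does cancel the first-order term against $(a\tr_i b)\tr_j c$, leaving $ab\,D_iD_j(c)$ on the left and, by commutativity of $\A$, $ab\,D_jD_i(c)$ on the right, so the multiple pre-Lie relation reduces to $[D_i,D_j]_\circ=0$. The paper, however, proves the corollary without any computation at all: it simply invokes Corollary \ref{coro: pre-Lie structure of commutative derivations} (Burde's result) with $\D$ taken to be the linear span of $\{D_i\}_{i\in I}$ --- note that the span of pairwise commuting derivations again consists of pairwise commuting derivations --- and reads off the scalar relation on $\A$ from the pre-Lie relation on $\A\otimes\D$, in the spirit of the remark identifying multiple pre-Lie structures on $\A$ with the pre-Lie structure on $\A\otimes\R.I$. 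The two routes are mathematically equivalent, since Burde's corollary ultimately rests on the associator computation in the proof of Theorem \ref{theo: post-Lie structure from derivations}, of which your calculation is exactly the scalar shadow; what the paper's reduction buys is economy and a clear dependency structure, while your direct verification is self-contained and makes visible the key mechanism (only the second-order composition terms survive the cancellation). One cosmetic caveat: your phrase \enquote{holds precisely when $D_i\circ D_j=D_j\circ D_i$} asserts an unneeded converse, which would require $\A$ unital (take $a=b=\1$) or without annihilators; only the forward implication is used, so the proof stands as written.
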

\begin{proof}
It is a direct application of Corollary \ref{coro: pre-Lie structure of commutative derivations}, where $\mathcal{D}$ is the linear space of derivations generated by $\{D_i\}_{i\in I}$.
\end{proof}

\begin{corollary}\label{coro: pre-Lie product implied by single derivation}
    Every derivation $D\in \Der(\A)$ defines a pre-Lie product $\tr$ on $\A$ given for all $a,b\in \A$ by:
    $$a\tr b=a D(b).$$
\end{corollary}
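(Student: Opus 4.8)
The plan is to derive this as an immediate special case of Corollary \ref{coro: pre-Lie structure of commutative derivations}. The key observation is that a single derivation $D$ trivially commutes with itself for the composition product, so the one-dimensional linear space $\D:=\K D$ spanned by $D$ is a linear space of pairwise-commuting derivations. Applying Corollary \ref{coro: pre-Lie structure of commutative derivations} to this $\D$ endows $\A\otimes\D$ with a left pre-Lie structure.

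The second and only substantive step is to transport this structure from $\A\otimes\D$ to $\A$ itself. Since $\D=\K D$ is one-dimensional, there is a canonical linear isomorphism $\A\otimes\D\cong\A$ sending $a\otimes D\mapsto a$ (equivalently, $a\otimes\lambda D\mapsto\lambda a$). Under this identification, the pre-Lie product \eqref{eq: post-lie product derivation}, which reads $(a_1\otimes D)\tr(a_2\otimes D)=a_1 D(a_2)\otimes D$, becomes exactly $a\tr b=aD(b)$ on $\A$. One then checks that the defining pre-Lie relation \eqref{eq: def pre-Lie relation} is preserved under a linear isomorphism of the underlying vector space, which is immediate since the relation is phrased purely in terms of the bilinear operation $\tr$ and the isomorphism intertwines the two products.

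I do not anticipate any genuine obstacle here: the content is entirely contained in the already-proven Theorem \ref{theo: post-Lie structure from derivations} (via its specialisation to commuting derivations in Corollary \ref{coro: pre-Lie structure of commutative derivations}), and the passage to $\A$ is only a notational simplification through a one-dimensional tensor factor. The only point requiring the slightest care is recording that one need not invoke the full post-Lie machinery: because $D$ commutes with itself, the bracket $[\cdot,\cdot]$ in \eqref{eq: post-lie bracket derivation} vanishes identically on $\A\otimes\K D$, so the post-Lie structure degenerates to a genuine pre-Lie structure, as guaranteed by Corollary \ref{coro: pre-Lie structure of commutative derivations}. This is what licenses the conclusion that $\tr$ alone satisfies \eqref{eq: def pre-Lie relation} on $\A$.

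Alternatively, and just as quickly, one could verify the pre-Lie identity \eqref{eq: def pre-Lie relation} directly on $\A$: expanding $a\tr(b\tr c)-(a\tr b)\tr c$ using $a\tr b=aD(b)$ and the Leibniz rule for $D$ gives $a b D^2(c)$ (the term $aD(b)D(c)$ cancelling against the corresponding term from $(a\tr b)\tr c$), and this expression is symmetric in $a$ and $b$ by commutativity of $\A$, which is precisely the required relation. Either route concludes the proof.
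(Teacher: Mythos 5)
Your proof is correct and follows essentially the paper's own route: the corollary is left without a separate proof there precisely because it is the singleton case of the preceding statements, i.e.\ Corollary \ref{coro: pre-Lie structure of commutative derivations} applied to the commuting family $\D=\K D$, exactly as you argue. Your closing direct verification of \eqref{eq: def pre-Lie relation} (the associator equals $abD^2(c)$, symmetric in $a$ and $b$ by commutativity of $\A$) is the same computation implicit in the paper, and it also quietly covers the degenerate case $D=0$, where the one-dimensional identification $\A\otimes\K D\cong\A$ in your first route would not literally apply.
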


\begin{remark} 
    Let $\A$ be a space endowed with a set $\{\tr_i\}_{i\in I}$ of binary operations $\A\otimes \A \to \A$ indexed by a set $I$ and denote $\R.I$ the free real vector space generated by it. Consider the tensor product of vector spaces $\A\otimes \R.I$, endowed with the binary operation $\tr$ defined by:
    $$(a\otimes i)\tr (b\otimes j)=a\tr_i b \otimes j$$
    Then it is an easy exercise to show that $(\A,\{\tr_i\}_{i\in I})$ is a multiple pre-Lie algebra if and only if $(\A\otimes \R.I,\tr)$ is a pre-Lie algebra.
\end{remark}
\medskip

\subsection{Associative product on the post-Lie enveloping algebra}
Note that Theorem \ref{theo: post-Lie structure from derivations} applies in particular to $\A\otimes\Der(\A)$, which is therefore
endowed with a natural post-Lie structure. We fix a sub-post-Lie algebra $L\subseteq \A\otimes\Der(\A)$.

Following Proposition \ref{prop: extension post-lie product} we know that an extension of the post-Lie product $\tr$ to $\env_{[\cdot,\cdot]}(L)$ can be constructed. We make explicit the extension on the left:

\begin{prop}\label{prop: grafting a word on a letter derivations}
The extension of the post-Lie product $\tr$ to $\env_{[\cdot,\cdot]}(L)$ as in Proposition \ref{prop: extension post-lie product}, is given on the left by:
\begin{equation}\label{eq:extensionlefttr}
    \Big((a_1\otimes D_1)\cdots (a_n\otimes D_n)\Big)\tr (a \otimes D)= \Big(a_1\cdots a_n\cdot D_1\circ \ldots \circ D_n (a)\Big) \otimes D.
\end{equation}
\end{prop}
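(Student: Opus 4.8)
The plan is to prove formula \eqref{eq:extensionlefttr} by induction on the length $n$ of the word on the left, using the recursive characterization of the extended product $\tr$ given in Proposition \ref{prop: extension post-lie product}. The base case $n=1$ is exactly the definition \eqref{eq: post-lie product derivation} of the post-Lie product on letters of $L$: indeed $a_1\otimes D_1\tr a\otimes D=a_1 D_1(a)\otimes D$, which matches the right-hand side since for a single derivation the composition $D_1$ is applied once.

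For the inductive step, I would write the word of length $n$ as $(a_1\otimes D_1)\,w$ where $w:=(a_2\otimes D_2)\cdots(a_n\otimes D_n)$, and apply property (2) of Proposition \ref{prop: extension post-lie product}, namely $av\tr u=a\tr(v\tr u)-(a\tr v)\tr u$, with $a=a_1\otimes D_1$, $v=w$, and $u=a\otimes D$. The second term $(a_1\otimes D_1\tr w)\tr (a\otimes D)$ requires knowing how a single letter of $L$ acts on a word of $\env_{[\cdot,\cdot]}(L)$; by \eqref{eq:derivation} this is a sum of $n-1$ words of the same length $n-1$, each still of the form (scalar multiple times a product of letters $a_i'\otimes D_i$), so I can apply the induction hypothesis termwise. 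The first term $a_1\otimes D_1\tr(w\tr(a\otimes D))$ uses the induction hypothesis to evaluate $w\tr(a\otimes D)=a_2\cdots a_n\cdot D_2\circ\cdots\circ D_n(a)\otimes D$, and then the action of the single letter $a_1\otimes D_1$ by the defining formula \eqref{eq: post-lie product derivation}, where the Leibniz rule for $D_1$ will produce several terms.

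The main obstacle, and the heart of the computation, is the telescoping cancellation between these two groups of terms. When $a_1\otimes D_1$ acts via \eqref{eq: post-lie product derivation} on $a_2\cdots a_n\cdot D_2\circ\cdots\circ D_n(a)\otimes D$, the derivation $D_1$ must be applied by Leibniz to the product $a_2\cdots a_n\,(D_2\circ\cdots\circ D_n)(a)$; this yields one term where $D_1$ hits the factor $(D_2\circ\cdots\circ D_n)(a)$, producing exactly the desired $a_1a_2\cdots a_n\,D_1\circ D_2\circ\cdots\circ D_n(a)\otimes D$, plus spurious terms where $D_1$ hits one of the scalar factors $a_i$. The point is that these spurious terms are precisely cancelled by the corresponding contributions of the subtracted term $(a_1\otimes D_1\tr w)\tr(a\otimes D)$, in which $D_1$ acts on each $a_i$ (turning $a_i\otimes D_i$ into $a_1 D_1(a_i)\otimes D_i$) and the resulting word then acts on $a\otimes D$ by the induction hypothesis. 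I would verify this cancellation by carefully matching, for each $i\in\{2,\ldots,n\}$, the term where $D_1$ differentiates $a_i$ against the $i$-th summand of $a_1\otimes D_1\tr w$, using commutativity of $\A$ to reorder the scalar factors. Once all spurious terms cancel, only the single surviving term remains, completing the induction.
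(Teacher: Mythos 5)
Your proposal is correct and follows essentially the same route as the paper's own proof: induction on the word length, the recursion $av\tr u=a\tr(v\tr u)-(a\tr v)\tr u$ from Proposition \ref{prop: extension post-lie product}, the derivation property \eqref{eq:derivation} for the letter acting on a word, and the cancellation of the Leibniz terms where $D_1$ hits a scalar factor $a_i$ against the subtracted term. The cancellation you describe is exactly what the paper's two displayed computations exhibit before invoking equality (2).
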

\begin{proof} The equality is trivially verified if $n=1$ by the definition \eqref{eq: post-lie product derivation} of $\tr$ on $\A\otimes\D$. Suppose that it is verified for all 
words of length up to a fixed integer $n-1$. By equality $(2)$ of Proposition \ref{prop: extension post-lie product}, setting $u:=(a_2\otimes D_2)\cdots (a_n\otimes D_n)$, we have
\begin{multline*}
 \Big((a_1\otimes D_1)\cdots (a_n\otimes D_n)\Big)\tr (a \otimes D)\\
 = (a_1\otimes D_1)\tr \Big(u \tr (a \otimes D)\Big)- \Big((a_1\otimes D_1) \tr u\Big) \tr (a \otimes D).
\end{multline*}

For the first term in the latter expression, using the inductive hypothesis, one has:
\begin{align*}
    (a_1\otimes D_1) \tr \Big(u\tr (a \otimes D)\Big)=&(a_1\otimes D_1) \tr \Big(\big(a_2\cdots a_n D_2\circ \cdots \circ D_n (a)\big) \otimes D\Big) \\
    =&\left(\sum_{i=2}^n a_1 a_2 \cdots D_1(a_i)\cdots a_n D_2 \cdots  D_n (a)\right)\otimes D\\
    &+\Big(a_1\cdots a_n D_1 \ldots  D_n (a)\Big)\otimes D
\end{align*}
For the second term, using the \eqref{eq:derivation} and the inductive hypothesis, one has:
\begin{align*}
    \Big((a_1&\otimes D_1) \tr u\Big) \tr (a \otimes D)\\
    &=\left(\sum_{i=2}^n(a_2\otimes D_2)\cdots (a_1 D_1(a_i)\otimes D_i)\cdots (a_n\otimes D_n)\right)\tr (a \otimes D) \\
    &=\left(\sum_{i=2}^n a_1 a_2 \cdots D_1(a_i)\cdots a_n D_2 \circ \cdots \circ D_n (a)\right)\otimes D.
\end{align*}
The proof is concluded by subtracting the two previous expressions.
\end{proof}

We shall use in the following the analog of the notation \eqref{eq:notationaI} for $I\subset\{1,\ldots,n\}$ and $D_1,\ldots,D_n\in\Der(\A)$
\begin{equation*}
D_I:=D_{i_1} \circ \cdots \circ D_{i_p}, \qquad I=\{i_1,\ldots,i_p\}, \qquad i_1<\cdots<i_p,
\end{equation*}
and $D_\emptyset:={\rm Id}_\A$.

By Proposition \ref{prop: associative product} we can endow $\env_{[\cdot,\cdot]}(L)$ with an associative product $\trbar$ defined by \eqref{eq: post-Lie associative product}. In particular for all $a_1\otimes D_1,a_2\otimes D_2\in L$, we have
$$(a_1\otimes D_1) \star (a_2\otimes D_2)=(a_1\otimes D_1)(a_2\otimes D_2) + a_1 D_1(a_2)\otimes D_2.$$

More generally we have the following proposition:
\begin{prop}\label{prop: extension product}
The relation \eqref{eq: grafting word on word} completes the extension \eqref{eq:extensionlefttr} of $\tr$ on the right, yielding the explicit expression for the extension of the post-Lie product $\tr$ on $\env_{[\cdot,\cdot]}(L)$:
\begin{gather*}
\Big((a_1\otimes D_1)\cdots (a_n\otimes D_n)\Big) \tr \Big((\tilde a_1\otimes \tilde D_1)\cdots (\tilde a_m\otimes \tilde D_m)\Big) 
\\ = \sum_{I_1\sqcup\cdots \sqcup I_m=\{1,\ldots,n\}}  \prod_{j=1}^m \left( a_{I_j} D_{I_j} (\tilde a_j) \otimes \tilde D_j\right).
\end{gather*}

Analogously we obtain the explicit expression for the associative product $\trbar$ on $\env_{[\cdot,\cdot]}(L)$ as in \eqref{eq: post-Lie associative product}:
\begin{align}
 \nonumber &\Big((a_1\otimes D_1)\cdots (a_n\otimes D_n)\Big) \star \Big((\tilde a_1\otimes \tilde D_1)\cdots (\tilde a_m\otimes \tilde D_m)\Big)
\\ \nonumber & = \sum_{I\sqcup J=\{1,\ldots,n\}} \prod_{i\in I} (a_i \otimes D_i) \left[\left( \prod_{j\in J} (a_j \otimes D_j)\right)\tr \Big((\tilde a_1\otimes \tilde D_1)\cdots (\tilde a_m\otimes \tilde D_m)\Big)\right]
\\ & = \sum_{I\sqcup J_1\sqcup\cdots\sqcup J_m=\{1,\ldots,n\}} \prod_{i\in I} (a_i \otimes D_i) \prod_{j=1}^m \left( a_{J_j} D_{J_j} (\tilde a_j) \otimes \tilde D_j\right).
\label{eq:trbargeneral}
\end{align}
\end{prop}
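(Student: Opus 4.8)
The plan is to derive both identities by assembling results already established earlier in the excerpt, so that no genuinely new post-Lie manipulation is needed: the content of the proof is purely combinatorial bookkeeping organised around ordered set partitions, with commutativity of $\A$ doing the essential work at one step.

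For the first identity I would start from the word-on-word formula \eqref{eq: grafting word on word}, specialised to the case where the grafted letters are the $a_i\otimes D_i\in L$ and the target word is $(\tilde a_1\otimes \tilde D_1)\cdots (\tilde a_m\otimes \tilde D_m)$. This rewrites the left-hand side as
\[
\sum_{I_1\sqcup\cdots\sqcup I_m=\{1,\ldots,n\}}\ \prod_{j=1}^m\Big(a_{I_j}\tr(\tilde a_j\otimes\tilde D_j)\Big),
\]
where each $a_{I_j}$ denotes the sub-word formed by the letters indexed by $I_j$ in increasing order. Each factor is now a word grafted onto a single letter, so Proposition \ref{prop: grafting a word on a letter derivations}, i.e. the left extension \eqref{eq:extensionlefttr}, applies and gives $a_{I_j}\tr(\tilde a_j\otimes\tilde D_j)=a_{I_j}\,D_{I_j}(\tilde a_j)\otimes\tilde D_j$, where now $a_{I_j}$ is the \emph{product} in the commutative algebra $\A$ and $D_{I_j}$ is the ordered composition of derivations as in \eqref{eq:notationD}. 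Substituting yields the first claimed formula directly.

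For the second identity I would unfold the definition \eqref{eq: post-Lie associative product} of $\trbar$ on the word $u=(a_1\otimes D_1)\cdots(a_n\otimes D_n)$. The coproduct \eqref{eq:coshuffle} expresses $\Cop u$ as a sum over subsets, so that $u^{(1)}$ and $u^{(2)}$ range over the ordered sub-words indexed by a partition $I\sqcup J=\{1,\ldots,n\}$; this produces exactly the intermediate display $\sum_{I\sqcup J}\prod_{i\in I}(a_i\otimes D_i)\big[(\prod_{j\in J}(a_j\otimes D_j))\tr v\big]$ appearing in the statement. I then insert the first formula for the inner product $(\,\cdot\,)\tr v$, with its index set $\{1,\ldots,n\}$ replaced by $J$, and merge the two levels of summation: choosing $I\sqcup J=\{1,\ldots,n\}$ and then an ordered partition $J_1\sqcup\cdots\sqcup J_m=J$ is the same as choosing a single ordered partition $I\sqcup J_1\sqcup\cdots\sqcup J_m=\{1,\ldots,n\}$. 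This gives \eqref{eq:trbargeneral}.

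The only point requiring care, and the main (mild) obstacle, is the consistent reading of the notation across two registers. On the left, $a_{I_j}$ is a monomial in $\env_{[\cdot,\cdot]}(L)$ whose letter ordering matters a priori; on the right it collapses to a single element of $\A$, and here commutativity of $(\A,\cdot)$ is precisely what guarantees that the value is independent of the order of the indices in $I_j$. By contrast $D_{I_j}$ is a composition of derivations, which is \emph{not} symmetric, so one must verify that the increasing order of indices fixed in \eqref{eq: grafting word on word} matches the order $D_1\circ\cdots\circ D_n$ appearing in \eqref{eq:extensionlefttr}; this is exactly the convention adopted in \eqref{eq:notationD}, so the two formulas align with no reordering correction. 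Once this bookkeeping is observed, both displays follow by direct substitution.
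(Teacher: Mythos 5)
Your proposal is correct and follows exactly the route the paper intends: the paper states this proposition without a written proof, treating it as immediate from \eqref{eq: grafting word on word} applied letterwise, the left extension \eqref{eq:extensionlefttr} of Proposition \ref{prop: grafting a word on a letter derivations} for each factor $a_{I_j}\tr(\tilde a_j\otimes\tilde D_j)$, and the definition \eqref{eq: post-Lie associative product} of $\trbar$ unfolded via the coshuffle coproduct \eqref{eq:coshuffle}, with the two layers of summation merged into a single partition $I\sqcup J_1\sqcup\cdots\sqcup J_m=\{1,\ldots,n\}$. Your added care about the two readings of $a_{I_j}$ (sub-word in $\env_{[\cdot,\cdot]}(L)$ versus commutative product in $\A$) and the increasing-order convention \eqref{eq:notationD} for $D_{I_j}$ is exactly the right bookkeeping and is consistent with the paper's conventions.
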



\subsection{Representation of the enveloping algebras}\label{subsection: Representation of the enveloping algebras}

We still consider a sub-post-Lie algebra $L\subseteq \A\otimes\Der(\A)$, endowed with
the post-Lie structure given in Theorem \ref{theo: post-Lie structure from derivations}.
In this section we aim at giving algebra representations of $(\env_{[\cdot,\cdot]}(L),\trbar)$ and $(\env_{\llbracket \cdot,\cdot \rrbracket}(L),\mathsf{conc})$ on $\A$, that is to say algebra morphisms with values in the space of endomorphisms $\End(\A)$ endowed with the composition product $\circ$.

Consider the linear map $\rho:\A\otimes \Der(\A)\to \Der(\A)$ given by 
\begin{equation}\label{eq:rho}
\rho(a\otimes D)=a\cdot D,
\end{equation} 
where $a\cdot D$ denotes the element of $\End(\A)$ defined in \eqref{eq:aD}.
We have seen before, in Theorem \ref{theo: post-Lie structure from derivations} that $(\Der(\A),[\cdot,\cdot]_\circ)$ is a sub-Lie algebra of $(\End(\A),[\cdot,\cdot]_\circ)$, while by Proposition \ref{pr:ebrahimi2014lie} $(L,\llbracket\cdot,\cdot\rrbracket)$ is a
Lie algebra since $L\subseteq \A\otimes\Der(\A)$ is post-Lie. The relation between these two Lie algebras is explained by the following:
\begin{lemma}
The map $\rho:(L,\llbracket\cdot,\cdot\rrbracket)\to (\Der(\A),[\cdot,\cdot]_\circ)$ is a morphism of Lie algebras.
\end{lemma}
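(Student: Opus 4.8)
The plan is to verify the Lie algebra morphism property $\rho(\llbracket x,y\rrbracket)=[\rho(x),\rho(y)]_\circ$ by a direct computation. Since $\rho$ in \eqref{eq:rho}, the composition bracket $\llbracket\cdot,\cdot\rrbracket$ and the commutator $[\cdot,\cdot]_\circ$ are all bilinear, it suffices to check the identity on simple tensors $x=a_1\otimes D_1$ and $y=a_2\otimes D_2$, with $a_1,a_2\in\A$ and $D_1,D_2\in\Der(\A)$; the case of general $x,y\in L$ then follows by bilinearity (and by restriction, since $L\subseteq\A\otimes\Der(\A)$ is itself post-Lie).

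First I would compute the right-hand side. By \eqref{eq:aD} we have $\rho(x)=a_1\cdot D_1$ and $\rho(y)=a_2\cdot D_2$, and the key step is to expand the composition $(a_1\cdot D_1)\circ(a_2\cdot D_2)$ acting on an arbitrary $b\in\A$: applying $a_2\cdot D_2$ first and then $a_1\cdot D_1$, and using the Leibniz rule for $D_1$ on the product $a_2\,D_2(b)$, one obtains
\[
(a_1\cdot D_1)\circ(a_2\cdot D_2)=a_1D_1(a_2)\cdot D_2+a_1a_2\cdot(D_1\circ D_2).
\]
Subtracting the analogous expression with the two factors exchanged yields
\[
[\rho(x),\rho(y)]_\circ=a_1D_1(a_2)\cdot D_2-a_2D_2(a_1)\cdot D_1+a_1a_2\cdot[D_1,D_2]_\circ.
\]

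Then I would compute the left-hand side from the definitions. By \eqref{eq: post-lie product derivation}, $x\tr y=a_1D_1(a_2)\otimes D_2$ and $y\tr x=a_2D_2(a_1)\otimes D_1$, while by \eqref{eq: post-lie bracket derivation}, $[x,y]=a_1a_2\otimes[D_1,D_2]_\circ$. Hence, using \eqref{eq: composition Lie bracket},
\[
\llbracket x,y\rrbracket=a_1D_1(a_2)\otimes D_2-a_2D_2(a_1)\otimes D_1+a_1a_2\otimes[D_1,D_2]_\circ,
\]
and applying $\rho$ gives exactly the expression found for $[\rho(x),\rho(y)]_\circ$. Comparing the two displays term by term concludes the proof.

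The computation is essentially routine; the only point requiring care -- and the conceptual heart of the statement -- is the Leibniz expansion of $(a_1\cdot D_1)\circ(a_2\cdot D_2)$. It shows that the composition of two multiplied derivations splits into a \emph{grafting} part $a_1D_1(a_2)\cdot D_2$, which is the image under $\rho$ of the pre-Lie product $\tr$, and a \emph{pure composition} part $a_1a_2\cdot(D_1\circ D_2)$, which feeds into the bracket $[\cdot,\cdot]_\circ$. It is precisely this decomposition that mirrors, on the side of $\End(\A)$, the three-term structure $x\tr y-y\tr x+[x,y]$ of the composition Lie bracket, so that $\rho$ intertwines the two brackets.
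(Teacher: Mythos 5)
Your proof is correct and follows essentially the same route as the paper: compute both brackets on simple tensors via the Leibniz rule and compare, with bilinearity handling the general case. In fact your signs are the right ones --- the paper's two displayed formulas both carry a spurious minus sign in front of the terms $a_1a_2\otimes[D_1,D_2]_\circ$ and $a_1a_2\,[D_1,D_2]_\circ$ (a typo that cancels between the two sides, since the composition bracket \eqref{eq: composition Lie bracket} has $+[a,b]$), so your version is the corrected computation.
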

\begin{proof}
The composition Lie bracket defined by equality \eqref{eq: composition Lie bracket} is equal on $\A\otimes\Der(\A)$ to:
\begin{equation*}
    \llbracket a_1\otimes D_1,a_2\otimes D_2\rrbracket=a_1D_1(a_2)\otimes D_2 - a_2D_2(a_1)\otimes D_1+a_1a_2\otimes[D_1,D_2]_\circ.
\end{equation*}
On the other hand for all $a_1,a_2\in \A$ and $D_1,D_2\in \Der(\A)$, we have
\begin{equation*}
    [a_1\cdot D_1,a_2\cdot D_2]_\circ=a_1D_1(a_2)D_2 - a_2D_2(a_1)D_1+a_1a_2[D_1,D_2]_\circ.
\end{equation*}
The proof is complete.
\end{proof}

\begin{remark}
Given a sub-Lie algebra $\D$ of $(Der(\A),[\cdot,\cdot]_\circ)$, we can endow $\A\otimes\D$ with a structure of $\A$-module with the action of $\A$ on $A\otimes\D$ being given for all $a,b\in\A$ and $ D\in \D$ by:
$$a\bullet(b\otimes D) := (a b)\otimes D$$ 
It is easy to show the following Leibniz rule for all $u,v\in \A\otimes\D$ and $a\in\A$:
$$\llbracket u, a\bullet v \rrbracket = (\rho(u)[a])\bullet v + a\bullet \llbracket u, v\rrbracket.$$
This turns $(\A\otimes\D,\tr,[\cdot,\cdot],\rho)$ into a $(\A,\bullet)$-post-Lie--Rinehard algebra (it seems that this is
actually the first example of a post-Lie--Rinehard algebra which is not pre-Lie). 
For more details on pre-Lie algebras in the context of aromatic B-series, the reader can refer to \cite{floystad2021universal}.
\end{remark}

Note that $\rho$ is a representation of $(L,\llbracket\cdot,\cdot\rrbracket)$ on $\A$. 
By the universal property of $\env_{\llbracket \cdot,\cdot \rrbracket}(L)$, 
it can be extended uniquely to a morphism $\rhohat$ of associative algebras
\begin{equation}\label{eq:rhohat}
\writefun{\rhohat}{ \left(\env_{\llbracket \cdot,\cdot \rrbracket}(L),\mathsf{conc}\right)}{\left(\End(\A),\circ\right)}{(a_1\otimes D_1)\cdots(a_n\otimes D_n)}{(a_1\cdot D_1)\circ\cdots\circ(a_n\cdot D_n)}
\end{equation}
Then $\rhohat$ is a representation of $\left(\env_{\llbracket \cdot,\cdot \rrbracket}(L),\mathsf{conc}\right)$ on $\A$ which extends $\rho:(L,\llbracket\cdot,\cdot\rrbracket)\to (\Der(\A),[\cdot,\cdot]_\circ)$
given by \eqref{eq:rho}.

However we are interested rather in an extension of $\rho$ to a morphism of algebras 
$\rhobar:(\env_{[\cdot,\cdot]}(L),\trbar)\to(\End(\A),\circ)$.
Theorem \ref{theo: isomorphism between enveloping algebras} states that the linear map $\Phi:\env_{\llbracket \cdot,\cdot \rrbracket}(L)\rightarrow \env_{[\cdot,\cdot]}(L)$ defined for $a_1,\ldots,a_n \in L$ by:
\begin{equation*}
    \Phi(a_1 \cdots a_n):=a_1\trbar \ldots \trbar  a_n
\end{equation*}
is an algebra isomorphism between $\left(\env_{\llbracket \cdot,\cdot \rrbracket}(L),\mathsf{conc}\right)$ and $\left(\env_{[\cdot,\cdot]}(L),\trbar\right)$.
This isomorphism allows to give an extension of the representation $\rho$  to a representation:
\begin{equation}\label{eq: representation morphism}
\rhobar:\left(\env_{[\cdot,\cdot]}(L),\trbar\right)\to(\End(\A),\circ), \qquad \rhobar=\rhohat\circ\Phi^{-1},
\end{equation}
namely we have the following commutative diagram of associative algebras in which the dashed arrows represent morphisms of Lie algebras and plain arrows represent morphisms of associative algebras:
\[
\begin{tikzcd}[scale cd=1,sep=large]
    & (L,\llbracket\cdot,\cdot\rrbracket) \arrow[dl,hook',dashed]\arrow[dd,dashed] \arrow[dr,hook,dashed] \\
    \left(\env_{\llbracket \cdot,\cdot \rrbracket}(L),\mathsf{conc}\right) \arrow[rr,"\Phi~\sim", pos=0.4, crossing over] \arrow[swap]{dr}{\rhohat} && \left(\env_{[\cdot,\cdot]}(L),\trbar\right) \arrow{dl}{\rhobar} \\
     & (\End(\A),\circ)
  \end{tikzcd}
\]

This representation can be made more explicit:
\begin{theorem}\label{theo: canonical representation} The linear map $\rhobar$ defined in \eqref{eq: representation morphism}
admits the following explicit expression
\begin{equation}\label{eq: representation algebra derivation}
    \rhobar\Big((a_1\otimes D_1)\cdots (a_n\otimes D_n)\Big)= a_1\cdots a_n\cdot (D_1\circ \ldots \circ D_n).
\end{equation}
By the algebra morphism property for $\rhobar:\left(\env_{[\cdot,\cdot]}(L),\trbar\right)\to(\End(\A),\circ)$, we also have
\begin{equation}\label{eq: representation algebra derivation2}
\bar\rho\Big((a_1\otimes D_1)\trbar \cdots\, \trbar\,(a_n\otimes D_n)\Big)=(a_1\cdot D_1)\circ\cdots\circ(a_n\cdot D_n).
\end{equation}

\end{theorem}

\begin{proof}
Setting $\rhobar=\rhohat\circ\Phi^{-1}$ as in \eqref{eq: representation morphism}, we obtain automatically that $\rhobar$ is
a morphism of algebras and therefore that it is the unique extension of $\rho:L\rightarrow \Der(\A)$ to
a morphism of algebras $\left(\env_{[\cdot,\cdot]}(L),\trbar\right)\to(\End(\A),\circ)$. 

We want now to show that $\rhobar$ satisfies \eqref{eq: representation algebra derivation}. We proceed by induction on $n$; for
$n=1$ the claim follows from the definition \eqref{eq:rho} of $\rho$. Let us suppose now that  \eqref{eq: representation algebra derivation}
is proved for $n\geq 1$; let us set for ease of notation $u_i:=a_i\otimes D_i$, $i=0,\ldots,n$; then by \eqref{eq:atrbbb}, we have
\[
u_0\trbar (u_1\cdots u_n) = \sum_{i=1}^n u_1 \cdots (u_0\tr u_i) \cdots u_n+ u_0 \cdots u_n.
\]
By the definition \eqref{eq: post-lie product derivation} of $\tr$ we have $u_0\tr u_i=a_0D_0(a_i)\otimes D_i$. 
By applying $\rhobar$ we obtain by the induction hypothesis
\[
\rhobar(u_0\trbar (u_1\cdots u_n)) = \sum_{i=1}^n a_0a_1\cdots D_0(a_i) \cdots a_n \cdot (D_1\circ \ldots \circ D_n)
+ \rhobar(u_0 \cdots u_n).
\]
On the other hand, by the morphism property and the induction hypothesis
\[
\begin{split}
\rhobar(u_0\trbar (u_1\cdots u_n)) =&
\rhobar(u_0)\circ \rhobar(u_1\cdots u_n)
\\ =& (a_0 \cdot D_0)\circ \left(a_1\cdots a_n\cdot (D_1\circ \ldots \circ D_n)\right)
\\ =& \sum_{i=1}^n a_0a_1\cdots D_0(a_i) \cdots a_n \cdot (D_1\circ \ldots \circ D_n)\\
&+ a_0\cdots a_n\cdot(D_0\circ\cdots\circ D_n).
\end{split}
\]
Therefore we obtain as required
\[
\rhobar(u_0 \cdots u_n)=\rhobar\left((a_0\otimes D_0)\cdots (a_n\otimes D_n)\right)= a_0\cdots a_n\cdot(D_0\circ\cdots\circ D_n)
\]
and the proof is complete.
\end{proof}

\begin{remark}
    One should remark that the representations $\rho$, $\rhohat$ and $\rhobar$ are not faithful, since for example for $a,b\in\A,~ a\neq b$ and $D\in \Der(\A)$, we have
    $$\rho\left(a\otimes (b\cdot D)\right)=\rho \left(b\otimes (a\cdot D)\right).$$
\end{remark}

\begin{remark}\label{rem: left extension post-lie product = rhobar}
By \eqref{eq:extensionlefttr} the left extension of $\tr$ on $\env_{[\cdot,\cdot]}(L)$ can be expressed in terms of the representation $\rhobar$:
    \begin{equation*}
    u\tr (a \otimes D)=\rhobar(u)(a)\otimes D, \qquad u\in\env_{[\cdot,\cdot]}(L).
\end{equation*}
Moreover by \eqref{eq:trbargeneral} for $u\in\env_{[\cdot,\cdot]}(L)$, we have
\begin{equation}\label{eq:trbargeneral2}
u\trbar  (a \otimes D) =\sum_{(u)} u^{(1)} \left[ \rhobar\left( u^{(2)}\right)(a) \otimes D\right]
\end{equation}
and for all $\tilde u = (\tilde a_1\otimes \tilde D_1)\cdots (\tilde a_m\otimes \tilde D_m) \in\env_{[\cdot,\cdot]}(L)$, we have
\begin{equation*}
u\trbar  \tilde u  =\sum_{(u)} u^{(1)} \prod_{i=1}^m \left[ \rhobar\left( u^{(i+1)}\right)(\tilde a_i) \otimes \tilde D_i\right],
\end{equation*}
with the extension of Sweedler's notation \eqref{eq:Sweedler}
\begin{equation*}
\Cop^{(m)} u = \sum_{(u)} u^{(1)}\otimes\ldots\otimes u^{(m+1)},
\end{equation*}
\[
\text{where}\qquad 
\Cop^{(1)}:=\Cop, \qquad \Cop^{(m+1)}:=({\rm id}\otimes\Cop)\Cop^{(m)}.
\]
\end{remark}

\begin{prop}\label{prop: multiplicativity property representation}
    Given $b_1,b_2\in \A$ and $u=(a_1\otimes D_1)\cdots (a_n\otimes D_n)\in \env_{[\cdot,\cdot]}(\A\otimes\Der(\A))$, the Leibniz rule of $D_1,\ldots,D_n$ on $\A$ implies     \begin{align*}
        \rhobar(u)(b_1b_2)&=\sum_{I\sqcup J=\{1,\ldots,n\}} a_I D_I(b_1) a_J D_J(b_2)\\
        &= \sum_{I\sqcup J=\{1,\ldots,n\}}\rhobar\left(\prod_{i\in I}(a_i\otimes D_i)\right)(b_1)\, \rhobar\left(\prod_{j\in J}(a_j\otimes D_j)\right)(b_2)\\
        &=\sum_{(u)} \rhobar(u^{(1)})(b_1) \,  \rhobar(u^{(2)})(b_2)\\
        &= (\rhobar\otimes \rhobar)(\Cop u)(b_1\otimes b_2).
    \end{align*}
\end{prop}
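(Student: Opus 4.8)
The plan is to prove the identity $\rhobar(u)(b_1 b_2) = (\rhobar\otimes\rhobar)(\Cop u)(b_1\otimes b_2)$ for $u=(a_1\otimes D_1)\cdots(a_n\otimes D_n)$ by first establishing the top line, which is essentially the iterated Leibniz rule for the composition $D_1\circ\cdots\circ D_n$, and then recognising that the resulting sum over disjoint decompositions $I\sqcup J=\{1,\ldots,n\}$ is precisely the coshuffle coproduct $\Cop$ evaluated through $\rhobar\otimes\rhobar$. Concretely, I would begin from the explicit formula \eqref{eq: representation algebra derivation}, namely $\rhobar(u)=a_1\cdots a_n\cdot(D_1\circ\cdots\circ D_n)$, so that $\rhobar(u)(b_1b_2)=a_1\cdots a_n\,(D_1\circ\cdots\circ D_n)(b_1b_2)$, and reduce the claim to the purely analytic statement that
\[
(D_1\circ\cdots\circ D_n)(b_1b_2)=\sum_{I\sqcup J=\{1,\ldots,n\}} D_I(b_1)\,D_J(b_2),
\]
with the notation $D_I$ from \eqref{eq:notationD}.

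Next I would prove this expansion of the composed derivation by induction on $n$. For $n=1$ it is exactly the Leibniz rule $D_1(b_1b_2)=D_1(b_1)b_2+b_1D_1(b_2)$. For the inductive step, I would write $D_1\circ(D_2\circ\cdots\circ D_n)$ and apply the inductive hypothesis to $(D_2\circ\cdots\circ D_n)(b_1b_2)$, obtaining a sum over decompositions of $\{2,\ldots,n\}$; then applying the single derivation $D_1$ to each product $D_I(b_1)D_J(b_2)$ via the Leibniz rule splits each term into two, according to whether the index $1$ is assigned to the $b_1$-block or the $b_2$-block. Because the derivations are ordered by the composition and the labels $D_i$ come with the $a_i$-factors carried along commutatively in $\A$, this bookkeeping reproduces exactly the sum over all decompositions of $\{1,\ldots,n\}$, giving the first displayed equality of the Proposition after multiplying back by $a_1\cdots a_n$ and regrouping the commuting scalars $a_I,a_J$.

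The second equality is then immediate from \eqref{eq: representation algebra derivation} applied to the sub-words $\prod_{i\in I}(a_i\otimes D_i)$ and $\prod_{j\in J}(a_j\otimes D_j)$, since $\rhobar$ of such a sub-word is $a_I\cdot D_I$ and $a_J\cdot D_J$ respectively; here I use crucially that $I$ and $J$ inherit the ambient order on $\{1,\ldots,n\}$, matching the convention in \eqref{eq:notationaI} and \eqref{eq:notationD}. The third equality is the recognition step: by the explicit form of the coshuffle coproduct on words \eqref{eq:coshuffle}--\eqref{eq:notationaI}, one has $\Cop u=\sum_{I\sqcup J=\{1,\ldots,n\}} u_I\otimes u_J$ with $u_I=\prod_{i\in I}(a_i\otimes D_i)$, so in Sweedler's notation \eqref{eq:Sweedler} the sum over decompositions is exactly $\sum_{(u)}\rhobar(u^{(1)})(b_1)\,\rhobar(u^{(2)})(b_2)$, and the final line is merely this rewritten as $(\rhobar\otimes\rhobar)(\Cop u)(b_1\otimes b_2)$.

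The only genuinely delicate point is the combinatorial matching in the inductive step: one must check that splitting off $D_1$ and distributing it over the two blocks via Leibniz corresponds bijectively to inserting the index $1$ into either $I$ or $J$, and that the induced orderings on the blocks remain the correct increasing orderings so that $D_I$ really means $D_{i_1}\circ\cdots\circ D_{i_p}$ with $i_1<\cdots<i_p$. Once this bookkeeping is set up carefully, the rest is a routine unwinding of definitions; I expect no substantive obstacle beyond keeping the index conventions consistent with \eqref{eq:notationaI} and \eqref{eq:notationD}.
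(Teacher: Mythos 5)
Your proof is correct and takes essentially the same route the paper intends: the paper states this Proposition without a separate proof, treating the displayed chain as an immediate consequence of the explicit formula \eqref{eq: representation algebra derivation}, the iterated Leibniz rule, and the coshuffle formula \eqref{eq:coshuffle}, which is exactly what you verify. Your induction on $n$, together with the observation that inserting the minimal index $1$ into either block preserves the increasing ordering required by \eqref{eq:notationD}, correctly supplies the one detail the paper leaves implicit.
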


\subsection{The dual coalgebra structure}

For the sake of generality, let us once again consider a commutative and associative algebra $(\mathcal{A},\cdot)$ equipped with a basis 
$\B_\A$, which allows to define a pairing given by the bilinear extension of
\begin{equation}\label{eq:dualityA}
\B_\A\times\B_\A\ni(a,b)\mapsto \la a,b\ra :=\ind_{(a=b)}.
\end{equation}
Let as before $(L,\tr,[\cdot,\cdot])$ be a sub-post-Lie algebra of $\A\otimes\Der(\A)$ for the canonical post-Lie 
structure defined by Theorem \ref{theo: post-Lie structure from derivations}.

We fix a basis $\B_L$ of $L$ composed of elements of type $a\otimes D$ where $a\in \B_\A$ and $D\in\Der(\A)$. We know from Section \ref{subsec: Env. alg.} that given a total order $\leq$ on $\B_L$, the Poincaré-Birkhoff-Witt Theorem \ref{theo: PBW} gives a vectorial basis $\B_{\envU(L)}$ of $\env_{[\cdot,\cdot]}(L)$ composed by monomials
$$u=(a_1\otimes D_1)\cdots (a_k\otimes D_k)$$
where the factors $a_i\otimes D_i$ belong to $\B_L$, and are organized in increasing order.

In order to prove that $\Delta_\trbar:\envU(L)\to \envU(L)\otimes \envU(L)$ given by \eqref{eq: Delta_trbar} is well defined, we need to make the following crucial assumptions on $(\A,L)$.

\begin{assumption}\label{assump: finiteness Lie bracket AtensDer(A)}
The set \[\left\{(a\otimes D,b\otimes D')\in (\B_L)^2:\,\la [a\otimes D,b\otimes D'],c\otimes D''\ra\neq 0\right\}\] is finite for all $c\otimes D''\in\B_L$, where the pairing $\la\cdot,\cdot\ra$ on $\A$ is defined in \eqref{eq:dualityA}.
\end{assumption}
Assumption \ref{assump: finiteness Lie bracket AtensDer(A)} is simply a rewriting of Assumption \ref{assump: finiteness Lie bracket}. 

\begin{assumption}\label{assump: finiteness rhobar}
The set \[\Big\{(a\otimes D,b)\in \B_L\times\B_\A:\,\la \rho(a\otimes D)(b),c\ra\neq 0\Big\}\] is finite for all $c\in\B_\A$, where the pairing $\la\cdot,\cdot\ra$ on $\A$ is defined in \eqref{eq:dualityA}.
\end{assumption}

\begin{lemma}\label{lem: implication assumptions}
    Assumption \ref{assump: finiteness rhobar} on $((L,\B_L),(\A,\B_\A))$ implies Assumption \ref{assump: finiteness tr} on $(L,\B_L)$.
\end{lemma}
\begin{proof}
We distinguish here the pairing $\la\cdot,\cdot\ra_{\envU(L)}$ defined on $\envU(L)$ by \eqref{eq:dualityL} and $\la\cdot,\cdot\ra_{\A}$ defined on $\A$ by \eqref{eq:dualityA}. The basis $\B_L$ being composed by elements of type $a\otimes D$ where $a\in \B_\A$ and $D\in\Der(\A)$, the implication follows easily from the equalities
\begin{align*}
    \la (a\otimes D)\tr(b\otimes D') , c\otimes D''\ra_{\B_L}&=\la aD(b)\otimes D',c\otimes D''\ra_{\B_L}\\
    &=\la aD(b),c\ra_{\B_\A}\ind_{(D'=D'')}\\
    &=\la \rho(a\otimes D)(b),c\ra_{\B_\A} \ind_{(D'=D'')}
\end{align*}
for all $a\otimes D,b\otimes D' , c\otimes D''\in \B_L$.
\end{proof}

\begin{prop}\label{prop: finiteness rhobar} If $(L,\A)$ satisfies Assumption \ref{assump: finiteness rhobar}, then the set
    \[\left\{(u,b)\in\B_{\envU(L)}\times \B_\A:\,\la\rhobar(u)(b),c\ra\ne 0\right\}\]
    is finite for all $c\in\B_\A$, where the pairing $\la\cdot,\cdot\ra$ on $\A$ is defined in \eqref{eq:dualityA}.
\end{prop}
\begin{proof}
    We can prove this Proposition by following the exact same steps as in the proof of Lemma \ref{lem: finiteness tr}, using the equality:
    \begin{equation}\label{eq: recursive equality rhobar}
        \begin{aligned}
            &\rhobar\big((a\otimes D)v\big)(b)\\
        &=a a_1\cdots a_n D\circ D_1\circ \cdots \circ D_n(b)\\
        &=(aD)\circ\left(a_1\cdots a_n D_1\circ \cdots \circ D_n\right)(b)- aD(a_1\cdots a_n)D_1\circ \cdots \circ D_n(b)\\
        &=\rhobar(a\otimes D)\circ\rhobar(v)(b)-\rhobar\big((a\otimes D)\tr v\big)(b).
        \end{aligned}
    \end{equation}
First, for every monomial $u\in \Bbar_{\envU(L)}$ denoting its length $\ell(u)$, we prove by induction on $n\geq 1$ the assertion:
\begin{align*}
    \P(n): "&~ \text{for every $c\in \B_L$, the following set is finite:}\\
    &\left\{(u, b)\in\Bbar_{\envU(L)} \times\B_\A,~\ell(u)\leq n: \la\rhobar(u)(b),c\ra\ne 0\right\}."
\end{align*}
 
If $n=1$, then $\P(1)$ is the Assumption \eqref{assump: finiteness rhobar}.   
Suppose that $\P(n)$ is true for a certain $n\geq 1$. For $a\otimes D\in\B_L$, $v= (a_1\otimes D_1)\ldots (a_n\otimes D_n)\in \Bbar_{\envU(L)}$, and $b,c\in\B_\A$ by equality \eqref{eq: recursive equality rhobar}:
\[
\la\rhobar\big((a\otimes D)v\big)(b),c\ra=\la\rhobar(a\otimes D)\circ\rhobar(v)(b),c\ra-\la\rhobar\big((a\otimes D)\tr v\big)(b),c\ra.
\]
Therefore:
\begin{multline*}
    \la\rhobar\big((a\otimes D)v\big)(b),c\ra\neq 0\\
    \Rightarrow~\la\rhobar(a\otimes D)\circ\rhobar(v)(b),c\ra\neq 0\quad \vee \quad \la\rhobar\big((a\otimes D)\tr v\big)(b),c\ra\neq 0.
\end{multline*}

For the first term, by definition \eqref{eq:dualityA} of the pairing $\la\cdot,\cdot\ra$, one has that: 
\[\rhobar(v)(b)=\sum_{d\in\B_A}\la \rhobar(v)(b),d \ra d.\]
Then one can write by linearity:
\[\la\rhobar(a\otimes D)\circ\rhobar(v)(b),c\ra=\sum_{d\in\B_A}\la\rhobar(v)(b),d \ra \la \rhobar(a\otimes D)(d),c\ra.\]
By $\P(1)$, there exist finitely many couples $(a\otimes D,d)\in \B_L\times \B_\A$ such that $\la \rhobar(a\otimes D)(d),c\ra$ and for every such couple, by $\P(n)$, there exist finitely many couples $(v,b)\in\Bbar_{\envU(L)} \times\B_\A,~\ell(v)=n$ such that $\la\rhobar(v)(b),d \ra\neq 0$.
We deduce that there exist finitely many $(a\otimes D, v, b)\in\B_L\times \B_{\envU(L)}\times\B_\A$, $\ell(v)=n$, such that:
\[\la\rhobar(a\otimes D)\circ\rhobar(v)(b),c\ra\neq 0.\]

For the second term, given $v=(a_1\otimes D_1)\ldots (a_n\otimes D_n)$, we have:
\begin{align*}
    (a\otimes D)\tr v&=\sum_{i=1}^n (a_1\otimes D_1)\ldots (a D(a_i)\otimes D_i)\ldots (a_n\otimes D_n)\\
    &=\sum_{i=1}^n \sum_{d\in \B_\A}\la aD(a_i),d\ra (a_1\otimes D_1)\ldots (d\otimes D_i)\ldots (a_n\otimes D_n)
\end{align*}
so that
\begin{align*}
    &\la\rhobar\big((a\otimes D)\tr v\big)(b),c\ra\\
    &=\sum_{i=1}^n \sum_{d\in \B_\A}\la aD(a_i),d\ra \la\rhobar\Big((a_1\otimes D_1)\ldots (d\otimes D_i)\ldots (a_n\otimes D_n)\Big)(b),c\ra.
\end{align*}

By $\P(n)$, for every $i=1,\ldots,n$, the set of all 
$$u=(a_1\otimes D_1) \cdots (d\otimes D_i) \cdots (a_n\otimes D_n)\in\Bbar_{\envU(L)}$$
such that $\la\rhobar(u)(b),c\ra\ne 0$ is finite,
and therefore $\P(n+1)$ follows.
\end{proof}
   
Thus we arrive at the following statement:
\begin{prop}\label{prop:trbar}
If Assumptions \ref{assump: finiteness Lie bracket AtensDer(A)} and \ref{assump: finiteness rhobar} are satisfied, then on $\envU(L)$ the coalgebra structure $(\Delta_\trbar,\varepsilon)$ dual to the algebra structure $(\trbar,\mathds 1)$ defined in Proposition \ref{prop: dual Hopf algebra} with respect to the pairing \eqref{eq:dualityL} is given by:
\begin{align}
\nonumber
\Delta_\trbar\ind&=\ind\bm{\otimes} \ind, \\
\nonumber
\Delta_\trbar(u\ast v)&=\Delta_\trbar(u)\ast\Delta_\trbar(v),
\\ \Delta_\trbar(a\otimes D)&=(a\otimes D)\bm{\otimes} \mathds 1 + \sum_{u\in\B_{\envU(L)}} Tu\bm{\otimes} (\Theta(u\otimes a)\otimes D), \label{eq:deltatrbar}
\end{align}
where we define the map $\Theta:\envU(L)\otimes\A\to\A$,
\begin{equation}\label{eq:Theta}
\Theta(u\otimes a):= \sum_{b\in\B_\A} \ind_{(b\otimes D \in \B_L)}\la \rhobar(u)(b),a\ra b .
\end{equation}
\end{prop}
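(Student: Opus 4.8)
The plan is to read off the three identities from the defining formula \eqref{eq: Delta_trbar} for $\Delta_\trbar$, treating the first two as restatements of the Hopf algebra axioms already secured in Proposition \ref{prop: dual Hopf algebra} and concentrating the real work on the last one. Recall from \eqref{coro: commutative product env alg} that $\ast$ is the polynomial product on the monomials $\prod_i x_i^{\alpha_i}$ of $\overline{\B}_{\envU(L)}$, so $(\envU(L),\ast,\ind)$ is the free commutative algebra generated by $\B_L$; hence any $\ast$-algebra morphism out of $\envU(L)$ is determined by its values on $L$. Since $(\envU(L),\ast,\Delta_\trbar,\ind,\varepsilon)$ is the Hopf algebra of Proposition \ref{prop: dual Hopf algebra}, the unit is grouplike, which is the first identity $\Delta_\trbar\ind=\ind\otimesbold\ind$, and $\Delta_\trbar$ is an algebra morphism for $\ast$, which is the second identity $\Delta_\trbar(u\ast v)=\Delta_\trbar(u)\ast\Delta_\trbar(v)$; alternatively both follow directly by dualising, through the pairing \eqref{eq:dualityL}, the facts that $\trbar$ is unital with unit $\ind$ and that $\Cop$ is multiplicative for $\trbar$. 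Consequently the whole statement reduces to computing $\Delta_\trbar$ on a generator $a\otimes D\in L$.

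First I would verify the grouplike identity by direct inspection of \eqref{eq: Delta_trbar}: the coefficient $\la u_1\trbar u_2,\ind\ra$ is the $\ind$-component of $u_1\trbar u_2$, and since by \eqref{eq:trbargeneral} the product $u_1\trbar u_2$ contains no constant term unless $u_2=\ind$ (in which case $u_1\trbar\ind=u_1$), this coefficient is $\ind_{(u_1=\ind)}\ind_{(u_2=\ind)}$, giving $\Delta_\trbar\ind=\ind\otimesbold\ind$.

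For the main identity I would expand $\Delta_\trbar(a\otimes D)=\sum_{u_1,u_2\in\B_{\envU(L)}}\la u_1\trbar u_2,a\otimes D\ra\,(Tu_1)\otimesbold(Tu_2)$ and split the sum according to the length of $u_2$. Using \eqref{eq:trbargeneral}, the right factor $u_2$ survives intact in every term of $u_1\trbar u_2$, so $u_1\trbar u_2$ has word-length at least that of $u_2$ whenever $u_2\neq\ind$; pairing against the single letter $a\otimes D$ therefore forces either $u_2=\ind$ or $u_2$ of length one, and kills all longer $u_2$. The case $u_2=\ind$ gives $u_1\trbar\ind=u_1$ and hence the term $(a\otimes D)\otimesbold\mathds 1$. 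For $u_2=b\otimes D'\in\B_L$ of length one, only the grouplike part $\mathds 1\otimes u_1$ of $\Cop u_1$ can produce a single letter, so $\la u_1\trbar(b\otimes D'),a\otimes D\ra=\la u_1\tr(b\otimes D'),a\otimes D\ra$; by Remark \ref{rem: left extension post-lie product = rhobar} this equals $\la\rhobar(u_1)(b)\otimes D',a\otimes D\ra$, which by the pairing \eqref{eq:dualityL} on single letters of $\B_L\subset\B_\A\times\Der(\A)$ equals $\la\rhobar(u_1)(b),a\ra$ together with the constraint $D'=D$. Summing these contributions over $b$ assembles the function $\Theta$ of \eqref{eq:Theta}, yielding the stated second term $\sum_{u\in\B_{\envU(L)}}Tu\otimesbold(\Theta(u\otimes a)\otimes D)$.

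Throughout, the role of Assumption \ref{assump: finiteness rhobar} is to guarantee that for fixed $a\otimes D$ only finitely many pairs $(u_1,b)$ contribute, so that $\Theta(u\otimes a)$ and the sum defining $\Delta_\trbar(a\otimes D)$ are genuinely well-defined elements of $\A$ and $\envU(L)\otimesbold\envU(L)$ respectively. The step I expect to be the most delicate is the bookkeeping around the pairing of single letters: one must check that $\rhobar(u_1)(b)\otimes D'$, which a priori lives in $\A\otimes\Der(\A)$, actually lies in $L$ — which it does by the identity $u\tr(b\otimes D')=\rhobar(u)(b)\otimes D'$ of Remark \ref{rem: left extension post-lie product = rhobar}, since $\tr$ preserves $\envU(L)$ — so that extracting its $(a\otimes D)$-coefficient legitimately reduces to the $\A$-pairing $\la\rhobar(u_1)(b),a\ra$ and lets the summation over $b$ range over all of $\B_\A$ exactly as in the definition of $\Theta$.
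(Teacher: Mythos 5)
Your proposal follows the paper's own proof essentially step for step: well-definedness and the first two identities are delegated to Proposition \ref{prop: dual Hopf algebra}, and the computation on a generator rests on the claim that $\la u_1\,\trbar\,u_2, a\otimes D\ra$ vanishes unless $u_2=\ind$ or $u_2\in\B_L$, in which case it reduces via Remark \ref{rem: left extension post-lie product = rhobar} to $\la u_1\tr u_2,a\otimes D\ra=\la \rhobar(u_1)(b),a\ra$; your explicit treatment of the case $u_2=\ind$, of why only the component $\ind\otimes u_1$ of $\Cop u_1$ survives, and of why $\rhobar(u_1)(b)\otimes D'$ lies in $L$, is more detailed bookkeeping than the paper provides, but the route is identical.

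There is, however, one step that deserves a warning — and it is a weakness you share with the paper's own proof rather than a deviation from it. Your length-counting argument (``each term of $u_1\,\trbar\,u_2$ has word-length at least that of $u_2$'', so pairing with a single letter kills all $u_2$ of length $\ge 2$ and, within $u_1\,\trbar\,u_2$, everything except the $\tr$-part) treats $\envU(L)$ as if it were \emph{graded} by word length, while it is only \emph{filtered}: when the bracket $[\cdot,\cdot]$ is nontrivial, rewriting a concatenation product in the PBW basis $\B_{\envU(L)}$ produces components of strictly smaller length. Concretely, in the setting of Section \ref{sec:LOT} take $u_1=\z^\beta\otimes D^{(\e_1)}$ and $u_2=\1\otimes\partial_1$: then $u_1\tr u_2=0$, but by \eqref{eq:atrb} and the commutation relation \eqref{eq:zgaDn},
\[
u_1\,\trbar\,u_2 \;=\; u_1u_2 \;=\; (\1\otimes\partial_1)(\z^\beta\otimes D^{(\e_1)})\;+\;\z^\beta\otimes D^{(\0)},
\]
so that $\la u_1\,\trbar\,u_2,\z^\beta\otimes D^{(\0)}\ra=1$ even though $u_2$ is a single letter whose derivation component is $\partial_1\ne D^{(\0)}$. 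By the defining formula \eqref{eq: Delta_trbar}, this forces $\Delta_\trbar(\z^\beta\otimes D^{(\0)})$ to contain the term $(\z^\beta\otimes D^{(\e_1)})\otimesbold(\1\otimes\partial_1)$, which is invisible in \eqref{eq:deltatrbar}, since there every right tensor factor is $\ind$ or carries the derivation $D^{(\0)}$. So the ``$=0$ else'' step — in your write-up and in the paper's — silently discards commutator-generated lower-length PBW components and needs either a separate justification or a correction term in the genuinely post-Lie case; in the pre-Lie case ($[\cdot,\cdot]\equiv 0$, where $\envU(L)=S(L)$ is graded) both arguments are complete as written. Since you reproduced the paper's reasoning faithfully, this is not a gap of your proposal \emph{relative to the paper}, but the word-length step is exactly where the filtered/graded distinction must not be blurred.
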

\begin{proof}
First, let's remark that the coalgebra structure $(\Delta_\trbar,\varepsilon)$ is well defined: indeed Assumption \ref{assump: finiteness Lie bracket AtensDer(A)} is equivalent to Assumption \ref{assump: finiteness Lie bracket}, and by Lemma \ref{lem: implication assumptions}, Assumption \ref{assump: finiteness rhobar} implies Assumption \ref{assump: finiteness tr}, then Proposition \ref{prop: dual Hopf algebra} applies and proves that the coalgebra structure is well-defined and permits to derive the two first equalities.

Let us prove the third equality by simple computation: by equation \eqref{eq: post-Lie associative product}, for all $(a\otimes D)\in \B_L$ and all $u,v\in \envU(L)\setminus\{\mathds 1\}$, we have
\begin{equation*}
\begin{split}
    \la u\bm{\otimes} v, \Delta_\trbar(a\otimes D) \ra&=\la u\trbar v, a\otimes D \ra
    \\ &=\left\{
    \begin{array}{ll}
    \la u\tr v, a\otimes D\ra\quad &\text{if}~v=\tilde a\otimes \tilde D\in L,\\ \\
    0 &\text{else}.
    \end{array}
    \right.
\end{split}
\end{equation*}
Thus, since $u\tr(\tilde a \otimes \tilde D)=\rhobar(u)(\tilde a)\otimes \tilde D$, we obtain that: 
\begin{equation*}
    \la u\bm{\otimes} v,\Delta_\trbar(a\otimes D)\ra=\sum_{b\in\B_\A}\la \rhobar(u)(b),a\ra_{\A} \la v,b\otimes D\ra_{\envU(L)}
\end{equation*} 
which concludes the proof by \eqref{eq:Theta}.
\end{proof}

See \cite{GMZ} and \cite{BH24} for particular cases of formula \eqref{eq:deltatrbar} in the context of multi-indices 
\cite{LOT}.

\subsection{Extension of the representation map}

We fix again a basis $\B_\A$ of $\A$ and we denote by $\overline\A$ the space of formal series $\sum_{\gamma\in\B_\A} a_\gamma \gamma$, $a_\gamma\in\R$. 
The canonical pairing $\la \cdot,\cdot \ra$ on $\overline\A\times\A$ given by (the bilinear extension of) $\la\gamma,\beta\ra=\ind_{(\gamma=\beta)}$, $\gamma,\beta\in\B_\A$, allows to identify $\overline\A$ with the dual $\A^*$ by setting for all $\beta\in\B_\A$:
$$\left(\sum_{\gamma\in\B_\A} a_\gamma \gamma\right)(\beta):=\sum_{\gamma\in\B_\A} a_\gamma \la \gamma, \beta\ra.$$

In the following Proposition, given $f\in \envU(L)^*$ we define a map $\rhobar(f):\overline\A\to\overline\A$ by making an abuse of notation for simplicity.
\begin{prop}\label{prop: extension rhobar}
If Assumption \ref{assump: finiteness rhobar} is satisfied, then for all $f\in \envU(L)^*$ the map
$\rhobar(f):\overline\A\to\overline\A$ given by:
$$\rhobar(f)\left(\sum_{\gamma\in\B_\A} a_\gamma \gamma\right):=\sum_{\beta\in\B_\A} 
\left[\sum_{\gamma\in\B_\A}\sum_{u\in\B_{\envU(L)}}f(Tu) \,a_\gamma \la \rhobar(u)(\gamma),\beta\ra \right] \beta$$
is well-defined.
\end{prop}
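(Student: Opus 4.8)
The plan is to reduce the well-definedness entirely to the single finiteness statement in Assumption \ref{assump: finiteness rhobar}. Since $\overline\A$ is by construction the space of \emph{all} formal series $\sum_{\beta\in\B_\A} c_\beta\,\beta$ with arbitrary coefficients $c_\beta\in\R$, the only thing that can obstruct well-definedness is that, for some fixed $\beta$, the prescribed coefficient
\[
c_\beta := \sum_{\gamma\in\B_\A,\,u\in\B_{\envU(L)}} f(Tu)\,a_\gamma\,\la\rhobar(u)(\gamma),\beta\ra
\]
fails to be a genuine real number because infinitely many summands are non-zero. No convergence is available here — we work over $\R$ with purely formal series — so the task is precisely to show that this double sum is \emph{finite}, i.e. has only finitely many non-zero terms, for each fixed $\beta$.

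First I would fix $\beta\in\B_\A$ and an arbitrary $\sum_{\gamma}a_\gamma\gamma\in\overline\A$. The crucial point is that the scalar factors $f(Tu)$ and $a_\gamma$ cannot be used to control the support of the sum: each of them may be non-zero for infinitely many indices (recall $f$ is an arbitrary element of $\envU(L)^*$, and elements of $\overline\A$ may have infinite support). Hence the finiteness must come entirely from the pairing factor $\la\rhobar(u)(\gamma),\beta\ra$. Applying Assumption \ref{assump: finiteness rhobar} with the fixed basis element $a:=\beta$, and matching the dummy variables as $b\leftrightarrow\gamma$, yields that the set
\[
\{(u,\gamma)\in\B_{\envU(L)}\times\B_\A:\ \la\rhobar(u)(\gamma),\beta\ra\neq 0\}
\]
is finite.

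Consequently, after discarding the vanishing terms, the double sum defining $c_\beta$ ranges over a finite index set, so $c_\beta\in\R$ is well-defined irrespective of the values of $f(Tu)$ and $a_\gamma$. Since $\beta$ was arbitrary and $\overline\A$ imposes no constraint on the family $(c_\beta)_{\beta\in\B_\A}$, the formal series $\rhobar(f)\big(\sum_\gamma a_\gamma\gamma\big)=\sum_\beta c_\beta\,\beta$ is a genuine element of $\overline\A$, so $\rhobar(f):\overline\A\to\overline\A$ is well-defined; its linearity is then immediate from the formula. I do not expect a substantive obstacle: the entire content of the statement is the bookkeeping observation that Assumption \ref{assump: finiteness rhobar}, read with $a=\beta$, is exactly the finiteness that renders each coefficient $c_\beta$ meaningful.
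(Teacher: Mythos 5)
Your proof is correct and is essentially the paper's own argument: the paper's proof likewise fixes $\beta$ and invokes Assumption \ref{assump: finiteness rhobar} to conclude that the set of pairs $(\gamma,u)\in\B_\A\times\B_{\envU(L)}$ with $\la\rhobar(u)(\gamma),\beta\ra\neq 0$ is finite, so each coefficient is a finite sum. Your additional remarks (that $f(Tu)$ and $a_\gamma$ give no control and that $\overline\A$ imposes no constraint on coefficients) are accurate elaborations of the same one-line argument.
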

\begin{proof} We need to prove that for all fixed $\beta\in \B_\A$, the sum inside brackets is finite, that is to say, it has a finite number of non-zero coefficients.
This is a consequence of Proposition \ref{prop: finiteness rhobar}, since for $\beta$ fixed, the set of $(u,\gamma)\in\B_{\envU(L)}\times \B_\A$ such that $\la\rhobar(u)(\gamma),\beta\ra\ne 0$ is finite.
\end{proof}

Recall the sets $G,H\subset\envU(L)^*$ from Definition \ref{def:grch}.
\begin{prop}\label{prop: rhobar group morphism}
    If Assumptions \ref{assump: finiteness Lie bracket AtensDer(A)}-\ref{assump: finiteness rhobar} are satisfied, then the map $f\mapsto \rhobar(f)$ is a group morphism from $(H,\trbar,{\mathds1}^*)$ to  $({\rm Aut}(\overline\A),\circ,\mathrm{id})$, see Proposition \ref{pr:grch}.
\end{prop}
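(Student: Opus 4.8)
The plan is to verify the two group-morphism properties—compatibility with the units and with the products—after reducing everything to the action on basis vectors $\gamma\in\B_\A$. Since $\rhobar(f)$ is defined coefficient-wise on formal series and both $\rhobar(f_1\,\trbar\, f_2)$ and $\rhobar(f_1)\circ\rhobar(f_2)$ are linear maps commuting with formal infinite sums, it suffices to check that they agree on each basis element $\gamma\in\B_\A$, and indeed to compare the coefficient of each $\beta\in\B_\A$ separately. The finiteness that makes every such coefficient a genuine (finite) sum is exactly Assumption \ref{assump: finiteness rhobar}. The starting point of the computation is the observation that for $\gamma\in\B_\A$ the definition in Proposition \ref{prop: extension rhobar} rewrites as $\rhobar(f)(\gamma)=\sum_{u\in\B_{\envU(L)}} f(Tu)\,\rhobar(u)(\gamma)$, where $\rhobar(u)(\gamma)=\sum_{\beta\in\B_\A}\la\rhobar(u)(\gamma),\beta\ra\,\beta$ is the expansion of $\rhobar(u)(\gamma)\in\A$ in the basis $\B_\A$.

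For the unit I would first note that by definition $\mathds 1^*(Tu)=\la\mathds 1,Tu\ra$, and since $T$ restricts to a bijection $\B_{\envU(L)}\to\overline\B_{\envU(L)}$ fixing $\mathds 1$, one has $\la\mathds 1,Tu\ra=\ind_{(Tu=\mathds 1)}=\ind_{(u=\mathds 1)}$. Substituting into the rewritten formula collapses the sum to the single term $u=\mathds 1$, giving $\rhobar(\mathds 1^*)(\gamma)=\rhobar(\mathds 1)(\gamma)=\gamma$, because $\rhobar$ is a unital algebra morphism (Theorem \ref{theo: canonical representation}) and hence $\rhobar(\mathds 1)=\mathrm{Id}_\A$. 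Thus $\rhobar(\mathds 1^*)=\mathrm{id}$.

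For multiplicativity I would expand the left-hand side using the explicit convolution formula \eqref{eq:convprod2}, obtaining $\rhobar(f_1\,\trbar\, f_2)(\gamma)=\sum_{w}(f_1\,\trbar\, f_2)(Tw)\,\rhobar(w)(\gamma)=\sum_{w,u_1,u_2}f_1(Tu_1)f_2(Tu_2)\,\la u_1\,\trbar\, u_2,Tw\ra\,\rhobar(w)(\gamma)$. The key algebraic step is that $\la u_1\,\trbar\, u_2,Tw\ra$ equals the coefficient of $w$ in the (finitely supported) $\B_{\envU(L)}$-expansion of $u_1\,\trbar\, u_2\in\envU(L)$, because the pairing \eqref{eq:dualityL} reads off coordinates and $T$ is a basis bijection; hence the sum over $w$ reassembles, by linearity of $\rhobar$, to $\rhobar(u_1\,\trbar\, u_2)(\gamma)$. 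Invoking the algebra-morphism property of $\rhobar:(\envU(L),\trbar)\to(\End(\A),\circ)$ from Theorem \ref{theo: canonical representation} converts this into $\rhobar(u_1)\circ\rhobar(u_2)(\gamma)$, so that $\rhobar(f_1\,\trbar\, f_2)(\gamma)=\sum_{u_1,u_2}f_1(Tu_1)f_2(Tu_2)\,\rhobar(u_1)(\rhobar(u_2)(\gamma))$. Computing the right-hand side $\rhobar(f_1)\circ\rhobar(f_2)(\gamma)$ directly—first applying $\rhobar(f_2)$, expanding in the basis, then applying $\rhobar(f_1)$ term-by-term—produces the same double sum once one uses $\sum_\delta\la\rhobar(u_2)(\gamma),\delta\ra\,\la\rhobar(u_1)(\delta),\beta\ra=\la\rhobar(u_1)(\rhobar(u_2)(\gamma)),\beta\ra$ to contract the intermediate index $\delta$. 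Matching the coefficient of each $\beta$ then yields the claimed equality.

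The routine part is the bookkeeping; the main obstacle is justifying the rearrangements of the a priori infinite sums over $u_1,u_2,w,\delta$. Here I would stress that one never manipulates the full series at once: fixing the output basis vector $\beta$, Assumption \ref{assump: finiteness rhobar} forces only finitely many pairs $(u_1,\delta)$ (and, for fixed $\gamma$, finitely many pairs $(u_2,\delta)$) to contribute a nonzero pairing, so every coefficient is a finite sum and the reindexing and contraction of $\delta$ are legitimate. This is precisely the same finiteness that makes $\rhobar(f)$ well-defined in Proposition \ref{prop: extension rhobar} and that guarantees, through its equivalence with Assumption \ref{assump: finiteness trbar}, that the group $(H,\trbar,\mathds 1^*)$ of Proposition \ref{pr:grch} exists in the first place.
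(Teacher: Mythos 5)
Your proof is correct and takes essentially the same route as the paper's: you expand $\rhobar(f_1\,\trbar\,f_2)$ via the convolution formula \eqref{eq:convprod2}, contract the sum over $w$ using the algebra-morphism property of $\rhobar:(\envU(L),\trbar)\to(\End(\A),\circ)$ from Theorem \ref{theo: canonical representation}, and then insert and contract an intermediate basis index (your $\delta$, the paper's $\alpha$) to recognise $\rhobar(f_1)\circ\rhobar(f_2)$, with Assumption \ref{assump: finiteness rhobar} justifying the finiteness of every coefficient exactly as in the paper, where the correct chaining is by fixed \emph{outputs} ($\beta$, then $\delta$) rather than the fixed input $\gamma$ you mention in passing. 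Your explicit check that $\rhobar(\mathds{1}^*)=\mathrm{id}$ is an addition the paper omits, but it is harmless (and indeed automatic for a product-preserving map between groups).
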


\begin{proof} Let $\overline\A\ni\varphi= \sum_{\gamma\in\B_\A} a_\gamma \gamma$. By the definition of $\rhobar$ in Proposition \ref{prop: extension rhobar}, we have
\[
\rhobar(f_1\trbar  f_2)(\varphi)  =\sum_{\beta\in\B_\A}\left[\sum_{\gamma\in\B_\A}\sum_{u\in\B_{\envU(L)}} f_1\trbar  f_2 (Tu)\, a_\gamma \la \rhobar(u)(\gamma),\beta\ra\right] \beta.
\]
By Assumption \ref{assump: finiteness rhobar}, for every $\beta\in\B_\A$ only a finite number of terms in the sum in brackets are non-zero; now by \eqref{eq: Delta_trbar}, we have
\[
 f_1\trbar  f_2 (Tu) = \sum_{u_1,u_2\in\B_{\envU(L)}} \, f_1(Tu_1)f_2(Tu_2)\, \la u_1\trbar u_2,Tu \ra,
 \]
 and by the finiteness property of Corollary \ref{lem:wd} the latter sum contains only a finite number of non-zero terms. For each such pair $(u_1,u_2)$ 
 \[
\begin{split}
 \sum_{u\in\B_{\envU(L)}} \, \la u_1\trbar u_2,Tu \ra \, \la \rhobar(u)(\gamma),\beta\ra & = \la \rhobar(u_1\trbar u_2)(\gamma),\beta\ra \\
 &= \la\rhobar(u_1)\circ\rhobar(u_2)(\gamma),\beta\ra
\end{split} 
\]
where we have used \eqref{eq: representation algebra derivation2} in the second equality.
Therefore, again by the definition of $\rhobar$,
\[
\begin{split}
&\rhobar(f_1\trbar  f_2)(\varphi)
\\ & = \sum_{\beta\in\B_\A}\left[\sum_{\gamma\in\B_\A} \sum_{u_1,u_2\in\B_{\envU(L)}} f_1(Tu_1)f_2(Tu_2)\,a_\gamma\la\rhobar(u_1)\circ\rhobar(u_2)(\gamma),\beta\ra \right] \beta\\
&= \rhobar(f_1)\circ\rhobar(f_2)(\varphi).
\end{split}
\]
The proof is complete.
\end{proof}
In particular, we have that the map $f\mapsto \rhobar(f)$ is a group morphism from $(G,\trbar,{\mathds1}^*)$ to $({\rm Aut}(\overline\A),\circ,\mathrm{id})$, see Proposition \ref{pr:grch}.
\medskip

\subsection{Module and comodule structures}\label{sec:modcomod}

By definition, $(\A,\Delta)$ is a left $(\envU(L),\Delta_\trbar)$-comodule if $\Delta:\A\to \envU(L)\otimes \A$ satisfies
$$(\id\otimes\Delta)\Delta=(\Delta_\trbar\otimes\id)\Delta.$$

\begin{prop}\label{prop: coaction}
We suppose that Assumptions \ref{assump: finiteness Lie bracket AtensDer(A)}-\ref{assump: finiteness rhobar} are satisfied.
Let the map $\Delta:\A\to \envU(L)\otimes \A$ be defined by
\[
\Delta a = \sum_{\substack{u\in \B_{\envU(L)}\\ b\in\B_\A}} \la \rhobar(u)(b),a\ra Tu\otimes b
\]
Then $(\A,\Delta)$ is a left $(\envU(L),\Delta_\trbar)$-comodule.
\end{prop}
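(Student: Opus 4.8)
The plan is to verify the comodule identity $(\id\otimes\Delta)\Delta=(\Delta_\trbar\otimes\id)\Delta$ by evaluating both sides on a basis vector $a\in\B_\A$ and comparing the resulting expansions in ${\overline\B}_{\envU(L)}\otimesbold{\overline\B}_{\envU(L)}\otimesbold\B_\A$ (writing the enveloping-algebra factors as $Tu$). The conceptual point is that $\Delta$ dualizes the representation $\rhobar$ while $\Delta_\trbar$ dualizes the product $\trbar$; since $\rhobar:(\envU(L),\trbar)\to(\End(\A),\circ)$ is an algebra morphism, the comodule axiom becomes precisely the dual statement of this multiplicativity. Assumption \ref{assump: finiteness rhobar} guarantees that $\Delta a$ is a finite sum for every $a\in\B_\A$, so all the iterated sums below are well-defined and may be freely rearranged.

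First I would compute the left-hand side. Iterating the definition of $\Delta$ gives
\[
(\id\otimes\Delta)\Delta a=\sum_{u,v\in\B_{\envU(L)}}\sum_{b,c\in\B_\A}\la\rhobar(u)(b),a\ra\,\la\rhobar(v)(c),b\ra\,Tu\otimesbold Tv\otimesbold c.
\]
The inner $b$-sum collapses: expanding $\rhobar(v)(c)=\sum_{b}\la\rhobar(v)(c),b\ra\,b$ in $\B_\A$, applying $\rhobar(u)$, and pairing with $a$ yields
\[
\sum_{b\in\B_\A}\la\rhobar(u)(b),a\ra\,\la\rhobar(v)(c),b\ra=\la\rhobar(u)\circ\rhobar(v)(c),a\ra,
\]
so that $(\id\otimes\Delta)\Delta a=\sum_{u,v,c}\la\rhobar(u)\circ\rhobar(v)(c),a\ra\,Tu\otimesbold Tv\otimesbold c$.

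Next I would treat the right-hand side. Substituting the explicit formula \eqref{eq: Delta_trbar} for $\Delta_\trbar$ evaluated at $Tw$, namely $\Delta_\trbar(Tw)=\sum_{u,v}\la u\,\trbar\,v,Tw\ra\,Tu\otimesbold Tv$, gives
\[
(\Delta_\trbar\otimes\id)\Delta a=\sum_{u,v,w\in\B_{\envU(L)}}\sum_{c\in\B_\A}\la\rhobar(w)(c),a\ra\,\la u\,\trbar\,v,Tw\ra\,Tu\otimesbold Tv\otimesbold c.
\]
Here the $w$-sum collapses by the duality \eqref{eq:dualityL}: since $\la u\,\trbar\,v,Tw\ra$ is exactly the coefficient of $w$ in the $\B_{\envU(L)}$-expansion of $u\,\trbar\,v$ (cf. \eqref{eq:v=}), linearity of $\rhobar$ gives $\sum_w\la u\,\trbar\,v,Tw\ra\,\la\rhobar(w)(c),a\ra=\la\rhobar(u\,\trbar\,v)(c),a\ra$, whence $(\Delta_\trbar\otimes\id)\Delta a=\sum_{u,v,c}\la\rhobar(u\,\trbar\,v)(c),a\ra\,Tu\otimesbold Tv\otimesbold c$.

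Finally, the algebra-morphism property $\rhobar(u\,\trbar\,v)=\rhobar(u)\circ\rhobar(v)$ from Theorem \ref{theo: canonical representation} identifies the two expansions term by term, proving the claim. The only genuinely delicate points are the bookkeeping between the two bases $\B_{\envU(L)}$ and ${\overline\B}_{\envU(L)}$ together with the rescaling $T$ relating them, and the justification of the interchange of summations; both are controlled by Assumption \ref{assump: finiteness rhobar}, and I expect the $w$-sum collapse via \eqref{eq:dualityL}--\eqref{eq:v=} to be the step requiring the most care.
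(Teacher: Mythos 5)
Your proof is correct and takes essentially the same route as the paper's: the paper's (very terse) proof consists of exactly the two collapses you carry out, combined via the morphism property $\rhobar(u_1\,\trbar\,u_2)=\rhobar(u_1)\circ\rhobar(u_2)$ into a single middle equality. Your version merely spells out, with cleaner indexing and explicit finiteness justifications from Assumption \ref{assump: finiteness rhobar}, the same computation.
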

\begin{proof} We apply twice the definition of $\Delta$ to obtain
\[(\id\otimes\Delta)\Delta a=\sum_{\substack{u_1,u_2\in \B_{\envU(L)}\\b_1,b_2\in\B_\A}} \la \rhobar(u_1)(b_1),a\ra \la\rhobar(u_2)(b_2),b_1\ra Tu_1\otimes Tu_2\otimes b_2.\]

Now, under Assumption \ref{assump: finiteness rhobar}, by Proposition \ref{prop: finiteness rhobar}, 
there is only a finite number of non-zero terms the latter sum. Now
\[
\begin{split}
 \sum_{b_1\in\B_\A} &\la \rhobar(u_1)(b_1),a\ra  \la\rhobar(u_2)(b_2),b_1\ra \\
 &=  \la \rhobar(u_1)\circ \rhobar(u_2) (b_2),a\ra  
\\ & = \la \rhobar(u_1\star u_2)(b_2),a\ra \\
&= \sum_{u\in \B_{\envU(L)}} \la\rhobar(u)(b),a\ra \la u_1\trbar  u_2,Tu\ra Tu_1\otimes Tu_2.
\end{split}
\]
Therefore
\[
\begin{split}
(\id\otimes\Delta)\Delta a & =
\sum_{\substack{u,u_1,u_2\in \B_{\envU(L)}\\b\in\B_\A}}\la \rhobar(u)(b_2),a\ra \la u_1\trbar  u_2,Tu\ra Tu_1\otimes Tu_2\otimes b
\\ & =
(\Delta_\trbar\otimes\id)\Delta a,
\end{split}
\]
where in the last equality we have used \eqref{eq: Delta_trbar} and the fact that in the latter sum only a finite number of terms are non-zero by Proposition \ref{prop: finiteness rhobar} and by the finiteness property of Corollary \ref{lem:wd}.\\
Moreover, recalling the counit map $\varepsilon: \env_{[\cdot,\cdot]}(L)\to\R$ defined in Section \ref{sec:ast}, we have
\[(\varepsilon\otimes \id)\Delta a=\sum_{\substack{u\in \B_{\envU(L)}\\b\in\B_\A}} \la \rhobar(u)(b),a\ra \varepsilon(Tu)\otimes b=\sum_{b\in\B_\A}\la \rhobar(\ind)(b),a\ra b=a,\]
since $\rhobar(\ind)(b)=b$, where we have identified $\R\otimes \A$ with $\A$ in the two last equalities.\\
This completes the proof.
\end{proof}
\bigskip

\section{Derivations on multi-indices}\label{sec:LOT}

We want here to give an application of the results of the previous sections to an algebraic structure which has been
unveiled recently in \cite{LOT}, with applications to stochastic Taylor developments of solutions to SPDEs.

We note $\N=\{0,1,\ldots\}$ and given an integer $d\geq 1$, we use the following notations: 
\[
\N^d_*:=\N^d\setminus\{\0\}, \qquad \0:=(0,\ldots,0)\in\N^d.
\]
Then we define $\M$ as the set of compactly supported $\gamma:\N\sqcup\N^d_*\to\N$, namely $\gamma_i\ne0$ only for finitely many $i\in\N\sqcup\N^d_*$. Elements of $\M$ are called \textit{multi-indices}.  
Note that $\M$ is stable under addition: if $\gamma^1,\gamma^2\in\M$ then 
\begin{equation}\label{eq:sumga}
\gamma_i:=\gamma^1(i)+\gamma^2(i), \qquad i\in\N\sqcup\N^d_*,
\end{equation}
defines a new element in $\M$. It is also
possible to define the difference $\gamma^1-\gamma^2\in\M$ if $\gamma^1\geq\gamma^2$.
\medskip

\subsection{The Linares-Otto-Tempelmayr (LOT) setting}
In \cite{LOT} the authors developed a new tree-free approach to regularity structures.
In this subsection we start to introduce some of their main definitions. 
Let us consider the polynomial algebra 
\begin{equation*}
\A:=\R[\z_k,\z_\n]_{k\in\N,\n\in\N^d_*}
\end{equation*}
where $\{\z_k,\z_\n: k\in\N,\n\in\N^d_*\}$ are commuting variables and $\1\in\A$ is the unit. A canonical basis  for $\A$ is given by the set $\{\z^\gamma: \gamma\in \M\}$, where
$$\z^\gamma:=\prod_{i\in\N\sqcup\N^d_*}\z_i^{\gamma_i}, \quad \gamma\in\M,
\qquad \z^0=\1.$$
Then the sum in $\M$ defined in \eqref{eq:sumga} allows to describe the product in $\A$
\[
\z^{\gamma}\z^{\gamma'}=\z^{\gamma+\gamma'}, \qquad \gamma,\gamma'\in\M.
\]

Two sets of derivations on $\A$ are of interest here (see \cite[(3.9) and (3.12)]{LOT})
\begin{enumerate}
    \item The \textit{tilt} derivations $\{D^{(\n)}\}_{\n\in\N^d}$, defined by:
    \begin{equation}\label{eq: tilt}
        D^{(\0)}:=\sum_{k\ge 0}(k+1)\z_{k+1}\partial_{\z_k}\qquad \text{and}\qquad D^{(\n)}:=\partial_{\z_\n},~ \text{for}~ \n\in\N^d_*.
    \end{equation}
    \item The \textit{shift} derivations $\partial_i$, defined for $i\in\{1,\ldots,d\}$ by:
    \begin{equation}\label{eq: shift}
        \partial_i:=\sum_{\n\in\N^d}(n_i+1)\z_{\n+\e_i}D^{(\n)}
    \end{equation}
    where $\e_1=(1,0,\ldots,0)$, $\e_2=(0,1,\ldots,0)$, etc.
\end{enumerate}

For $k\in\N$ we denote by $e_k\in\M$ the multi-index $e_k(i)=\ind_{(i=k)}$ for $i\in\N\sqcup\N^d_*$, and
similarly $e_\n\in\M$ for $\n\in\N^d_*$. Explicit computations for all $\gamma\in\M$, $\n\in\N^d_*$ 
and $i\in\{1,\ldots,d\}$ show that for the tilt derivations
\begin{align}\label{eq:D0}
    D^{(\0)}\z^\gamma&=\sum_{k\geq 0}(k+1)\gamma_k\z^{\gamma+e_{k+1}-e_k}, \\
    D^{(\n)}\z^\gamma&=\gamma_\n\,\z^{\gamma-e_\n}\qquad \text{if}~\n\in\N^d_*, \nonumber
\end{align}
while for the shift derivations
\begin{align}\label{eq:partial_i}
    \partial_i\z^\gamma&=\sum_{\n\in\N^d}(n_i+1)\z_{\n+\e_i}D^{(\n)}\z^\gamma\\
    &=\sum_{k\geq 0}(k+1)\gamma_k\z^{\gamma+e_{k+1}-e_k+e_{\e_i}}+\sum_{\n\in\N^d_*}(n_i+1)\gamma_\n\z^{\gamma-e_\n+e_{\n+\e_i}}.
\nonumber
\end{align}
While $D^{(\0)}$ and $\partial_i$ are defined by infinite series, for each $\gamma\in\M$ the sums in \eqref{eq:D0}-\eqref{eq:partial_i} are finite because $\gamma$ has compact support.

The authors in \cite[\S 3.8]{LOT} used a geometrical point of view to define a binary operation denoted $\tr$ which corresponds to the covariant derivative of vector fields on the infinite dimensional manifold $\R[\z_k,\z_\n]_{k\in\N,\n\in\N^d_*}$ whose geometry is given by the canonical flat and torsion free connexion. However, this natural approach turns out to be difficult to handle because of the non-stability of the space $L_{\rm LOT}:=\R\{\partial_i\}_i\oplus\R\{\z^\gamma D^{(\n)}\}_{\gamma,\n}$ under the binary operation $\tr$. For example the covariant derivatives $\partial_i\tr \partial_i$ cannot be expressed as a linear combination of the aforementioned derivations and thus does not belong to $L_{\rm LOT}$.
\medskip

\subsection{Post-Lie algebra structure}
In order to use the results of the preceeding sections, we redefine the space $L_{\rm LOT}$ of \cite{LOT} in a different manner. Denoting again $\A:=\R[\z_k,\z_\n]_{k\in\N,\n\in\N^d_*}$, we define the space $L_0$ as the subspace of $\A\otimes \Der(\A)$ generated by the elements $\{\z^\gamma \otimes D^{(\n)}\}_{\n\in\N^d,\gamma\in\M}$ and $ \{\1\otimes \partial_i\}_{i\in\{1,\ldots,d\}}$, namely:
\begin{equation}\label{eq: post-Lie algebra L}
L_0:=\text{Span}\{\1\otimes\partial_i\}_{i\in\{1,\ldots,d\}} \oplus \text{Span}\left\{\z^\gamma \otimes D^{(\n)}\right\}_{\gamma\in\M,\n\in\N^d},
\end{equation}
where $\1$ is the unit in $\A$.

\begin{theorem}\label{thm:postLOT} Setting $\A:=\R[\z_k,\z_\n]_{k\in\N,\n\in\N^d_*}$, 
 the space $L_0$ is a sub-post-Lie algebra of $\A\otimes \Der(\A)$, for the canonical post-Lie algebra structure $(\tr,[\cdot,\cdot])$ given in Theorem \ref{theo: post-Lie structure from derivations}.
\end{theorem}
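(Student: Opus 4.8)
The plan is to exploit Theorem \ref{theo: post-Lie structure from derivations}: since $\A\otimes\Der(\A)$ already carries the canonical post-Lie structure $(\tr,[\cdot,\cdot])$, every post-Lie axiom is automatically inherited by any linear subspace. Hence it suffices to prove that $L_0$ is \emph{stable} under both $\tr$ and $[\cdot,\cdot]$. As both operations are bilinear, I only need to check stability on the two families of generators $\{\z^\gamma\otimes D^{(\n)}\}$ and $\{\1\otimes\partial_i\}$ of \eqref{eq: post-Lie algebra L}.

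For the product $\tr$, recall from \eqref{eq: post-lie product derivation} that $a_1\otimes D_1\tr a_2\otimes D_2 = a_1 D_1(a_2)\otimes D_2$, so the second derivation is never altered and only the action of the first derivation on the scalar factor matters. I would treat the four possible pairings. Whenever the right factor is $\1\otimes\partial_i$ the result is $a_1 D_1(\1)\otimes\partial_i=0$, since every derivation annihilates the unit. When the right factor is $\z^{\gamma'}\otimes D^{(\m)}$ I invoke the explicit formulas \eqref{eq:D0} and \eqref{eq:partial_i}: both $D^{(\n)}(\z^{\gamma'})$ and $\partial_i(\z^{\gamma'})$ are finite linear combinations of monomials $\z^\delta$, so $\z^\gamma D^{(\n)}(\z^{\gamma'})\otimes D^{(\m)}$ and $\partial_i(\z^{\gamma'})\otimes D^{(\m)}$ are again combinations of generators of the first family. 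This is exactly the point where the obstruction of \cite{LOT} evaporates: separating the coefficient from the derivation in the tensor $\A\otimes\Der(\A)$ prevents the problematic term $\partial_i\tr\partial_i$ from leaving the space, as it is simply $0$.

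For the bracket, by \eqref{eq: post-lie bracket derivation} I have $[a_1\otimes D_1,a_2\otimes D_2]=a_1 a_2\otimes[D_1,D_2]_\circ$, so everything reduces to three commutator identities in $\Der(\A)$, which I would establish as lemmas: first, that the tilt derivations mutually commute, $[D^{(\n)},D^{(\m)}]_\circ=0$ for all $\n,\m\in\N^d$ (immediate once one notes that $D^{(\0)}$ only involves the variables $\z_k$, $k\in\N$, while $D^{(\m)}=\partial_{\z_\m}$ for $\m\in\N^d_*$ involves disjoint variables); second, the mixed relation
\[
[\partial_i,D^{(\m)}]_\circ=-m_i\,D^{(\m-\e_i)},\qquad \m\in\N^d,
\]
with the convention that the right-hand side vanishes when $m_i=0$; and third, that the shift derivations commute, $[\partial_i,\partial_j]_\circ=0$. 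Granting these, the three bracket cases give $\z^{\gamma+\gamma'}\otimes 0=0$, $-m_i\,\z^{\gamma'}\otimes D^{(\m-\e_i)}$, and $\1\otimes 0=0$ respectively, all of which lie in $L_0$, completing the stability check.

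The routine parts are the $\tr$-stability and the commutation of the tilt derivations; the real content is in the two commutator lemmas for $\partial_i$. For the mixed relation I would expand $\partial_i=\sum_{\n}(n_i+1)\z_{\n+\e_i}D^{(\n)}$ and commute $D^{(\m)}=\partial_{\z_\m}$ through it; since $D^{(\m)}$ commutes with every $D^{(\n)}$, the only surviving contribution is $\partial_{\z_\m}(\z_{\n+\e_i})=\ind_{(\n+\e_i=\m)}$, which forces $\n=\m-\e_i$ and produces the factor $m_i$. The step I expect to be the main obstacle is $[\partial_i,\partial_j]_\circ=0$: here I would expand one factor $\partial_j$, apply $\partial_i$ as a derivation, and use the mixed relation to move $\partial_i$ past each $D^{(\m)}$. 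The commutator then reduces to a difference of two single sums over $\m$, one coming from $\partial_i(\z_{\m+\e_j})$ and one from the correction term of the mixed relation after the reindexing $\m\mapsto\m-\e_i$; their coefficients are $(m_j+1)(m_i+\delta_{ij}+1)$ and $(m_j+\delta_{ij}+1)(m_i+1)$, which coincide for $i\neq j$, so the difference vanishes. Keeping track of this reindexing and the coefficient bookkeeping is the most delicate point of the argument.
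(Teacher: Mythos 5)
Your proposal is correct and follows essentially the same route as the paper's proof: reduce the theorem to stability of the generators of $L_0$ under $\tr$ and $[\cdot,\cdot]$ (the post-Lie axioms being inherited from $\A\otimes\Der(\A)$), then verify the commutator relations $[D^{(\n)},D^{(\n')}]_\circ=0$, $[D^{(\n)},\partial_i]_\circ=n_i\,D^{(\n-\e_i)}$ and $[\partial_i,\partial_j]_\circ=0$. The only (harmless) deviation is in the last identity, where the paper expands $\partial_i\circ\partial_j$ completely and invokes its manifest symmetry in $(i,j)$, whereas you obtain the same cancellation via the mixed relation and the reindexing $\m\mapsto\m-\e_i$; your coefficient bookkeeping $(m_j+1)(m_i+\delta_{ij}+1)$ versus $(m_j+\delta_{ij}+1)(m_i+1)$ checks out.
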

\begin{proof}
Let us verify that $L_0$ is stable under the action of the post-Lie structure $(\tr,[\cdot,\cdot])$ induced by $\A\otimes \Der(\A)$ given in Theorem \ref{theo: post-Lie structure from derivations}, namely, for $a_1\otimes D_1,a_2 \otimes D_2\in L_0$ we have
\begin{align*}
    (a_1\otimes D_1)\tr (a_2 \otimes D_2)&=a_1 D_1(a_2)\otimes D_2\in L_0,\\
    [a_1\otimes D_1, a_2 \otimes D_2]&=a_1a_2\otimes [D_1,D_2]_\circ\in L_0.
\end{align*}
By definition of the operation $\tr$, we obtain the following equalities:
\begin{align}
    (\1\otimes \partial_i)\tr (\1\otimes \partial_j) &=  \partial_i(\1)\otimes \partial_j=0, \label{eq:partial12}
    \\   (\z^\gamma \otimes D^{(\n)}) \tr (\1\otimes \partial_i)  &=\z^\gamma D^{(\n)}(\1)\otimes \partial_i=0, \label{eq:zgaDntr} \\
 (\z^{\gamma'}\otimes D^{(\n')}) \tr (\z^\gamma \otimes D^{(\n)}) &=\z^{\gamma'} D^{(\n')}\z^{\gamma}\otimes D^{(\n)}\in L_0,\\
 (\1\otimes \partial_i) \tr (\z^\gamma \otimes D^{(\n)}) &= \partial_i\z^\gamma\otimes D^{(\n)}\in L_0.
\end{align} 
where in \eqref{eq:partial12} and \eqref{eq:zgaDntr}, we used the property that derivations vanish once evaluated at $\1\in\R[\z_k,\z_\n]_{k\in\N,\n\in\N^d_*}$.
 
It remains to discuss the bracket.  Let us first compute the Lie bracket $[\cdot,\cdot]_\circ$ on the family of derivations $\{D^{(\n)},\partial_i\}_{\n,i}$:
\begin{enumerate}[leftmargin=*]
    \item By the definitions, for all $\n,\n'\in\N^d$ the derivations $D^{(\n)}$ and $D^{(\n')}$ commute, i.e. $[D^{(\n)},D^{(\n')}]_\circ=0$.
    \item  For all $\{i,j\}\in\{1,\ldots,d\}$, we have
    \begin{align*}
        \partial_i\circ\partial_j=&\sum_{\n\in\N^d}(n_i+1)(n_j+1)\z_{\n+\e_i+\e_j}D^{(\n)}\\
        & +\sum_{\n,\m\in\N^d}(n_i+1)(m_j+1)\z_{\n+\e_i}\z_{\m+\e_j}D^{(\n)}D^{(\m)}.
    \end{align*}
Since this is symmetric in $(i,j)$, we have $[\partial_i,\partial_j]_\circ=0$.
    \item Since for all $\n\in\N^d$ the derivations $D^{(\0)}$ and $D^{(\n)}$ commute, one has:
    $$D^{(\0)}\circ \partial_i=\sum_{\n\in\N^d}(n_i+1)\z_{\n+\e_i}D^{(\0)}\circ D^{(\n)}=\partial_i\circ D^{(\0)}.$$
    Moreover for all $\n=(n_1,\ldots,n_d)\in\N^d_*$, we have
    $$D^{(\n)}\circ \partial_i=n_iD^{(\n-\e_i)}+\partial_i\circ D^{(\n)}.$$
    Thus for all $\n\in\N^d$:
    \begin{equation*}
        [D^{(\n)}, \partial_i]_\circ=n_iD^{(\n-\e_i)}.
    \end{equation*}
\end{enumerate}
In conclusion, we have for all $\n,\n'\in\N^d$ and $\gamma,\gamma'\in \M$:
\begin{align}
\label{eq:zgazga}    [\z^\gamma\otimes D^{(\n)},\z^{\gamma'}\otimes D^{(\n')}]&=0  \\
\label{eq:paipaj}     [\1\otimes \partial_i,\1\otimes \partial_j]&=0  \\
\label{eq:zgaDn}    [\z^\gamma\otimes D^{(\n)},\1\otimes\partial_i]&=n_i(\z^\gamma\otimes D^{(\n-\e_i)})\in L_0.
\end{align}
The proof is complete.
\end{proof}

Let us now compute on the previously defined basis of $L_0$, the Lie bracket $\db{\cdot,\cdot}$ given by the relation \eqref{eq: composition Lie bracket}:
\begin{align}
\label{eq:Lie1}    \db{\z^\gamma\otimes D^{(\n)},\z^{\gamma'}\otimes D^{(\n')}}&=\z^\gamma D^{(\n)}\z^{\gamma'}\otimes D^{(\n')}-\z^{\gamma'} D^{(\n')}\z^\gamma\otimes D^{(\n)},\\
\label{eq:Lie2}    \db{\z^\gamma \otimes D^{(\n)} , \1\otimes \partial_i}&=n_i\,\z^\gamma\otimes D^{(\n-\e_i)}-\partial_i\z^\gamma\otimes D^{(\n)},\\
\label{eq:Lie3}    \db{\1\otimes \partial_i, \1\otimes \partial_j}&=0.
\end{align}

\begin{remark}{\rm
In \cite[formula (3.36)]{LOT} we find the formula
\[
\z^\gamma D^{(\n)}\tr \partial_i = n_i \, \z^\gamma D^{(\n-\e_i)}.
\]
This differs from our \eqref{eq:zgaDntr}. Moreover in \cite{LOT} the operator $\partial_1 \triangleleft \partial_2$ can
not be written as a finite linear combination of $\{\partial_i\}\cup\{z^\gamma D^{(\n)}\}_{\gamma,\n}$, while in our setting we have the
simple expression \eqref{eq:partial12}. Therefore the post-Lie algebra we define is different from the (partial) pre-Lie algebra constructed on the space $L_{\rm LOT}=\R\{\partial_i\}_i\oplus\R\{\z^\gamma D^{(\n)}\}_{\gamma,\n}$ in \cite{LOT}.

However, the Lie algebra defined by $\db{\cdot,\cdot}$ is compatible with the Lie algebra $[\cdot,\cdot]_\circ$ in \cite[\S 3.10]{LOT}. Indeed, 
the relations \eqref{eq:Lie1}-\eqref{eq:Lie2}-\eqref{eq:Lie3} show that the Lie-algebra morphism $\rhohat:(\env_{\db{\cdot,\cdot}}(L_0),\db{\cdot,\cdot})\to (\Der(\A),[\cdot,\cdot]_\circ)$ of \eqref{eq:rhohat} allows to recover the structure described
in \cite[\S 3.10]{LOT}, see in particular \cite[(3.46)-(3.47)]{LOT}.

On the other hand the post-Lie algebra we define is isomorphic via $\rhobar$ to the one written in \cite[Theorem 5.5]{bruned2022post}.
Our construction has the merit of being more general and to distinguish the abstract enveloping algebra $\envU(L)$ from its realisation as an
algebra of endomorphisms of $\A$.

}\end{remark}
    
    \begin{remark}\label{rem: finiteness Lie bracket AtensDer(A)}{\rm
    By the equalities \eqref{eq:zgazga} \eqref{eq:paipaj} and \eqref{eq:zgaDn}, Assumption \ref{assump: finiteness Lie bracket AtensDer(A)} is trivially satisfied in this setting.
}\end{remark}
\medskip

    \subsection{A basis for the enveloping algebra}\label{subsec: basis env alg}
    The isomorphism of Theorem \ref{theo: isomorphism between enveloping algebras} allows us to work with the space $\env_{[\cdot,\cdot]}(L_0)$ for which the multiplication table of the associative product $\trbar$ can be written explicitly, once one fixes a basis. In this section
    we recover the basis \cite[(4.15)]{LOT}, see \eqref{eq: rhobar(E_mF_J)} below. 
    
    The Poincaré-Birkhoff-Witt Theorem \ref{theo: PBW}, permits us to exhibit a choice of basis for $\env_{[\cdot,\cdot]}(L_0)$ which depends on an ordering of the basis of $L_0$ given by the derivations of type $\z^\gamma \otimes D^{(\n)}$ and $\1\otimes \partial_i$. 
    
    The commutation relations \eqref{eq:zgazga}-\eqref{eq:paipaj}-\eqref{eq:zgaDn} indicate that in order to apply the PBW theorem, we only need to choose an order between the elements of type $\1\otimes \partial_i$ and of type $\z^\gamma \otimes D^{(\n)}$. In particular if we choose that $\1\otimes\partial_i<\z^\gamma \otimes D^{(\n)}$ for all $i\in\{1,\ldots,d\},\n\in\N^d,\gamma\in\M$ one obtains the following basis for $\env_{[\cdot,\cdot]}(L_0)$ given by the set of equivalence classes of monomials of the form
    \begin{equation}\label{eq:basis1}
        (\1\otimes\partial_1)^{m_1}\ldots (\1\otimes\partial_d)^{m_d}(\z^{\gamma_1}\otimes D^{(\n_1)} )\ldots (\z^{\gamma_k}\otimes D^{(\n_k)})
    \end{equation}
    where $(m_1,\ldots,m_d)\in \N^d$ and $(\gamma_l, \n_l)\in \M\times \N^d$ for all $l\in\{1,\ldots,k\}$. 

        From the commutation relation \eqref{eq:zgaDn} we deduce that we can write any monomial in $\env_{[\cdot,\cdot]}(L_0)$ in the form \eqref{eq:basis1}:
    \begin{equation}\label{eq: inversion order basis}
    \begin{split}
        &(\z^{\gamma_1}\otimes D^{(\n_1)})\ldots (\z^{\gamma_k}\otimes D^{(\n_k)})(\1\otimes\partial_1)^{m_1}\ldots (\1\otimes\partial_d)^{m_d}\\
        &=(\1\otimes\partial_1)^{m_1}\ldots (\1\otimes\partial_d)^{m_d}(\z^{\gamma_1}\otimes D^{(\n_1)} )\ldots (\z^{\gamma_k}\otimes D^{(\n_k)})\\
        &+\ind_{(|\m|>0)} \sum_{\bar\n_1,\ldots,\bar\n_k} \ind_{\left(\sum_j|\n_j-\bar\n_j|=|\m|\right)} \prod_{l=1}^k \left[\ind_{(\bar\n_l\le\n_l)}
        \frac{\n_l!}{\bar\n_l!} (\z^{\gamma_l}\otimes D^{(\bar\n_l)})\right],
 \end{split}
 \end{equation}
  where $|\m|:=m_1+\ldots+m_d$  and $\bar\n\le\n\Longleftrightarrow (\bar n_1\le n_1)\wedge \ldots \wedge (\bar n_d\le n_d)$, and  we use standard notations for $\n=(n_1,\ldots,n_d),\m=(m_1,\ldots,m_d)\in\N^d$:
    \[
    \n!=n_1!\cdots n_d!, \qquad \binom{\n}{\m}=\binom{n_1}{m_1}\cdots \binom{n_d}{m_d}.
    \]
Therefore we consider \eqref{eq:basis1} as a {\it normal ordering}
of monomials in $\env_{[\cdot,\cdot]}(L_0)$.

    We denote for $\m\in\N^d$
    \[
    (1\otimes\partial)^{\m}:=(\1\otimes\partial_1)^{m_1}\ldots (\1\otimes\partial_d)^{m_d}.
    \]
        In order to choose a normalisation for the basis element in \eqref{eq:basis1}, we note that the commutation relations 
    \eqref{eq:zgazga}-\eqref{eq:paipaj}
    \[[\z^\gamma\otimes D^{(\n)},\z^{\gamma'}\otimes D^{(\n')}]=[1\otimes\partial_i,1\otimes\partial_j]=0\] imply that $\R\{\z^\gamma\otimes D^{(\n)}\}_{\gamma\in\M,\n\in\N^d}$ and $\R\{\1\otimes\partial_i\}_{i\in\{1,\ldots,d\}}$ are commutative subalgebras of $\env_{[\cdot,\cdot]}(L_0)$. In particular the coshuffle coproduct $\Cop$ defined in \eqref{eq:coshuffle} acts
on these two algebras as follows 
\begin{align*}
    \Cop(1\otimes\partial)^{\m} &=\sum_{\m'+\m''=\m}\binom{\m}{\m'}\, (1\otimes\partial)^{\m'}{\otimes}(1\otimes\partial)^{\m''}, \\
    \Cop a^\ell &=\sum_{k=0}^\ell \binom{\ell}{k}\, a^k {\otimes}\,
    a^{\ell-k}, \qquad a= \z^\gamma\otimes D^{(\n)}, \\
    \Cop\prod_{i=1}^na_i^{\ell_i} &=\prod_{i=1}^n\Cop a_i^{\ell_i}, \qquad 
    a_i = \z^{\gamma_i}\otimes D^{(\n_i)}, \quad a_i\ne a_j \ \text{if} \ i\ne j.
\end{align*}
Therefore we choose a normalisation which allows to minimise the combinatorial coefficients in these expressions. 
For $\m=(m_1,\ldots,m_d)\in\N^d$ we set 
     $$E_{\m}:=\frac{1}{\m!}(1\otimes\partial)^{\m},
\qquad \m!:=m_1!\cdots m_d!.
$$
We define now multi-indices $J$ on $\M\times\N^d$, namely functions $J:\M\times\N^d\to\N$ with finite support, and we define
$$F_{J}:=\prod_{(\gamma,\n)\in \M\times\N^d} \frac{1}{J(\gamma,\n)!}(\z^{\gamma}\otimes D^{(\n)} )^{J(\gamma,\n)}.$$
We use the convention $E_\0=F_{\emptyset}=\mathds{1}\in\env_{[\cdot,\cdot]}(L_0)$. 
These definitions allow to express the coalgebra structure of $\env_{[\cdot,\cdot]}(L_0)$ given by the coshuffle coproduct $\Cop$ defined in \eqref{eq:coshuffle}, given on such elements by
\begin{gather}
\nonumber    \Cop(E_{\m})=\sum_{\m'+\m''=\m}E_{\m'}{\otimes}E_{\m''}, \qquad
    \Cop(F_{J})=\sum_{J'+ J''=J}F_{J'}{\otimes}F_{J''}, \\
\label{eq:coco}    \Cop(E_{\m}F_{J})=\Cop(E_{\m})\Cop(F_{J})=\sum_{\substack{\m'+\m''=\m \\ J'+ J''=J}}E_{\m'}F_{J'}\otimes E_{\m''}F_{J''},
\end{gather}
which is \eqref{eq:copast} in this setting. In addition, for 
the concatenation product in $\env_{[\cdot,\cdot]}(L_0)$ we obtain
\begin{equation}\label{eq:star}
    E_{\m}E_{\bar\m}=\binom{\m+\bar\m}{\m}\, E_{\m+\bar\m}\qquad \text{and} \qquad F_{J}F_{\bar J}=
    \binom{J+\bar J}{J}\, F_{J+ \bar J},
\end{equation}
where the binomial coefficient is given by
\[
\binom{J+\bar J}{J}:=\frac{(J+\bar J)!}{J!\,\bar J!}, \qquad 
J!:=\prod_{(\gamma,\n)}J(\gamma,\n)!.
\]

We denote $\E=\{E_\m\}_\m$ and $\F=\{\F_J\}_J$. Then the basis $\B_{\env_{[\cdot,\cdot]}(L_0)}$ for ${\env_{[\cdot,\cdot]}(L_0)}$ defined in \eqref{eq:basis} can be
described as set of all concatenation products of type $E_{\m}F_{J}$
\begin{equation*}
        \B_{\env_{[\cdot,\cdot]}(L_0)}= \mathcal{E}\cdot\mathcal{F}=\{E_\m F_J\}_{\m,J}.
\end{equation*}
Note that this choice of a basis corresponds (via the representation $\bar\rho$) to the one that has been adopted in \cite[formula (4.15)]{LOT}: following Theorem \ref{theo: canonical representation}, the representation $\rhobar:\env_{[\cdot,\cdot]}(L_0)\to\End(\A)$ is given on basis elements $E_{\m}F_{J}\in\B_{\env_{[\cdot,\cdot]}(L_0)}$ by:
\begin{equation}\label{eq: rhobar(E_mF_J)}
     \rhobar\left(E_{\m}F_{J}\right)=\left(\frac{1}{\m!} \, \prod_{\gamma,\n}\frac{(\z^{\gamma})^{J(\gamma,\n)}}{J(\gamma,\n)!}\right)
    \partial^{\m}\circ \prod_{\gamma,\n}\left(D^{(\n)}\right)^{\circ J(\gamma,\n)}
\end{equation}
where we denote for all $\m=(m_1,\ldots,m_d)\in\N^d$
    \[
    \partial^{\m}:=(\1\otimes\partial_1)^{m_1}\ldots (\1\otimes\partial_d)^{m_d}.
    \]
    
\medskip

\subsection{An explicit formula for the product}\label{sec:expprod} 
With the notation introduced in the previous subsection, the equality \eqref{eq: inversion order basis} can be written in a more compact form
\begin{equation}\label{eq: inversion order basis simpler form} 
   F_J E_\m = E_\m F_J + \frac{\ind_{(|\m|>0)}}{\m!} \sum_{J_0\in J_\m} \frac{J_0!}{J!}
   \left(\prod_{\gamma,\n} \,(\n!)^{J(\gamma,\n)-J_0(\gamma,\n)}\right) F_{J_0}, 
\end{equation}
where for
\[
F_{J}=\frac1{J!}\prod_{i=1}^{k} (\z^{\gamma_i}\otimes D^{(\n_i)}) ,
\]
we define $J_\m$ as the set of $J_0: \M\times\N^d\to\N$ with finite support and such that there exist
$\overline\n_1,\ldots,\overline\n_k\in\N^d$ with $\overline\n_i\le\n_i$ and $\sum_{i=1}^k|\n_i-\overline\n_i|=|\m|$ such that
\[
F_{J_0}=\frac1{J_0!}\prod_{i=1}^{k} (\z^{\gamma_i}\otimes D^{(\overline\n_i)}).
\]

We want now to exhibit the extension of the post-Lie product $\tr$ on $\env_{[\cdot,\cdot]}(L_0)$ and its related associative product $\trbar$ defined in Proposition \ref{prop: associative product}, using the simplifications arising from our previous computations. Let us start by doing some simplifications for the extension of $\tr$ on $\env_{[\cdot,\cdot]}(L_0)$.

First of all, denoting as before by $\rhobar$ the representation morphism of Theorem \ref{theo: canonical representation}, Proposition \ref{prop: grafting a word on a letter derivations} implies that for $E_{\m}F_{J}\neq \mathds{1}$
\begin{equation*}
    E_{\m}F_{J}\tr(\1\otimes \partial_i)=\rhobar\left(E_{\m}F_{J}\right)(\1) \otimes \partial_i=0,
\end{equation*}
so that by point 1 in Proposition \ref{prop: extension post-lie product}
\begin{equation*}
    E_{\m}F_{J}\tr E_{\bar\m}
    =\left\{
    \begin{array}{ll}
         E_{\bar\m} & \mbox{if}~ E_{\m}F_{J} = \mathds{1}, \\
       0 & \mbox{else}.
    \end{array}
\right.
\end{equation*}
By \eqref{eq: representation algebra derivation}
\begin{equation*}
    E_{\m}F_{J}\tr(\z^{\bar\gamma}\otimes D^{(\bar\n)})= \rhobar\left(E_{\m}F_{J}\right)(\z^{\bar\gamma}) \otimes D^{(\bar\n)},
    \end{equation*}
Now from \eqref{eq: grafting word on word} and \eqref{eq:coco}
\begin{equation}\label{eq:EFF}
\begin{split}
    E_{\m}F_{J}\tr F_{\bar J} &=  E_{\m}F_{J}\tr\frac1{\bar J!}\prod_{l=1}^{N} \z^{\bar\gamma_l}\otimes D^{(\bar\n_l)}    \\
    &= \frac1{\bar J!} \sum_{\substack{\m_{1}+\cdots +\m_{N}=\m\\ J_{1}+\cdots+ J_{N}=J}}
    \prod_{l=1}^{N} 
    \left(\rhobar(E_{\m_{l}}F_{J_{l}})( \z^{\bar\gamma_l})\otimes D^{(\bar\n_l)}\right).
\end{split}
\end{equation}
Finally, using point 3 of Proposition \ref{prop: extension post-lie product} and \eqref{eq:coco}, we obtain
\begin{equation*}
    E_{\m}F_{J}\tr E_{\bar\m}F_{\bar J}= \sum_{\substack{\m'+\m''=\m \\ J'+ J''=J}} (E_{\m'}F_{J'}\tr E_{\bar\m}) (E_{\m''}F_{J''} \tr F_{\bar J}).
\end{equation*}
The only non-zero term in the sum is given for $\m''=\m$ and $J''=J$ and in that case $E_{\m'}F_{J'}=\mathds{1}$ and $\mathds{1}\tr E_{\bar\m}=E_{\bar\m}$, then:
\begin{equation*}
    E_{\m}F_{J}\tr E_{\bar\m}F_{\bar J}=E_{\bar\m}\left(E_{\m}F_{J}\tr F_{\bar J}\right).
\end{equation*}
Thus, from the Definition of $\trbar$ given in \ref{prop: associative product}, one gets:
\begin{align*}
    E_{\m}F_{J}\trbar E_{\bar\m}F_{\bar J}&=\sum_{\substack{\m'+\m''=\m \\ J'+ J''=J}}E_{\m'}F_{J'}\left(E_{\m''}F_{J''}\tr  E_{\bar\m}F_{\bar J}\right)\\
    &=\sum_{\substack{\m'+\m''=\m \\ J'+ J''=J}}E_{\m'}F_{J'}E_{\bar\m}\left(E_{\m''}F_{J''}\tr F_{\bar J}\right).
\end{align*}
Using \eqref{eq: inversion order basis simpler form}, one deduces an expression for the product $\trbar$ on the basis $\B_{\envU(L_0)}$ 
\begin{multline*}
    E_{\m}F_{J}\trbar E_{\bar\m}F_{\bar J}=\sum_{\substack{\m'+\m''=\m \\ J'+ J''=J}}\underbrace{E_{\m'}E_{\bar\m}}_{\in \E}\underbrace{F_{J'}(E_{\m''}F_{J''}\tr F_{\bar J})}_{\in\F}
    \\
    +\sum_{\substack{\m'+\m''=\m \\ J'+ J''=J}} 
    \frac{\ind_{(|\m|>0)}}{\m!} \sum_{J_0\in J'_{\bar\m}} \frac{J_0!}{J'!}
   \left[\prod_{\gamma,\n} \,(\n!)^{(J'-J_0)(\gamma,\n)}\right]
  \underbrace{E_{\m'}}_{\in\E}\underbrace{F_{J_0}(E_{\m''}F_{J''}\tr F_{\bar J})}_{\in\F}
\end{multline*}
where the elements $E_{\m''}F_{J''}\tr F_{\bar J}$ are given by \eqref{eq:EFF}.
\bigskip

\section{The structure group}\label{sec:ansatz}
In this section, we start the construction which allows to apply the results of the previous section to a specific stochastic PDE. We first explain
very briefly the main motivation of this construction, referring to \cite[\S 7]{LOT} for more details.
We choose an equation on $\R^d$ of the form
\begin{equation*}
    \L u = a(u(x))\xi,
\end{equation*}
where $\L$ is a linear differential operator which admits a Green kernel $K$, $\xi:\R^d\to\R$ (the noise term) is a fixed continuous function, $a:\R\to\R$ is smooth, and solutions are functions $u:\R^d\to\R$.  The multi-index symmetry factor is given for all $\beta\in\M$ by:
\begin{equation*}
    \sigma(\beta):=\prod_{k\in\N} (k!)^{\beta_k}.
\end{equation*}

The analytical theory of \eqref{eq:spde} is based on the following Ansatz:
any solution $u$ satisfies a local Taylor development at order $\delta>0$ of the form:
\begin{equation}\label{eq: taylor}
    u(y)=\sum_{|\beta|<\delta} \frac{1}{\sigma(\beta)}\Upsilon^{a,u}\z^\beta(x) \, \Pi_x \z^\beta(y) + R_x^\delta(y),
\end{equation}
where 
\begin{itemize}[leftmargin=*]
\item $R^\delta$ is a remainder of order $\delta$: $|R^\delta_x(y)|\lesssim|y-x|^\delta$;
\item $|\beta|\in\R^+$ is the \textit{homogeneity} of $\beta\in\M$ that is defined in our case in \eqref{eq:|} below;
\item $\{\Pi_x\z^\beta\}_{\beta\in\M}$ is a fixed family of functions 
which depend on the noise term $\xi$ and also on the Green kernel $K$;
\item $\Upsilon^{a,u}:\R[\z_k,\z_\n]_{k\in\N,\n\in\N^d_*}\to C(\R^d)$ is an explicit function depending on $a$ and $u$,
which is defined in \cite[(7.22)]{LOT}.
\end{itemize}

The functions $\{\Pi_x\z^\beta\}_{\beta\in\M}$ come with a family of linear operators $\Gamma_{xy}:\A\to\A$ such that 
\[
\Pi_x \Gamma_{xy} = \Pi_y, \qquad \forall \, x,y\in\R^d.
\]
These operators are constructed via the representation of a group $(G,\trbar)$, called the {\it structure group} of the equation.
In the rest of this paper, see in particular the final section \ref{sec:group}, we show how to construct this group with such a representation,
using the material of the previous sections.
\medskip

\subsection{Homogeneity}
Now we consider in particular the equation on $\R^d$:
\begin{equation}\label{eq:spde}
    -\Delta u = a(u(x))\xi,
\end{equation}
where $\Delta$ denotes the $d$-dimensional Laplacian operator: \[\Delta u= \frac{d^2}{dx_1} u +\ldots+ \frac{d^2}{dx_d} u.\]

We fix $\alpha\in\,]0,1[$ and we note $|\n|=|(n_1,\ldots,n_d)|=n_1+\ldots+n_d$ for $\n\in\N^d$. 
The value $\alpha\in\,]0,1[$ indicates that one expects in the non-smooth setting that $\xi$ is a distribution
in some Besov space $C^{\alpha-2}$ and $u$ is a Hölder function in $C^\alpha$.

We define the \emph{homogeneity} $|\cdot|:\M\to[0,+\infty)$ as follows:
\begin{equation}\label{eq:|}
|\beta|:=\alpha\sum_{k\geq 0}\beta_k+\sum_{\n\ne \0}|\n|\beta_\n.
\end{equation}
The homogeneity plays a crucial role since it is the expected "regularity" of the terms $\Pi_x$ in \eqref{eq: taylor}.
In particular $\Pi_x\z^\beta$ is expected to satisfy 
\[
|\Pi_x\z^\beta(y)|\lesssim |y-x|^{|\beta|}, \qquad x,y\in\R^d.
\] 

We recall the definition \eqref{eq: post-Lie algebra L} of $L$ and we define the subspace $L\subset L_0$
\begin{equation}\label{eq: sub-post-Lie algebra L}
L:=\text{Span}\{\1\otimes\partial_i\}_{i\in\{1,\ldots,d\}} \oplus \text{Span}\left\{\z^\gamma \otimes D^{(\n)}\right\}_{\gamma\in\M,\, \n\in\N^d, |\gamma|>|\n|}.
\end{equation}
where the condition $|\gamma|>|\n|$ on the elements $\z^\gamma \otimes D^{(\n)}$ will ensure the key finiteness property of Proposition \ref{pr:finite}.\\

Now we have the analog of Theorem \ref{thm:postLOT}:
\begin{theorem}\label{thm:postLOT'} Setting $\A:=\R[\z_k,\z_\n]_{k\in\N,\n\in\N^d_*}$, 
 the space $L$ is a sub post-Lie algebra of $\A\otimes \Der(\A)$, for the canonical post-Lie algebra structure $(\tr,[\cdot,\cdot])$ given in Theorem \ref{theo: post-Lie structure from derivations}.
\end{theorem}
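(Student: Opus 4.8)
The plan is to exploit the inclusion $L\subset L_0$ given by \eqref{eq: sub-post-Lie algebra L}. Since $L_0$ is already a sub-post-Lie algebra of $\A\otimes\Der(\A)$ by Theorem \ref{thm:postLOT}, the operations $\tr$ and $[\cdot,\cdot]$ are well defined on it, and it suffices to prove that $L$ is \emph{stable} under them, i.e. that the product and the bracket of any two generators of $L$ again lie in $L$. The generators are of two types: the elements $\1\otimes\partial_i$, and the elements $\z^\gamma\otimes D^{(\n)}$ with $\gamma\in\M^-$ and $|\gamma|>|\n|$. As all of these sit in $L_0$, I can reuse verbatim the formulas established in the proof of Theorem \ref{thm:postLOT}: the vanishing graftings \eqref{eq:partial12} and \eqref{eq:zgaDntr}, the two nontrivial graftings $(\z^{\gamma'}\otimes D^{(\n')})\tr(\z^\gamma\otimes D^{(\n)})=\z^{\gamma'}D^{(\n')}\z^\gamma\otimes D^{(\n)}$ and $(\1\otimes\partial_i)\tr(\z^\gamma\otimes D^{(\n)})=\partial_i\z^\gamma\otimes D^{(\n)}$, and the brackets \eqref{eq:zgazga}, \eqref{eq:paipaj}, \eqref{eq:zgaDn}.

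The brackets are immediate. The only nonzero bracket among generators is $[\z^\gamma\otimes D^{(\n)},\1\otimes\partial_i]=n_i\,(\z^\gamma\otimes D^{(\n-\e_i)})$ from \eqref{eq:zgaDn}: here $\gamma$ is unchanged so $\gamma\in\M^-$ still holds, and since $|\n-\e_i|=|\n|-1<|\n|<|\gamma|$ the homogeneity condition $|\gamma|>|\n-\e_i|$ holds, so the result lies in $L$. It then remains to treat the two nontrivial graftings, whose outputs are sums of terms $\z^\delta\otimes D^{(\m)}$; for each such term I must check the three requirements defining a generator of $L$, namely $\|\delta\|=-1$, $\delta\notin\{e_\m:\m\in\N^d_*\}$ (these two together saying $\delta\in\M^-$), and $|\delta|>|\m|$.

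The verification of $\|\delta\|=-1$ and of $|\delta|>|\m|$ is a bookkeeping of the two gradings under the derivation formulas \eqref{eq:D0} and \eqref{eq:partial_i}, using $\|e_k\|=k-1$, $\|e_\n\|=-1$ together with $|e_k|=\alpha$, $|e_\n|=|\n|$. For $\|\cdot\|$ one checks that applying $D^{(\0)}$ (replacing a $\z_k$ by $\z_{k+1}$) or $D^{(\n')}$ with $\n'\in\N^d_*$ (removing a factor $\z_{\n'}$) each raises $\|\cdot\|$ by exactly one, while multiplication by $\z^{\gamma'}$ with $\gamma'\in\M^-$ lowers it by one, so that starting from $\|\gamma\|=-1$ every surviving term again has $\|\delta\|=-1$; the analogous computation for the shift formula \eqref{eq:partial_i} again gives $\|\delta\|=-1$. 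For the homogeneity the crucial cancellations $|e_{k+1}|=|e_k|=\alpha$ and $|e_{\n+\e_i}|-|e_\n|=1$ give, in the grafting of two type-B generators, $|\delta|=|\gamma'|+|\gamma|$ when $\n'=\0$ and $|\delta|=|\gamma'|-|\n'|+|\gamma|$ when $\n'\in\N^d_*$; since $|\gamma'|>|\n'|$ and $|\gamma|>|\n|$ this yields $|\delta|>|\n|$ in both cases, while the grafting by $\1\otimes\partial_i$ gives $|\delta|=|\gamma|+1>|\gamma|>|\n|$.

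The point requiring real care, and thus the main obstacle, is the exclusion of the $e_\m$, $\m\in\N^d_*$, from $\M^-$ in \eqref{eq:M-}: having $\|\delta\|=-1$ is not enough, one must also rule out $\delta=e_\m$. I plan to do this by a $\z$-degree count. In the grafting of two type-B generators one has $\deg\delta=\deg\gamma'+\deg\gamma''$, where $\gamma''$ is the multi-index produced by $D^{(\n')}\z^\gamma$; the exclusion of the $e_{\n'}$ from $\M^-$ prevents $D^{(\n')}\z^\gamma$ from being constant, and the fact that the only degree-one element of $\M^-$ is $e_0$ (which $D^{(\n')}$ annihilates for $\n'\in\N^d_*$) forces $\deg\gamma''\ge1$, hence $\deg\delta\ge2$ and $\delta\ne e_\m$. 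The same mechanism — the would-be degree-one outputs being exactly those annihilated by the derivation — disposes of the grafting by $\1\otimes\partial_i$. Collecting the three checks for every nonzero term shows that $L$ is stable under $\tr$ and $[\cdot,\cdot]$, hence a sub-post-Lie algebra; everything beyond this interplay between the homogeneity and degree conditions and the exclusion set in \eqref{eq:M-} is a direct computation reusing Theorem \ref{thm:postLOT}.
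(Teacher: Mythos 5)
Your proof is correct and takes essentially the same route as the paper's: reduce to stability of $L\subset L_0$ under $\tr$ and $[\cdot,\cdot]$ using the formulas established in Theorem \ref{thm:postLOT}, then track the two gradings $\|\cdot\|$ and $|\cdot|$ through \eqref{eq:D0}--\eqref{eq:partial_i}; your bookkeeping ($\|\delta\|=-1$ in every case, $|\delta|>|\n|$ from $|\gamma'|>|\n'|$ and $|\gamma|>|\n|$, and the bracket case via $|\n-\e_i|<|\n|<|\gamma|$) matches the paper's displayed computations exactly. The one point where you go beyond the paper is the third membership condition for $\M^-$: the paper's ``easy computations'' verify only the $\|\cdot\|$ and homogeneity constraints and leave tacit the exclusion $\delta\notin\{e_\m:\m\in\N^d_*\}$ from \eqref{eq:M-}, whereas your degree count --- every surviving output has total degree at least $2$, because the only degree-one element of $\M^-$ is $e_0$, which is annihilated by $D^{(\n')}$ for $\n'\in\N^d_*$ and contributes only degree-$2$ terms under $\partial_i$, while the potential degree-zero outputs $D^{(\n')}\z^{e_{\n'}}$ are ruled out precisely by the exclusion of $e_{\n'}$ from $\M^-$ --- supplies exactly the detail the paper's proof omits, and is a genuine (if small) improvement in rigor.
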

\begin{proof}
Let us verify that $L$ is stable under the action of the post-Lie structure $(\tr,[\cdot,\cdot])$ induced by $\A\otimes \Der(\A)$ given in Theorem \ref{theo: post-Lie structure from derivations}. To that aim, we fix $\z^{\gamma'}\otimes D^{(\n')},\z^\gamma \otimes D^{(\n)}\in L$, namely $\gamma,\gamma'\in\M$ and $|\n'|<|\gamma'|$, $|\n|<|\gamma|$.

By \eqref{eq:partial12}-\eqref{eq:zgaDntr}, for $\tr$ it remains to prove the two following equalities
\[
(\z^{\gamma'}\otimes D^{(\n')})\tr (\z^\gamma \otimes D^{(\n)})=\z^{\gamma'} D^{(\n')}\z^{\gamma}\otimes D^{(\n)}\in L,
\]
\[
   (\1\otimes \partial_i) \tr (\z^\gamma \otimes D^{(\n)})= \partial_i\z^\gamma\otimes D^{(\n)}\in L.
 \]

Recalling the defining equalities \eqref{eq:D0} and \eqref{eq:partial_i}, we have:
$$\z^{\gamma'} D^{(\n')}\z^{\gamma}\in\left\{ 
    \begin{array}{ll} \text{Span}\{\z^{\gamma'+\gamma+e_{k+1}-e_k}\}_{k\geq 0}\quad &\text{if $\n'=\0$},
\\ \\ \text{Span}\{\z^{\gamma'+\gamma-e_{\n'}}\} &\text{if $\n'\ne\0$},
    \end{array}
    \right.$$
and 
$$\partial_i\z^\gamma\in \text{Span}\{\z^{\gamma+e_{k+1}-e_k+e_{\e_i}}\}_{k\geq 0}\oplus \text{Span}\{\z^{\gamma-e_\n+e_{\n+\e_i}}\}_{\n\in\N^d_*}$$
Using the additivity of the homogeneity, and the fact that $|e_{k+1}-e_k|=0$ and $|e_\n|=|\n|$ (in particular $|e_{\e_i}|=1$ and $|e_{\n+\e_i}|=|\n|+1$), we obtain:
\[
\begin{split}
& |\gamma'+\gamma+e_{k+1}-e_k|=|\gamma'|+|\gamma|+|e_{k+1}-e_k|=|\gamma'|+|\gamma|>|\gamma'|+|\n|>|\n|,
\\ & |\gamma'+\gamma-e_{\n'}|=|\gamma'|+|\gamma|-|\n'|>|\gamma|>|\n|,
\\ & |\gamma+e_{k+1}-e_k+e_{\e_i}|=|\gamma|+|e_{k+1}-e_k|+|e_{\e_i}|=|\gamma|+1>|\gamma|>|\n|,
\\ & |\gamma-e_\n+e_{\n+\e_i}|=|\gamma|-|\n|+|e_{\n+\e_i}|=|\gamma|-|\n| + |\n| +1>|\gamma|>|\n|,
\end{split}
\]
For the bracket $[\cdot,\cdot]$, by \eqref{eq:zgazga}-\eqref{eq:paipaj}-\eqref{eq:zgaDn} what is left is just to prove that
$\z^\gamma\otimes D^{(\n-\e_i)}\in L$ for $\n\ne\0$. Again this is a simple verification based on
$|\n-\e_i|=|\n|-1<|\n|<|\gamma|$.
\end{proof}

We note that our present post-Lie algebra $(L,\tr,[\cdot,\cdot])$ doesn't require any extra condition on the multi-indices $\gamma$ of the elements $\z^\gamma \otimes D^{(\n)}$, unlike the Lie algebra described in \cite[\S 3.10, Lemma 3.3]{LOT}, where an extra grading (denoted $[\gamma]$ there) is needed.

\medskip

\subsection{Two bases for the enveloping algebra}

Recall that in section \ref{subsec: basis env alg} we constructed a basis for the enveloping algebra $\env_{[\cdot,\cdot]}(L_0)$ 
which allows to describe explicitly the product $\trbar$ in a convenient way. It is simple to see that 
$\B_{\env_{[\cdot,\cdot]}(L)}:=\B_{\env_{[\cdot,\cdot]}(L_0)}\cap \env_{[\cdot,\cdot]}(L)$ gives an equally convenient basis for $\env_{[\cdot,\cdot]}(L)$
(recall that $L\subset L_0$ and the two spaces are defined in \eqref{eq: post-Lie algebra L} and \eqref{eq: sub-post-Lie algebra L} respectively).

In particular we obtain that $\B_{\env_{[\cdot,\cdot]}(L)}=\{E_\m F_J\}_{\m,J}$ with
\begin{equation}\label{def:B'}
\begin{split}
E_{\m}&:=\frac{1}{\m!}(1\otimes\partial)^{\m}, 
\qquad \m=(m_1,\ldots,m_d)\in\N^d,
\\
F_{J}&:=\prod_{(\gamma,\n)\in \M\times\N^d} \frac{1}{J(\gamma,\n)!}(\z^{\gamma}\otimes D^{(\n)} )^{J(\gamma,\n)},
\end{split}
\end{equation}
where $J:\M\times\N^d\to\N$ has compact support and satisfies $|\n|<|\gamma|$ for all $(\gamma,\n)$ such that $J(\gamma,\n)>0$.
We use the convention $E_\0=F_{\emptyset}=\mathds{1}\in\env_{[\cdot,\cdot]}(L)$. Recall also the value \eqref{eq: rhobar(E_mF_J)}
of $\rhobar(E_{\m}F_{J})$.

The main technical result in this section is the following Proposition (see \cite[Lemma 4.9]{LOT}), which shows in particular that $L$ satisfies
Assumption \ref{assump: finiteness rhobar} above.
\begin{prop}\label{pr:finite} 
For all $\beta\in\M$ there are only finitely many $u\in\B_{\envU(L)}$ and $\gamma\in\M$ such that $\la \rho(u)(\z^\gamma),\z^\beta\ra\ne 0$.
\end{prop}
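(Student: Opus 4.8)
The plan is to read the finiteness directly off the explicit shape of $\rhobar(E_\m F_J)$ in \eqref{eq: rhobar(E_mF_J)}. Writing the index of $J$ as $(\mu,\n)\in\M^-\times\N^d$, this operator is, up to a positive scalar,
\[
\rhobar(E_\m F_J)\ \propto\ \z^{M}\cdot\Big(\partial^{\m}\circ\prod_{\mu,\n}(D^{(\n)})^{\circ J(\mu,\n)}\Big),\qquad M:=\sum_{\mu,\n}J(\mu,\n)\,\mu,
\]
that is: first apply all the $D^{(\n)}$, then the shifts $\partial^{\m}$, and only at the very end multiply by the fixed monomial $\z^{M}$. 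I would fix $\beta\in\M$ and bound separately the source $\gamma$ and the basis vector $u=E_\m F_J$ for which $\z^\beta$ can occur in $\rhobar(u)(\z^\gamma)$.

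First I would confine $\gamma$ to a finite set using two gradings. A direct check on the generators (exactly as in Lemma \ref{lem:finite}) shows that $\rhobar(u)$ preserves $\|\cdot\|$, so $\la\rhobar(u)(\z^\gamma),\z^\beta\ra\ne0$ forces $\|\gamma\|=\|\beta\|$. Moreover each generator of $L$ strictly raises the homogeneity $|\cdot|$ of \eqref{eq:|}: $\1\otimes\partial_i$ raises it by $1$, while $\z^{\mu}\otimes D^{(\n)}$ raises it by $|\mu|-|\n|>0$, which is precisely the condition $|\mu|>|\n|$ built into \eqref{eq: sub-post-Lie algebra L}. Hence $\rhobar(u)$ raises $|\cdot|$ by the fixed nonnegative amount $\Delta:=|\m|+\sum_{\mu,\n}J(\mu,\n)(|\mu|-|\n|)$, so $|\gamma|=|\beta|-\Delta\le|\beta|$. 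The bound $|\gamma|\le|\beta|$ controls $\sum_k\gamma(k)$ and $\sum_{\n}|\n|\gamma(\n)$, and then $\|\gamma\|=\|\beta\|$ controls $\sum_k(k-1)\gamma(k)$; together these force every index occurring in $\gamma$ and its multiplicity to be bounded in terms of $\beta$ and $\alpha$, leaving only finitely many $\gamma$.

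Next I would bound $u$. Since $|\m|\le\Delta\le|\beta|$, there are finitely many $\m$. For $F_J$, the final multiplication by $\z^{M}$ shows that $\z^\beta$ can occur only if $M\le\beta$ componentwise; as $M=\sum_{\mu,\n}J(\mu,\n)\mu$ is a sum of nonnegative multi-indices, every $\mu$ with $J(\mu,\cdot)\ne0$ satisfies $\mu\le\beta$ and the total multiplicity $\sum_{\mu,\n}J(\mu,\n)$ is bounded, so the $\mu$-content of $J$ lies in a finite set. It remains to bound the labels $\n$, and here is the key point: in $\rhobar(E_\m F_J)$ all the $D^{(\n)}$ are applied \emph{before} the $\partial_i$ and before the multiplication by $\z^{M}$, and by \eqref{eq:D0} neither $D^{(\0)}$ nor $D^{(\n)}$ with $\n\ne\0$ creates any new variable $\z_{\n'}$ with $\n'\in\N^d_*$ --- only $\partial_i$ does, via \eqref{eq:partial_i}. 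Therefore each $D^{(\n)}$ with $\n\ne\0$ must act on a variable $\z_\n$ already present in $\z^\gamma$, so $\n\in\supp(\gamma)\cap\N^d_*$, with multiplicity at most $\gamma(\n)$. As $\gamma$ runs over the finite set found above, this bounds all such $\n$, and together with the single option $\n=\0$ it confines $J$ to a finite set.

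The main obstacle, and the reason the homogeneity grading alone is insufficient, is that the increments $|\mu|-|\n|$ admit no positive lower bound (for instance when $\alpha$ is irrational the numbers $\alpha N(\mu)+\sum_{\n}|\n|\mu(\n)-|\n|$ accumulate at $0$), so $\Delta\le|\beta|$ does not by itself limit the number of factors of $F_J$ nor the sizes of $\mu$ and $\n$. The two combinatorial constraints that replace it are $M\le\beta$, which bounds the multiplicative parts $\mu$, and the availability of the variables $\z_\n$ enforced by the ordering ``all $D^{(\n)}$ first'', which bounds the differential labels $\n$. Assembling the finitely many admissible choices for $\gamma$, for $\m$, and for $J$ yields the claim.
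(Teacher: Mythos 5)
Your proof is correct, and while it shares the paper's skeleton for most of the claim, it replaces the paper's key counting step by a genuinely different and more elementary mechanism. Like the paper, you use the conservation $\|\gamma\|=\|\beta\|$ (Lemma \ref{lem:finite}), the exact additivity of the homogeneity under $\rhobar(u)$ (the paper's $|\beta|=|\gamma|+|\m|+\sum_i(|\gamma_i|-|\n_i|)$) to confine $\gamma$ to a finite set and to bound $\m$, and the same variable-availability observation --- that the $D^{(\n)}$, $\n\ne\0$, are applied before the $\partial_i$'s and annihilate $\z^\gamma$ unless $\gamma(\n)>0$ --- to confine the labels $\n$. Where you diverge is in bounding the multiplicative parts $\mu$ and the total number of factors $\ell=\sum_{\mu,\n}J(\mu,\n)$ of $F_J$: the paper stays inside the grading formalism, first bounding $\sum_i|\n_i|$, then deducing via the identity \eqref{eq:==} a bound on $\sum_i|\gamma_i|$, and finally invoking the uniform lower bound $|\gamma_i|\ge\alpha>0$ on $\M^-$ to control $\ell$ and the $\gamma_i$'s; you instead read off from the monomial prefactor $\z^M$ in \eqref{eq: rhobar(E_mF_J)} the componentwise necessary condition $M=\sum_{\mu,\n}J(\mu,\n)\,\mu\le\beta$, and since $\0\notin\M^-$ (as $\|\0\|=0\ne-1$) every factor contributes at least one unit to $M$, which simultaneously confines each $\mu$ to the finite set $\{\mu\in\M:\mu\le\beta\}$ and gives $\ell\le\sum_i\beta(i)$. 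Your diagnosis of why this extra input is needed is also accurate: the increments $|\mu|-|\n|$ admit no positive lower bound (e.g.\ for irrational $\alpha$), so the homogeneity budget $\Delta\le|\beta|$ alone cannot bound $\ell$ --- which is precisely why the paper must route through $|\gamma_i|\ge\alpha$. What your route buys is robustness and generality: it uses neither $\alpha>0$ nor the value of $|\cdot|$ on $\M^-$, only the fact that every prefactor is a nonempty monomial, so it would survive in variants where the populated homogeneities accumulate at $0$. What the paper's route buys is that it runs entirely in the two gradings $|\cdot|$ and $\|\cdot\|$ that define $L$ and the graded Hopf algebra of Section \ref{sec:group}, never appealing to the specific monomial form of $\rhobar(E_\m F_J)$, and it yields as a by-product quantitative homogeneity bounds on the admissible factors.
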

\begin{proof}
    By Proposition \ref{prop: finiteness rhobar}, it suffices to prove that Assumption \ref{assump: finiteness rhobar} is satisfied, therefore when $u\in\B_{L}$, we have that
    \begin{itemize}[leftmargin=*]
        \item if $u=\1\otimes \partial_i$, then $\rho(\1\otimes \partial_i)(\z^\gamma)=\partial_i\z^\gamma$ and in that case, by \eqref{eq:partial_i}:
        \begin{multline*}
        \la \partial_i \z^\gamma,\z^\beta\ra\ne 0\Rightarrow\\
        (\exists k\geq 0,~\beta=\gamma+e_{k+1}-e_k+e_{\e_i})~ \vee ~ (\exists\n\in\N^d_*,~\beta=\gamma-e_\n+e_{\n+\e_i}).
        \end{multline*}
        In either case, by \eqref{eq:|}
        \[|\beta| =|\gamma|+1.\]
        \item if $u=\z^{\gamma'}\otimes D^{(\n)}$, then $\rho(\z^{\gamma'}\otimes D)(\z^\gamma)=\z^{\gamma'} D^{(\n)}\z^\gamma$ and by \eqref{eq: tilt}:
$$
\la \z^{\gamma'} D^{(\n)}\z^\gamma,\z^\beta\ra\ne 0~\Rightarrow~ \left\{
    \begin{array}{ll} \text{if $\n=\0$}:\quad
    \exists k\geq 0,~ \beta=\gamma'+\gamma+e_{k+1}-e_k,
\\ \\  \text{if $\n\ne\0$}:\quad  \beta=\gamma'+\gamma-e_\n,
    \end{array}
    \right.
$$
    \end{itemize}
and in both cases by \eqref{eq:|} 
\[|\beta|=|\gamma| + |\gamma'|-|\n|.\]

Since by definition \eqref{eq: sub-post-Lie algebra L} of the space $L$: $\z^{\gamma'}\otimes D^{(n)}\in L\Rightarrow|\gamma'|>|\n|$, we obtain in all cases that $|\gamma|<|\beta|$. Then since, $0<{\gamma'}$, we have from the definition of the homogeneity \eqref{eq:|} that there are only finitely many possible $\gamma$'s. For each $\n\ne\0$, we have that $D^{(\n)}\z^\gamma=0$ unless $\gamma_{\n}>0$; since we have already selected finitely many possible $\gamma$'s, each with compact support in $\N$, there are only finitely many such $\n$'s. Then, for a choice of such $\gamma$ and $\n$, again from the definition of the homogeneity \eqref{eq:|}, there are finitely many $\gamma'\in\M$ such that $|\gamma'|= |\beta| - |\gamma|+|\n|$.
\end{proof}

We now introduce the basis ${\overline\B}_{\envU(L)}=\{\overline E_\m \overline F_J\}_{\m,J}$, corresponding to \eqref{eq:dualityL}:
\begin{equation*}
\begin{split}
\overline E_{\m}&:=(1\otimes\partial)^{\m}, 
\qquad \m=(m_1,\ldots,m_d)\in\N^d,
\\
\overline F_{J}&:=\prod_{(\gamma,\n)\in \M\times\N^d} (\z^{\gamma}\otimes D^{(\n)} )^{J(\gamma,\n)},
\end{split}
\end{equation*}
where $J:\M\times\N^d\to\N$ has compact support and satisfies $|\n|<|\gamma|$ for all $(\gamma,\n)$ such that $J(\gamma,\n)>0$. 

Note that $\overline E_\m \overline F_J=(\m! J!) \, E_\m F_J$, or in other words
$T:\B_{\envU(L)}
\to{\overline\B}_{\envU(L)}$ as in \eqref{eq:T} is given by
\begin{equation}\label{eq:T2}
T(E_\m F_J) = \overline E_\m \overline F_J.
\end{equation}

The two bases $\B_{\envU(L)}$ and ${\overline\B}_{\envU(L)}$ are in duality via \eqref{eq:T}, namely
\begin{equation}\label{eq:T'}
\la E_\m F_J,\overline E_{\bar\m}\overline F_{\bar J} \ra = \ind_{(\m=\bar\m, \, J=\bar J)}.
\end{equation}
The multiplication table of the $\ast$-product \eqref{coro: commutative product env alg} in $\env_{[\cdot,\cdot]}(L)$, 
in duality with the coproduct \eqref{eq:coco} with respect to the pairing \eqref{eq:T'}, is (see \cite[(4.43)]{LOT})
\begin{equation*}
    (\overline E_{\m}\overline F_J)\ast (\overline E_{\bar\m}\overline F_{\bar J})=\overline E_{\m+\bar\m}\, \overline F_{J+ \bar J}.
\end{equation*}

\medskip

\subsection{The space of formal series}
Set now $\overline\A:=\R[[\z_k,\z_\n]]_{k\in\N,\n\in\N^d_*}$, the space of formal series in the commuting variables
$\{\z_k,\z_\n\}_{k\in\N,\n\in\N^d_*}$. Then $a\in\overline\A$ can be written
\[
a=\sum_{\gamma\in\M}a_\gamma\z^\gamma,
\]
and $\overline\A$ turns out to be a commutative algebra with product
\[
ab = \sum_{\gamma\in\M}\left[\sum_{\gamma_1+\gamma_2=\gamma} a_{\gamma_1}b_{\gamma_2}\right]\z^\gamma.
\]

We have a canonical pairing between $\overline\A$ and $\A$, which is the bilinear extension of
\begin{equation}\label{eq:duality}
\la \sum_{\gamma\in\M} a_\gamma \, \z^\gamma,\z^\beta\ra = a_\beta, \qquad \beta\in\M.
\end{equation}

In this way we have a canonical identification between $\overline\A$ and the dual $\A^*$ of $\A$.
Then Proposition \ref{pr:finite} has the following important consequence.
\begin{prop}\label{pr:well-def}
For all $f:\envU(L)\to\R$ linear, the following map is well-defined and linear: $\rhobar(f):\overline\A\to\overline\A$,
\[
\rhobar(f)\left(\sum_{\gamma\in\M} a_\gamma \, \z^\gamma\right):=\sum_{\beta\in\M} 
\left[\sum_{\gamma\in\M}\sum_{u\in\B_{\envU(L)}}a_\gamma \, f\left(Tu\right) \, 
\la\rhobar(u)(\z^\gamma),\z^\beta\ra \right] \z^\beta.\]
\end{prop}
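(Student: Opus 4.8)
The plan is to reduce the statement to the finiteness already established in Proposition \ref{pr:finite}, so that the whole argument becomes a bookkeeping verification. Indeed, to say that $\rhobar(f)$ is a well-defined map $\overline\A\to\overline\A$ means precisely that, for every input series $\sum_{\gamma\in\M}a_\gamma\z^\gamma\in\overline\A$ and every target exponent $\beta\in\M$, the coefficient
\[
\sum_{\gamma\in\M,\,u\in\B_{\envU(L)}}a_\gamma\,f(Tu)\,\la\rhobar(u)(\z^\gamma),\z^\beta\ra
\]
is a genuine real number, i.e.\ a finite sum; once each such coefficient is defined, the output is automatically an element of $\overline\A$, since $\overline\A=\R[[\z_k,\z_\n]]$ carries no constraint on which of its coefficients may be nonzero.

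First I would fix $\beta\in\M$ and invoke Proposition \ref{pr:finite}: it guarantees that the set of pairs $(\gamma,u)\in\M\times\B_{\envU(L)}$ with $\la\rhobar(u)(\z^\gamma),\z^\beta\ra\neq0$ is finite. Hence, regardless of the input coefficients $(a_\gamma)_{\gamma\in\M}$ and of the linear form $f$, the displayed sum has only finitely many nonzero summands and is therefore a well-defined real number. This is the decisive point: although the input series may have infinitely many nonzero $a_\gamma$, only finitely many of them feed into the coefficient of a given $\z^\beta$. Consequently $\rhobar(f)\bigl(\sum_\gamma a_\gamma\z^\gamma\bigr)$ is a well-defined formal series in $\overline\A$.

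Linearity then follows immediately: for each fixed $\beta$ the coefficient above depends linearly on the family $(a_\gamma)_{\gamma\in\M}$ (a finite sum of terms, each linear in a single $a_\gamma$, by bilinearity of the pairing $\la\cdot,\cdot\ra$), so $\rhobar(f)$ respects sums and scalar multiples of formal series. I would also record that this Proposition is really the concrete incarnation, in the LOT setting, of the general Proposition \ref{prop: extension rhobar}: Proposition \ref{pr:finite} is exactly the verification that the pair $(\A,L)$ satisfies Assumption \ref{assump: finiteness rhobar}, and the present conclusion is obtained by specializing the abstract basis $\B_\A$ to $\{\z^\gamma:\gamma\in\M\}$.

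Accordingly, the only genuine content lies upstream in Proposition \ref{pr:finite}; there is essentially no obstacle remaining in the statement at hand, and the delicate work---tracking how the two homogeneities $|\cdot|$ and $\|\cdot\|$ constrain the admissible derivations through the relation \eqref{eq:==}---has already been carried out there.
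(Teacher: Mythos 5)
Your proof is correct and follows essentially the same route as the paper: fix $\beta\in\M$ and invoke the finiteness from Proposition \ref{pr:finite} to see that each coefficient is a finite sum, with linearity then immediate. Your added observation that this is the specialization of Proposition \ref{prop: extension rhobar} via Assumption \ref{assump: finiteness rhobar} is accurate and consistent with the paper's framing.
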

\begin{proof}
This is a consequence of Proposition \ref{pr:finite} since for $\z^\beta$ fixed the set of $(\gamma,u)\in\M\times\B_{\envU(L)}$ such
that $\la\rhobar(u)(\z^\gamma),\z^\beta\ra\ne 0$ is finite.
\end{proof}
\medskip

\subsection{Characters}\label{sec:char}

Now we add a crucial multiplicativity hypothesis on $f:\envU(L)\to\R$ for the commutative product $\ast$ defined in \eqref{eq:*}-\eqref{coro: commutative product env alg}. We suppose that $f$ is a character on $(\envU(L),\ast)$, namely
\[
f\left((1\otimes\partial)^{\m}
\prod_{i=1}^k \z^{\gamma_i}\otimes D^{(\n_i)} \right) = \prod_{i=1}^d (f(\1\otimes\partial_i))^{m_i}\prod_{i=1}^k f\left(\z^{\gamma_i}\otimes D^{(\n_i)}\right).
\]
This leads to the following key proposition (see \cite[Proposition 5.1-(ii)]{LOT})
\begin{prop}\label{prop: multiplicativity Gamma_f^*}
If $f$ is a character of the commutative algebra $(\env_{[\cdot,\cdot]}(L),\ast)$, then the map $\rhobar(f):\overline\A\to\overline\A$ is an algebra morphism, namely it verifies the following multiplicativity property, for all $a,b\in\overline\A$:
\begin{equation*}
   \rhobar(f)(ab)=\rhobar(f)(a)\ \rhobar(f)(b).
\end{equation*}
\end{prop}
\begin{proof}
By Proposition \ref{pr:well-def}
\[
\rhobar(f)(ab) = \sum_{\beta\in\M}\left[\sum_{u\in\B_{\envU(L)}} f\left(Tu\right)\sum_{\gamma_1,\gamma_2\in\M} a_{\gamma_1}b_{\gamma_2}\,
\la\rhobar(u)(\z^{\gamma_1}\z^{\gamma_2}),\z^\beta\ra\right]\z^\beta.
\]
By Proposition \ref{prop: multiplicativity property representation} and \eqref{eq:copast}, 
for $a,b\in\overline\A$ we have
\[
\rhobar\left(u\right)(\z^{\gamma_1}\z^{\gamma_2}) =\sum_{u_1,u_2\in\B_{\envU(L)}} \ind_{\big(Tu=(Tu_1)*(Tu_2)\big)} \,
\rhobar\left(u_1\right)(\z^{\gamma_1})\ \rhobar\left(u_2\right)(\z^{\gamma_2}).
\]
Now $\la ab,\z^\beta\ra = \sum_{\beta_1+\beta_2=\beta}\la a,\z^{\beta_1}\ra\la b,\z^{\beta_2}\ra$, so that 
\[
\begin{split}
&\la\rhobar(u_1)(\z^{\gamma_1})\ \rhobar(u_2)(\z^{\gamma_2}),\z^\beta\ra = \sum_{\beta_1+\beta_2=\beta}
\la\rhobar(u_1)(\z^{\gamma_1}),\z^{\beta_1}\ra\la\rhobar(u_2)(\z^{\gamma_2}),\z^{\beta_2}\ra.
\end{split}
\]
By the character property, $f\left(Tu\right)=f\left(Tu_1\right)f\left(Tu_2\right)$, and this allows to conclude the proof.
\end{proof}
In particular, if $f$ is a character on $(\env_{[\cdot,\cdot]}(L),\ast)$ then 
\[
\rhobar(f)(\z^\gamma)=\prod_{i\in\N\sqcup\N^d_*} (\rhobar(f)(\z_i))^{\gamma_i},
\]
and for $a\in\overline\A$
\[
\begin{split}
\rhobar(f)(a)&=\sum_{\beta\in\M} 
\left[\sum_{\gamma\in\M}a_\gamma \,  \la\rhobar(f)(\z^\gamma),\z^\beta\ra \right] \z^\beta
\\ & = \sum_{\beta\in\M} 
\left[\sum_{\gamma\in\M}a_\gamma \, \la\prod_{i\in\N\sqcup\N^d_*} (\rhobar(f)(\z_i))^{\gamma_i},\z^\beta\ra \right] \z^\beta.
\end{split}
\]
In other words we have proved the following.
\begin{lemma}\label{lem:determ}
If $f$ is a character on $(\env_{[\cdot,\cdot]}(L),\ast)$ then for any $a\in\overline\A$ the value of $\rhobar(f)(a)$ is uniquely determined by
the values of $(\rhobar(f)(\z_i))_{i\in\N\sqcup\N^d_*}$.
\end{lemma}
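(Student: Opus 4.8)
The plan is to reduce the statement to the multiplicativity of $\rhobar(f)$ on $\overline\A$, which is exactly the content of Proposition \ref{prop: multiplicativity Gamma_f^*}. First I would record that, since $f$ is a character on $(\env_{[\cdot,\cdot]}(L),\ast)$, Proposition \ref{prop: multiplicativity Gamma_f^*} ensures that $\rhobar(f):\overline\A\to\overline\A$ is an algebra morphism, so that $\rhobar(f)(\phi\psi)=\rhobar(f)(\phi)\,\rhobar(f)(\psi)$ for all $\phi,\psi\in\overline\A$.

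Next I would exploit the factorisation of the monomial basis. Every $\z^\gamma$ factorises as $\z^\gamma=\prod_{i\in\N\sqcup\N^d_*}\z_i^{\gamma(i)}$, and since $\gamma$ has compact support this is a genuinely \emph{finite} product; hence iterating the morphism property gives $\rhobar(f)(\z^\gamma)=\prod_{i\in\N\sqcup\N^d_*}(\rhobar(f)(\z_i))^{\gamma(i)}$. In particular the value of $\rhobar(f)$ on each monomial is entirely prescribed by the finitely many factors $\rhobar(f)(\z_i)$ that occur in it.

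Finally, for a general $a=\sum_{\gamma\in\M} a_\gamma\z^\gamma\in\overline\A$, I would appeal to the well-definedness established in Proposition \ref{pr:well-def} (itself a consequence of the finiteness Proposition \ref{pr:finite}) to expand $\rhobar(f)(a)=\sum_{\beta\in\M}\big[\sum_{\gamma\in\M} a_\gamma\,\la\rhobar(f)(\z^\gamma),\z^\beta\ra\big]\z^\beta$ and substitute the factorised expression for $\rhobar(f)(\z^\gamma)$. Each coefficient on the right then depends only on the collection $(\rhobar(f)(\z_i))_{i\in\N\sqcup\N^d_*}$, which is precisely the assertion.

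I expect the only real subtlety to be bookkeeping rather than substance: one must check that the two-factor morphism identity of Proposition \ref{prop: multiplicativity Gamma_f^*} legitimately propagates to the whole (a priori infinite) family of variables. This causes no difficulty, because each individual $\z^\gamma$ involves only finitely many $\z_i$, so that no infinite product is ever formed, and the single infinite summation — over $\gamma$ and $\beta$ — is exactly the one already controlled by Proposition \ref{pr:well-def}.
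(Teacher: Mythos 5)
Your proposal is correct and follows essentially the same route as the paper, which likewise derives $\rhobar(f)(\z^\gamma)=\prod_{i\in\N\sqcup\N^d_*}(\rhobar(f)(\z_i))^{\gamma_i}$ from the multiplicativity of Proposition \ref{prop: multiplicativity Gamma_f^*} (the finiteness of the product being guaranteed by the compact support of $\gamma$) and then expands a general $a\in\overline\A$ coefficientwise using the well-definedness from Proposition \ref{pr:well-def}. Your closing remark about propagating the two-factor identity to finite products and about the single infinite sum being controlled by Proposition \ref{pr:well-def} is exactly the implicit bookkeeping in the paper's displayed computation preceding the lemma.
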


By Lemma \ref{lem:determ}, it is very important to compute the value of the representation $\rhobar$ on the elements $\{\z_k,\z_\n\}_{k\in\N,\n\in\N^d_*}$.
This will be done in Section \ref{sec:expl} below. We first give a preparatory lemma.

\begin{lemma}\label{lem: composition of derivations}
For all $\ell,k\in\N$, $\m,\n_1,\ldots,\n_\ell\in\N^d$ and $\n\in\N^d_*$ we have
\begin{align*}
        \partial^\m\circ D^{(\n_1)}\circ \cdots \circ D^{(\n_\ell)}(\z_\n)&=\left\{
    \begin{array}{ll}
    \partial^\m\z_\n \quad &\text{if}~ \ell=0,\\
    \1 &\text{if}~\ell=1,~\n_1=\n,~\m=\0,\\
    0 &\text{otherwise},
    \end{array}
    \right.\\
    \partial^\m\circ D^{(\n_1)}\circ \cdots \circ D^{(\n_\ell)}(\z_k)&=\left\{
    \begin{array}{ll}
    \partial^\m(D^{(\0)})^{\circ\ell}\z_k~&\text{if}~\ell=0,
    \\ &\text{or}~\n_1=\ldots=\n_\ell=\0,\\
    0 &\text{otherwise}.
    \end{array}
    \right.
\end{align*}
 The following equalities are verified for all $\n,\m\in \N^d, \n\neq\0$ and $k,\ell\in\N$:
\begin{align}
\label{eq: (D^{(0)})^l z_k}
(D^{(\0)})^{\circ\ell}\z_k&=\frac{(k+\ell)!}{k!}\z_{k+\ell},\\
    \label{eq: 1/m! partial^m z_n}
       \frac1{\m!}\partial^\m\z_\n &=  \binom{\n+\m}{\n} \z_{\n+\m},
\\
\label{eq: 1/m! partial^m z_k}
    \frac1{\m!}\partial^\m\z_k &= \sum_{\ell\geq 0} \binom{k+\ell}{k} \z_{k+\ell} \sum_{\substack{\m_1,\ldots,\m_\ell\in\N^d_*\\\m_1+\cdots+\m_\ell=\m}}\z_{\m_1}
\cdots\z_{\m_\ell}.
\end{align}
\end{lemma}
\begin{proof} The first equality is obtained since $D^{(\n')}(\z_\n)=\delta_{\n',\n}$, for all $\n,\n'\in\N^d$, $\n\neq \0$. The second one is obtained since $D^{(\n)}\z_k=0$ for all $\n\neq \0$.

Now, \eqref{eq: (D^{(0)})^l z_k} follows easily from the definition of $D^{(\0)}$. Let us recall that:
\[\partial_i=\sum_{\n\in\N^d} (n_i+1)\z_{\n+\e_i} D^{(\n)}.\]
Thus for $\n\in\N^d_*$ and $k\in\N$, we have
\[
\partial_i \z_\n = (n_i+1)\z_{\n+\e_i},
\qquad
\partial_i \z_k = \z_{\e_i} (k+1)\z_{k+1},
\]
so that in particular \eqref{eq: 1/m! partial^m z_n} follows easily by recurrence on $\m\in\N^d$.

We prove now \eqref{eq: 1/m! partial^m z_k} by recurrence on $\m\in\N^d$. The base case $\m=\0$
is trivial since the right-hand side reduces to the case $\ell=0$; we suppose now that the formula is proved for $\m=(m_1,\ldots,m_d)\in\N^d$ and we show it (for example) for $\m+\e_1=(m_1+1,m_2,\ldots,m_d)$. First we have
\begin{multline*}
\partial_1 (\z_{k+\ell}\z_{\m_1}\cdots\z_{\m_\ell})= 
 (k+\ell+1)\z_{k+\ell+1}\z_{\e_1}\z_{\m_1}\cdots\z_{\m_\ell}\\
 +\z_{k+\ell}\sum_{i=1}^\ell \z_{\m_1}\cdots(m^i_1+1)\z_{\m_i+\e_1}\cdots\z_{\m_\ell},
\end{multline*}
where we recall that $\e_1=(1,0,\ldots,0)\in\N^d$ and we note $\m_i=(m^i_1,\ldots, m^i_d)\in\N^d$. Now
\[
\begin{split}
&\sum_{\ell\geq 0} \binom{k+\ell}{k} (k+\ell+1)\z_{k+\ell+1}\,\z_{\e_1}\sum_{\substack{\m_1,\ldots,\m_\ell\in\N^d_*\\
\m_1+\cdots+\m_\ell=\m}}\z_{\m_1}\cdots\z_{\m_\ell}
\\ & = \sum_{\ell\geq 0} \binom{k+\ell}{k} \z_{k+\ell}\sum_{\substack{\m_1,\ldots,\m_\ell\in\N^d_*\\\m_1+\cdots+\m_\ell=\m+\e_1}}\z_{\m_1}\cdots\z_{\m_\ell}\sum_{i=1}^\ell \ind_{(\m_i=\e_1)}.
\end{split}
\]
On the other hand we have
\[
\begin{split}
&\sum_{\ell\geq 0} \binom{k+\ell}{k} \z_{k+\ell}\sum_{i=1}^\ell \sum_{\substack{\m_1,\ldots,\m_\ell\in\N^d_*\\\m_1+\cdots+\m_\ell=\m}}\z_{\m_1}\cdots(m^i_1+1)\z_{\m_i+\e_1}\cdots\z_{\m_\ell}
\\ & = \sum_{\ell\geq 0} \binom{k+\ell}{k} \z_{k+\ell} \sum_{\substack{\m_1,\ldots,\m_\ell\in\N^d_*\\\m_1+\cdots+\m_\ell=\m+\e_1}}\z_{\m_1}\cdots\z_{\m_\ell} \sum_{i=1}^\ell m^i_1\ind_{(\m_i\ne \e_1)}.
\end{split}
\]
Therefore
\[
\begin{split}
 \partial_1 \frac1{\m!}\partial^\m\z_k &= \sum_{\ell\geq 0} \binom{k+\ell}{k}\sum_{\substack{\m_1,\ldots,\m_\ell\in\N^d_*\\\m_1+\cdots+\m_\ell=\m}}\partial_1\left(  \z_{k+\ell}\z_{\m_1}\cdots\z_{\m_\ell}\right)
\\ & = \sum_{\ell\geq 0} \binom{k+\ell}{k} \z_{k+\ell} \sum_{\substack{\m_1,\ldots,\m_\ell\in\N^d_*\\\m_1+\cdots+\m_\ell=\m+\e_1}}\z_{\m_1}\cdots\z_{\m_\ell} \sum_{i=1}^\ell m^i_1
\\ & = (m_1+1)\sum_{\ell\geq 0} \binom{k+\ell}{k} \z_{k+\ell} \sum_{\substack{\m_1,\ldots,\m_\ell\in\N^d_*\\\m_1+\cdots+\m_\ell=\m+\e_1}}\z_{\m_1}\cdots\z_{\m_\ell}
\end{split}
\]
and therefore \eqref{eq: 1/m! partial^m z_k} is proved.
\end{proof}
Formula \eqref{eq: 1/m! partial^m z_k} is \cite[formula (A.5)]{LOT}, where it is proved as an application of the Faà di Bruno identity.
\medskip

\subsection{Explicit formulae}\label{sec:expl}

Let us consider the space $L$ previously defined by \eqref{eq: sub-post-Lie algebra L} along with its basis $\B_{\envU(L)}$ and a character
$f$ on $(\env_{[\cdot,\cdot]}(L),\ast)$.
Then $f$ is entirely characterised by its values on the basis elements of $L$:
\begin{itemize}
    \item $f(\1\otimes \partial_i)$, for all $i\in\{1,\ldots,d\}$;
    \item $f(\z^{\gamma}\otimes D^{(\n)})$, for all $\gamma\in\M,~\n\in\N^d,~|\gamma|>|\n|$.
\end{itemize}
We use the notation
\[
f(\1\otimes\partial)^\m:=f(\1\otimes\partial_1)^{m_1}\cdots f(\1\otimes\partial_d)^{m_d}, \qquad \m=(m_1,\ldots,m_d).
\]

\medskip
Following Proposition \ref{prop: multiplicativity Gamma_f^*}, the map $\rhobar(f):\overline\A\to\overline\A$ is entirely determined by its values on basis elements $\{\z^\gamma\}_{\gamma\in\M}$ of $\A$. Applying formula \eqref{def:B'} to our present setting, one has (see \eqref{def:B'}):
\begin{align}\nonumber
\rhobar(f)(\z^\gamma)=&\sum_{\m,J} f(\1\otimes\partial)^\m
\prod_{(\beta,\n)} (f(\z^{\beta}\otimes D^{(\n)}) )^{J(\beta,\n)}
\\&  \cdot\left[\prod_{(\beta,\n)} \frac{1}{J(\beta,\n)!}(\z^{\beta} )^{J(\beta,\n)}\right] \frac1{\m!}\partial^\m\circ 
\left[\prod_{(\beta,\n)} (D^{(\n)})^{\circ J(\beta,\n)}\right](\z^\gamma).
\label{eq: explicit formula Gamma_f^*(z^gamma)}
\end{align}

\begin{notations}
    We set $f^{(\n)}\in\overline\A$ for all $\n\in\N^d$:
\begin{equation*}
    f^{(\n)}:=\sum_{\m\in
\N^d_*} \binom{\n+\m}{\n} f(\1\otimes\partial)^\m \,\z_{\n+\m} + \sum_{\substack{\beta\in\M\\ |\beta|>|\n|}}f(\z^{\beta}\otimes D^{(\n)})\,\z^{\beta}.
\end{equation*}

In particular:
\begin{equation}\label{eq: f^{(0)}}
    f^{(\0)}:=\sum_{\m\in
\N^d_*} f(\1\otimes\partial)^\m \z_{\m} + \sum_{\beta\in\M} f(\z^{\beta}\otimes D^{(\0)})\,\z^{\beta}.
\end{equation}
\end{notations}

\medskip
Then we have (see \cite[(5.17)-(5.18)]{LOT}) the following proposition.
\begin{prop} The map $\rhobar(f):\overline\A\to\overline\A$ satisfies for $\n\in\N^d_*$ and $k\in\N$
\begin{align}
    \rhobar(f)(\z_\n)&=\z_\n+f^{(\n)}\label{eq: Gamma_x^*(z_n)},\\
    \rhobar(f)(\z_k)&=\sum_{\ell\geq 0} \binom{k+\ell}{k} \left( f^{(\0)} \right)^\ell \z_{k+\ell}\label{eq: Gamma_x^*(z_k)}.
\end{align}
\end{prop}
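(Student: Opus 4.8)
The plan is to specialise the general formula \eqref{eq: explicit formula Gamma_f^*(z^gamma)} for $\rhobar(f)(\z^\gamma)$ to the two single-variable cases $\gamma=e_\n$, which gives $\z_\n$, and $\gamma=e_k$, which gives $\z_k$. In each case the sum over the basis elements $E_\m F_J$ runs over all $(\m,J)$, but Lemma \ref{coro: composition of derivations} shows that the differential operator $\frac1{\m!}\partial^\m\circ\prod_{(\beta,\n')}(D^{(\n')})^{\circ J(\beta,\n')}$ kills most of them, so the main work is to identify the surviving terms and then to reorganise them into the series $f^{(\n)}$ and $f^{(\0)}$ of \eqref{eq: f^{(n)}}--\eqref{eq: f^{(0)}}.

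For \eqref{eq: Gamma_x^*(z_n)} I would invoke the first identity of Lemma \ref{coro: composition of derivations}: applied to $\z_\n$ the operator vanishes unless either $J=\emptyset$ (with $\m$ arbitrary), or $\m=\0$ and $J$ reduces to a single factor $\z^\beta\otimes D^{(\n)}$ with the same $\n$ and $J(\beta,\n)=1$. The first family, after using \eqref{eq: 1/m! partial^m z_n}, contributes $\sum_{\m}\binom{\n+\m}{\n}f(\1\otimes\partial)^\m\,\z_{\n+\m}$; isolating $\m=\0$ produces the leading term $\z_\n$, while the remaining $\m\in\N^d_*$ give precisely the first sum in $f^{(\n)}$. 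Since $D^{(\n)}\z_\n=\1$, the second family contributes $\sum_{\beta\in\M^-,\,|\beta|>|\n|}f(\z^\beta\otimes D^{(\n)})\,\z^\beta$, which is the second sum in $f^{(\n)}$. Adding the two gives $\rhobar(f)\z_\n=\z_\n+f^{(\n)}$.

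For \eqref{eq: Gamma_x^*(z_k)} I would use the second identity of Lemma \ref{coro: composition of derivations}: applied to $\z_k$, only those $J$ supported on pairs $(\beta,\0)$ survive, every derivation being forced to equal $D^{(\0)}$. Writing $\ell=\sum_\beta J(\beta,\0)$, the multinomial theorem sums $\prod_\beta\bigl(f(\z^\beta\otimes D^{(\0)})\z^\beta\bigr)^{J(\beta,\0)}/J(\beta,\0)!$ over all such $J$ of length $\ell$ into $\frac1{\ell!}A^\ell$, where $A:=\sum_{\beta\in\M^-}f(\z^\beta\otimes D^{(\0)})\z^\beta$. Then \eqref{eq: (D^{(0)})^l z_k} gives $(D^{(\0)})^{\circ\ell}\z_k=\frac{(k+\ell)!}{k!}\z_{k+\ell}$, and \eqref{eq: 1/m! partial^m z_k} together with the identity $f(\1\otimes\partial)^{\m_1+\cdots+\m_p}=\prod_j f(\1\otimes\partial)^{\m_j}$ (immediate from the definition of $f(\1\otimes\partial)^\m$) collapses the $\partial$-sum into $\sum_{p\geq0}\binom{k+\ell+p}{k+\ell}B^p\z_{k+\ell+p}$ with $B:=\sum_{\m\in\N^d_*}f(\1\otimes\partial)^\m\z_\m$. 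This yields
\[
\rhobar(f)\z_k=\sum_{\ell,p\geq0}\binom{k+\ell}{\ell}\binom{k+\ell+p}{k+\ell}A^\ell B^p\,\z_{k+\ell+p}.
\]

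The final, and main, step is purely combinatorial: grouping by total degree $s=\ell+p$ and checking the coefficient identity $\binom{k+\ell}{\ell}\binom{k+\ell+p}{k+\ell}=\binom{k+s}{k}\binom{s}{\ell}$, both sides equalling $\frac{(k+s)!}{k!\,\ell!\,p!}$. Since $\sum_{\ell+p=s}\binom{s}{\ell}A^\ell B^p=(A+B)^s$ and $A+B=f^{(\0)}$ by \eqref{eq: f^{(0)}}, this gives $\rhobar(f)\z_k=\sum_{s\geq0}\binom{k+s}{k}\bigl(f^{(\0)}\bigr)^s\z_{k+s}$, which is \eqref{eq: Gamma_x^*(z_k)}. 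I expect the only genuine difficulty to be this bookkeeping of binomial factors, namely verifying that the separately computed contributions of $D^{(\0)}$ (the $A$-part) and of the shift $\partial$ (the $B$-part) recombine cleanly into powers of the single series $f^{(\0)}=A+B$; the rest is a direct substitution of the identities already established in Lemma \ref{coro: composition of derivations}.
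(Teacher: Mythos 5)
Your proof is correct and takes essentially the same route as the paper: both arguments specialise \eqref{eq: explicit formula Gamma_f^*(z^gamma)} through Lemma \ref{coro: composition of derivations}, split $f^{(\0)}$ into its $D^{(\0)}$-part $A$ and its $\partial$-part $B$, and recombine via the multinomial/binomial theorems. The only (cosmetic) difference is direction: you compute forward from $\rhobar(f)\z_k$ and must then check the coefficient identity $\binom{k+\ell}{\ell}\binom{k+\ell+p}{k+\ell}=\binom{k+s}{k}\binom{s}{\ell}$ with $s=\ell+p$, whereas the paper expands the claimed series $\sum_{\ell}\binom{k+\ell}{k}\bigl(f^{(\0)}\bigr)^{\ell}\z_{k+\ell}$ and reassembles it into $\rhobar(f)\z_k$ using \eqref{eq: 1/m! partial^m z_k} read in reverse, thereby avoiding that identity.
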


\begin{proof}
    The two equalities are obtained with formula \eqref{eq: explicit formula Gamma_f^*(z^gamma)} and Lemma \ref{lem: composition of derivations}.
    The first equality \eqref{eq: Gamma_x^*(z_n)} is straightforward. 
    For the second equality \eqref{eq: Gamma_x^*(z_k)}, on one side for $\ell\in \N$ fixed, we have
\[
\left( f^{(\0)} \right)^\ell= \sum_{p+q=\ell} \frac{\ell!}{p!q!} \left[\sum_{\beta\in\M} f(\z^{\beta}\otimes D^{(\0)})\,\z^{\beta}\right]^p
\left[\sum_{\m\in
\N^d_*} f(\1\otimes\partial)^\m \z_{\m}\right]^q;
\]

\newpage
and by the multinomial theorem

\begin{align*}
&\left[\sum_{\beta\in\M} f(\z^{\beta}\otimes D^{(\0)})\,\z^{\beta}\right]^p\\
&= \sum_{k:\M\to\N} \ind_{\left(\sum_{\beta}k_\beta=p\right)} p!\prod_{\beta\in\M}\left[\frac{1}{k_\beta!}
\left(f(\z^{\beta}\otimes D^{(\0)}) \,\z^{\beta}\right)^{k_\beta}\right],
\end{align*}

while
\[
\begin{split}
\left[\sum_{\m\in\N^d_*} f(\1\otimes\partial)^\m \z_{\m}\right]^{q}&=
\sum_{\m_1,\ldots,\m_{q}\in\N^d_*} 
  \prod_{i=1}^{q}\left[f(\1\otimes\partial)^{\m_i}\z_{\m_i}\right]
\\ & = \sum_{\m_1,\ldots,\m_{q}\in\N^d_*}f(\1\otimes\partial)^{\m_1+\cdots+\m_{q}}\z_{\m_1}\cdots
\z_{\m_{q}}
\\ & = \sum_{\m\in\N^d_*}f(\1\otimes\partial)^{\m}\sum_{\m_1+\cdots+\m_{q}=\m}\z_{\m_1}\cdots
\z_{\m_{q}}.
\end{split}
\] 
By \eqref{eq: 1/m! partial^m z_k}, denoting
\[
V(p):=\sum_{k:\M\to\N}  \ind_{\left(\sum_{\beta}k_\beta=p\right)}
  \prod_{\beta\in \M}\left[\frac{1}{k_\beta!}
\left(f(\z^{\beta}\otimes D^{(\0)}) \,\z^{\beta}\right)^{k_\beta}\right],
\]
we obtain that
\[
\begin{split}
\sum_{\ell\geq 0} \binom{k+\ell}{k} &\left( f^{(\0)} \right)^\ell \z_{k+\ell}
\\ & = \sum_{p\geq 0}\sum_{q\geq 0} \frac{(k+p+q)!}{k!\,b!} \,\z_{k+p+q} \, V(p)\,
\left[\sum_{\m\in\N^d_*} f(\1\otimes\partial)^\m \z_{\m}\right]^{q}
\\ & = \sum_{\m\in\N^d_*}\frac{f(\1\otimes\partial)^{\m}}{\m!} V(p) \,
\partial^\m\sum_{p\geq 0} \frac{(k+p)!}{k!}  \z_{k+p}.
\end{split}
\]
By \eqref{eq: (D^{(0)})^l z_k}-\eqref{eq: 1/m! partial^m z_k} we obtain since $D^{(\n)}\z_k=0$ for any $\n\ne\0$

\[
\begin{split}
\sum_{\ell\geq 0} \binom{k+\ell}{k} \left( f^{(\0)} \right)^\ell \z_{k+\ell}& = \sum_{\substack{\m\in\N^d_*\\p\geq 0}}\frac{f(\1\otimes\partial)^{\m}}{\m!}V(p)\,\partial^\m (D^{(\0)})^{\circ p}\, \z_k
\\ & = \rhobar(f)(\z_k),
\end{split}
\]
which is the desired equality.
\end{proof}
\medskip

\subsection{Graded Hopf algebra and its graded dual}\label{sec:group}
We define a homogeneity $|\cdot|:\B_{\envU(L)}\to\R_+$ by
\[
|\z^\beta\otimes D^{(\n)}|:=|\beta|-|\n|, \quad 
|\1\otimes\partial_i|:=1, \quad |\ind|:=0, \quad |u_1u_2|:=|u_1|+|u_2|.
\]

We set $A:=\alpha\N+\N=\{\alpha i+j: i,j\in\N\}$. By \eqref{eq:|} the homogeneity $|\beta|$ of $\beta\in\M$
takes values in $A$.
This allows to grade $\envU(L)$ setting 
\[
U_\kappa:=\R\{u\in\B_{\envU(L)}\}_{|u|=\kappa}, \quad \kappa\in A, 
\]
so that $\envU(L)=\oplus_{\kappa\in A} U_\kappa$. It is easy to check from the definitions that 
this makes $(\envU(L),\trbar,\Cop)$ a graded and connected (namely $U_0=\R\{\1\}$) bialgebra. This gives a more direct proof of the existence of an antipode for 
$(\envU(L),\trbar,\Cop)$, with respect to the general setting used in \cite{ebrahimi2014lie,MHandbook}.

By Proposition \ref{pr:finite}, Assumption \ref{assump: finiteness rhobar}, is satisfied in this setting, which implies by Lemma \ref{lem: implication assumptions} that Assumption \ref{assump: finiteness tr} is also satisfied. Moreover, by Remark \ref{rem: finiteness Lie bracket AtensDer(A)}, Assumption \ref{assump: finiteness Lie bracket}, or equivalently Assumption \ref{assump: finiteness Lie bracket AtensDer(A)}, is also satisfied.
We can therefore define a dual bialgebra structure $(\envU(L),\ast,\Delta_\trbar)$ as in \eqref{eq: Delta_trbar} and in Proposition \ref{prop:trbar}, where
$\Delta_\trbar:\envU(L)\to\envU(L)\otimes\envU(L)$ is defined with respect to the pairing \eqref{eq:T'} by
\[
\Delta_\trbar u:=\sum_{u_1,u_2\in\B_{\envU(L)}} \la u_1\trbar u_2,u \ra \, Tu_1 \otimes Tu_2,
\]
and $T:\B_{\envU(L)}\to\overline\B_{\envU(L)}$ is given by \eqref{eq:T2}.
Moreover $(\envU(L),\ast,\Delta_\trbar)$ is graded by the homogeneity as well and it is also connected (which confirms that it is indeed a Hopf algebra).

Then the set 
$$H:=\{f\in\envU(L)^*: f(\mathds{1})=1\}$$
forms a group for the product for $f_1,f_2\in H$
\[
f_1\trbar  f_2 (u) = \la f_1\otimes f_2,\Delta_\trbar u\ra, \qquad u\in\envU(L),
\]
and the set $G$ of real-valued characters on $(\envU(L),\ast)$, defined as the set of all $f:\envU(L)\to\R$ such that $f(\mathds{1})=1$ and
\[
f(u_1\ast u_2) = f(u_1)\, f(u_2), \qquad u_1,u_2\in\envU(L),
\]
is a subgroup of $H$,
see Definition \ref{def:grch}, Proposition \ref{pr:grch} and Section \ref{sec:char}.
Then Proposition \ref{pr:well-def} tells us that we have a well-defined extension of $\rhobar:G\to {\rm End}(\overline\A)$. Moreover
by Proposition \ref{prop: rhobar group morphism} the map $f\mapsto \rhobar(f)$ is a group morphism from $(G,\trbar,\mathds1)$ to $({\rm Aut}(\overline\A),\circ,\mathrm{id})$.\\

Finally, we note that in \cite{LOT} the relevant module (or comodule) is the one constructed in
Section \ref{sec:modcomod} above, while in the first constructions of regularity structures \cite{Hai14,BHZ}
the definition is slightly different. We show now how to obtain the object used in \cite{Hai14,BHZ},
based on the one use in \cite{LOT} and Section \ref{sec:modcomod} above.\\

We define now the linear map $\Lambda:\envU(L)^*\otimes\overline\A\to\overline\A$ by
\begin{equation*}
\begin{split}
&\Lambda(f\otimes a):=\rhobar(f)(a)+\la a,\1\ra f^{(\0)}
\end{split}
\end{equation*}
with
\[
    f^{(\0)}:=\sum_{\m\in
\N^d_*} f\left((\1\otimes\partial)^\m\right) \z_{\m} + \sum_{\beta\in\M} f\left(\z^{\beta}\otimes D^{(\0)}\right)\z^{\beta}\in\overline\A
\]
in the notation \eqref{eq: f^{(0)}} (which however was introduced only for $f$ a character, while here $f$ is a generic element
of $\envU(L)^*$).

\begin{prop}\label{prop:module} We have that 
$\left(\overline\A,\Lambda\right)$ is a left $(\envU(L)^*,\trbar)$-module, namely for all $f_1,f_2\in\envU(L)^*$ and $a\in\overline\A$, we have
\[
\Lambda\left( (f_1\trbar  f_2)\otimes a\right)=\Lambda\left( f_1\otimes \Lambda(f_2\otimes a)\right).
\]
\end{prop}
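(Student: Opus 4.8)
The plan is to unfold both sides using the definition \eqref{eq:Lambda} of $\Lambda$ and to reduce the module identity to a single identity on the correction terms $f^{(\0)}$. First I would record that $\rhobar(f_1\,\trbar\, f_2)=\rhobar(f_1)\circ\rhobar(f_2)$: the computation proving Proposition \ref{prop: rhobar group morphism} uses only the morphism property $\rhobar(u_1\,\trbar\, u_2)=\rhobar(u_1)\circ\rhobar(u_2)$ on $\B_{\envU(L)}$ (Theorem \ref{theo: canonical representation}) together with the finiteness of Proposition \ref{pr:finite}, so it holds verbatim for \emph{arbitrary} $f_1,f_2\in\envU(L)^*$, not only on $H$. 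Writing $b:=\Lambda(f_2\otimes a)=\rhobar(f_2)(a)+\la a,\1\ra f_2^{(\0)}$ and expanding $\Lambda(f_1\otimes b)$, the terms $\rhobar(f_1)\rhobar(f_2)(a)$ agree on both sides, the factors $\la a,\1\ra$ match, and the whole statement collapses to
\begin{equation*}
(f_1\,\trbar\, f_2)^{(\0)}=\rhobar(f_1)\big(f_2^{(\0)}\big)+f_2(\1)\,f_1^{(\0)}. \tag{$\star$}
\end{equation*}

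Two elementary facts make this reduction work. Since $f^{(\0)}$ is, by \eqref{eq: f^{(0)}}, a combination of the variables $\z_\m$ with $\m\in\N^d_*$ and of $\z^\beta$ with $\beta\in\M^-$, and since $\0\notin\M^-$ (as $\|\0\|=0\neq-1$) and $\z_\m\neq\1$, it has no constant term: $\la f^{(\0)},\1\ra=0$. Moreover $\la\rhobar(f)(a),\1\ra=\la a,\1\ra\,f(\1)$: grading $\envU(L)$ and $\A$ by the homogeneity $|\cdot|$, each $\rhobar(u)$ raises $|\cdot|$ by $|u|\ge 0$ (this is the homogeneity bookkeeping already used in the proof of Proposition \ref{pr:finite}), while $|\z^\gamma|=0$ forces $\gamma=\0$ by \eqref{eq:|}; hence only $u=\1,\gamma=\0$ contribute to the constant coefficient, and $\rhobar(\1)=\id$. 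These two facts are exactly the cancellations producing $(\star)$.

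To prove $(\star)$ I would compare the $\z_\m$- and $\z^\beta$-coefficients of both sides, using \eqref{eq:convprod2} to write $(f_1\,\trbar\, f_2)$ on $\overline E_\m$ and on $\z^\beta\otimes D^{(\0)}$ as a sum over $(u_1,u_2)$ of $f_1(Tu_1)f_2(Tu_2)$ times the relevant coefficient of $u_1\,\trbar\, u_2$. For the $\z_\m$-coefficient the analysis is clean: a pure $\partial$-monomial $E_\m$ can arise from $u_1\,\trbar\, u_2$ only when both factors are pure $\partial$-monomials, since every $D^{(\n)}$-derivation is kept (possibly lowered, never removed, by \eqref{eq: inversion order basis simpler form}) and any grafting of a $D^{(\n)}$ onto the constant entries of a pure-$\partial$ word vanishes; this gives $\sum_{\m_1+\m_2=\m}\binom{\m}{\m_1}f_1(\overline E_{\m_1})f_2(\overline E_{\m_2})$. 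On the other side, the key input is that $\rhobar(u)$ never decreases the number $\sum_{k}\gamma(k)$ of polynomial factors, and each $\beta\in\M^-$ has $\sum_k\beta(k)\ge 1$ (otherwise $\beta=e_\n$, which is excluded); hence the $\z^\beta$-part of $f_2^{(\0)}$ contributes nothing to a $\z_\m$-coefficient, and the two sides match once the $\m_2=\0$ term is absorbed into $f_2(\1)f_1^{(\0)}$.

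The main obstacle is the $\z^\beta$-coefficient, $\beta\in\M^-$, where non-commutativity genuinely enters. Besides the two expected families $(u_1,u_2)=(\z^\beta\otimes D^{(\0)},\1)$ and $u_2=\z^{\tilde\beta}\otimes D^{(\0)}$ (which reproduce $f_2(\1)f_1(\z^\beta\otimes D^{(\0)})$ and the part of $\rhobar(f_1)(f_2^{(\0)})$ carried by $\la\rhobar(f_1)(\z^{\tilde\beta}),\z^\beta\ra$), there is a new contribution from normal reordering: for $u_1=\z^\beta\otimes D^{(\m')}$ and $u_2=E_{\m'}$ with $\m'\in\N^d_*$ the concatenation $(\z^\beta\otimes D^{(\m')})\,E_{\m'}$ is not normally ordered, and \eqref{eq: inversion order basis simpler form} lowers $D^{(\m')}$ all the way to $D^{(\0)}$ with coefficient exactly $1$, yielding $\z^\beta\otimes D^{(\0)}$. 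I would verify that these are the only additional terms and that they sum to $\sum_{\m'\in\N^d_*,\,|\beta|>|\m'|} f_1(\z^\beta\otimes D^{(\m')})\,f_2(\overline E_{\m'})$, which is precisely the missing piece of $\rhobar(f_1)(f_2^{(\0)})$ in $(\star)$, since $\la\rhobar(f_1)(\z_{\m'}),\z^\beta\ra=f_1(\z^\beta\otimes D^{(\m')})$ by Lemma \ref{coro: composition of derivations}. Ruling out any further $(u_1,u_2)$ giving a single $D^{(\0)}$-factor, and tracking the factorial constants through \eqref{eq: inversion order basis simpler form}, is the delicate bookkeeping at the heart of the argument.
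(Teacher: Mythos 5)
Your proposal is correct, and its first stage coincides exactly with the paper's: the paper likewise reduces the module identity to your $(\star)$, namely $(f_1\,\trbar\, f_2)^{(\0)}=\rhobar(f_1)\big(f_2^{(\0)}\big)+f_2(\1)\,f_1^{(\0)}$, using $\rhobar(f_1\,\trbar\, f_2)=\rhobar(f_1)\circ\rhobar(f_2)$ (indeed valid on all of $\envU(L)^*$, since the computation in Proposition \ref{prop: rhobar group morphism} is bilinear and never uses $f(\1)=1$) together with $\la f^{(\0)},\1\ra=0$ and $\la \rhobar(f)(a),\1\ra=\la a,\1\ra f(\1)$; your homogeneity justification of the latter is an acceptable substitute for the paper's direct computation. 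You diverge in the second stage: the paper introduces $\z(u):=\sum_{\m}\la u,(\1\otimes\partial)^\m\ra\z_\m+\sum_{\beta\in\M^-}\la u,\z^\beta\otimes D^{(\0)}\ra\z^\beta$ and reduces $(\star)$ to the basis-level identity $\z(u_1\,\trbar\,u_2)=\ind_{(u_2=\ind)}\z(u_1)+\rhobar(u_1)\z(u_2)$, checked by cases through the pairing formulae derived from \eqref{eq:deltatrbar}, whereas you compare $\z_\m$- and $\z^\beta$-coefficients of $(\star)$ directly. The two are equivalent in substance, but your execution of the hard case is actually \emph{more} careful than the paper's. Your reordering family $(u_1,u_2)=(\z^\beta\otimes D^{(\m')},E_{\m'})$ is real and your coefficient claim is right: iterating \eqref{eq:zgaDn}, full absorption of $(\1\otimes\partial)^{\m'}$ forces the lowering $D^{(\m')}\to D^{(\0)}$ with total factor $\m'!$, cancelled by the $1/\m'!$ in $E_{\m'}$, so the coefficient is exactly $1$, and it matches $\la\rhobar(f_1)(\z_{\m'}),\z^\beta\ra=f_1(\z^\beta\otimes D^{(\m')})$ as you say. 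This is precisely the term the paper's own write-up glosses over: the formula $\la u_1\,\trbar\,u_2,\z^\beta\otimes D^{(\0)}\ra=\sum_\gamma\la\z^\beta,\rhobar(u_1)(\z^\gamma)\ra\la u_2,\z^\gamma\otimes D^{(\0)}\ra$ invoked there misses it (take $u_1=\z^\beta\otimes D^{(\e_i)}$, $u_2=\1\otimes\partial_i$: the left side is $1$, the right side $0$, because PBW re-expansion of a length-two word contains a length-one basis element), and symmetrically the paper's assertion that $\rhobar(u_1)(\z_\q)$ vanishes unless $u_1$ is a pure $\partial$-monomial contradicts the $\ell=1$ case of its own Lemma \ref{coro: composition of derivations}, which gives $\rhobar(\z^\beta\otimes D^{(\q)})\z_\q=\z^\beta$. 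The two omissions cancel — which is exactly the matching you make explicit — so the proposition stands either way.

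The one step you leave as ``delicate bookkeeping,'' completeness of the enumeration, does close along the lines you sketch: the number of letters of type $\z^\gamma\otimes D^{(\n)}$ in a PBW monomial is additive under concatenation and preserved both by grafting (which only modifies the $\A$-components) and by the reordering \eqref{eq: inversion order basis simpler form} (which lowers $\n$-indices but never removes a letter). Hence in $u_1\,\trbar\,u_2=\sum u_1^{(1)}(u_1^{(2)}\tr u_2)$ a component along $\z^\beta\otimes D^{(\0)}$ requires that $u_1^{(1)}$ and $u_2$ carry exactly one such letter between them; when it sits in $u_2$ no reordering occurs and one is forced to $u_1^{(1)}=\ind$, $u_2=\z^\gamma\otimes D^{(\0)}$ (your family (b), with (a) the case $u_2=\ind$), and when it sits in $u_1=u_1^{(2)}$ the pure-$\partial$ word $u_2=E_\q$ must be fully absorbed, forcing $\n=\q$ and yielding your family (c) with coefficient $1$. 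Your $\z_\m$-coefficient analysis, including the exclusion of the $\z^\beta$-part of $f_2^{(\0)}$ via $\sum_k\beta(k)\geq 1$ for $\beta\in\M^-$, is likewise correct. In short: same architecture as the paper, a different and more explicit verification of $(\star)$, and a sounder treatment of the reordering contribution than the published argument.
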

\begin{proof}
We have
\[
\begin{split}
&\Lambda\left( (f_1\trbar  f_2)\otimes a\right) = \rhobar(f_1)(\rhobar(f_2)(a))+\la a,\1\ra (f_1\trbar  f_2)^{(\0)}
\\ & \Lambda\left( f_1\otimes \Lambda(f_2\otimes a)\right)=\rhobar(f_1)\left(\rhobar(f_2)(a)+\la a,\1\ra f_2^{(\0)}\right)+
\la \rhobar(f_2)(a),\1\ra  f_1^{(\0)}
\end{split}
\]
where in the second equality we have used that $\la f_2^{(\0)},\1\ra=0$.
We want now to prove that 
\[
\la \cdot,\1\ra(f_1\trbar  f_2)^{(\0)}=\la \rhobar(f_2)(\cdot),\1\ra f_1^{(\0)}+
\la \cdot,\1\ra \rhobar(f_1)(f_2^{(\0)}),
\]
namely
\[
(f_1\trbar  f_2)^{(\0)}=f_2(\ind) f_1^{(\0)}+\rhobar(f_1)(f_2^{(\0)}),
\]
since $\la \rhobar(f_2)(\z^\gamma),\1\ra=0$ for any $\gamma\ne0$ while $\la \rhobar(f_2)(\1),\1\ra=f_2(\ind)$.
%
We set
\[
\z(u):=
\sum_{\m\in\N^d_*}\la u,(\1\otimes\partial)^\m\ra\z_\m+
\sum_{\beta\in\M} \la u,\z^\beta\otimes D^{(\0)}\ra\z^\beta \in\overline\A
\]
for $u\in\envU(L)$, so that
\[
f^{(\0)} = \sum_{u\in\B_{\envU(L)}} f(Tu) \,\z(u).
\]
Then
\[
(f_1\trbar  f_2)^{(\0)}=\sum_{u_1,u_2\in\B_{\envU(L)}} f_1(Tu_1)f_2(Tu_2)\, \z(u_1\trbar u_2),
\]
while
\[
\begin{split}
&f_2(\ind) f_1^{(\0)}+\rhobar(f_1)(f_2^{(\0)})
\\ &=\sum_{u_1,u_2\in\B_{\envU(L)}} f_1(Tu_1) f_2(Tu_2) \left[
\ind_{(u_2=\ind)} \z(u_1)+\rhobar(u_1)(\z(u_2))\right].
\end{split}
\]
Therefore all we have to prove is the formula
\[
\z(u_1\trbar u_2)=\ind_{(u_2=\ind)} \z(u_1)+\rhobar(u_1)(\z(u_2)), \qquad \forall u_1,u_2\in\B_{\envU(L)}.
\]
If $u_2=\ind$ then 
this reduces to $\z(u_1)=\z(u_1)$, since $\z(\ind)=0$. If $u_2\ne\ind$ we have to show that
\[
\z(u_1\trbar  u_2)=\rhobar(u_1)(\z(u_2)), \qquad \forall u_1\in\B_{\envU(L)}.
\]
For $u_1=\ind$ this formula reduces to $\z(u_2)=\z(u_2)$. 

We consider therefore $u_1,u_2\in\B_{\envU(L)}\setminus\{\ind\}$. By the definition of $\star$ given by formula \eqref{eq: post-Lie associative product} and using \eqref{eq:trbargeneral2} (or equivalently \eqref{eq:deltatrbar}) we compute for $\beta\in\M$:
\[
\begin{split}
\la u_1\trbar  u_2,\z^\beta\otimes D^{(\0)}\ra&=\la u_1\tr u_2,\z^\beta\otimes D^{(\0)}\ra \\
&=\sum_{\gamma\in\M}\la \z^\beta,\rhobar(u_1)(\z^\gamma)\ra \la u_2,\z^\gamma\otimes D^{(\0)}\ra,
\end{split}
\]
and using moreover \eqref{eq:partial12} and \eqref{eq:zgaDntr}, we compute for $\m\in\N^d_*$:
\[
\begin{split}
\la u_1\trbar  u_2,(\1\otimes\partial)^\m\ra&= \la u_1 u_2,(\1\otimes\partial)^\m\ra\\
&=\sum_{\0\le\n\le\m}\binom{\m}{\n} \la u_1,(\1\otimes\partial)^{\m-\n}\ra \la u_2,(\1\otimes\partial)^\n\ra.
\end{split}
\]
On the other hand
\[
\begin{split}
\rhobar(u_1)(\z(u_2))=&\sum_{\m\in\N^d_*}\la u_2,(\1\otimes\partial)^\m\ra\rhobar(u_1)(\z_\m)
\\ & + \sum_{\beta\in\M} \la u_2,\z^\beta\otimes D^{(\0)}\ra\rhobar(u_1)(\z^\beta).
\end{split}
\]
This shows that, unless $u_2\in\{\frac1{\q!}(\1\otimes\partial)^\q,\z^\gamma\otimes D^{(\0)}:\q\in\N^d_*,\gamma\in\M\}$, we have $\z(u_1\trbar  u_2)=\rhobar(u_1)(\z(u_2))=0$.
\begin{itemize}[leftmargin=*]
    \item If $u_2=\z^\gamma\otimes D^{(\0)}$, since $\la u_1\star (\z^\gamma\otimes D^{(\0)}),(\1\otimes\partial)^\m\ra=0$, for all $\m\in\N^d_\ast$, and all $u_1\in\B_{\envU(L)}$, the desired formula follows from
\begin{align*}
    \z(u_1\trbar  u_2)&=\sum_{\beta\in\M} \la u_1\star (\z^\gamma\otimes D^{(\0)}),\z^\beta\otimes D^{(\0)}\ra\z^\beta\\
    &=\sum_{\beta\in\M} \la u_1\tr (\z^\gamma\otimes D^{(\0)}),\z^\beta\otimes D^{(\0)}\ra\z^\beta\\
    &=\sum_{\beta\in\M}\la \z^\beta,\rhobar(u_1)(\z^\gamma)\ra \z^\beta\\
    &=\rhobar(u_1)(\z^\gamma)\\
    &=\rhobar(u_1)(\z(u_2)).
\end{align*}

\item If $u_2=\frac1{\q!}(\1\otimes\partial)^\q$, since $u_1\tr (\1\otimes\partial)^\q=0$, for all $u_1\in\B_{\envU(L)}$, we have that $u_1\star (\1\otimes\partial)^\q=u_1 (\1\otimes\partial)^\q$, and we obtain
\[
\begin{split}
\z(u_1\trbar  u_2) =& \frac1{\q!}\sum_{\m\in\N^d_*}\la u_1 (\1\otimes\partial)^\q,(\1\otimes\partial)^\m\ra\z_\m\\ 
& + \sum_{\beta\in\M} \la u_1 (\1\otimes\partial)^\q,\z^\beta\otimes D^{(\0)}\ra\z^\beta.
    \end{split}
\]
On one hand by formula \eqref{eq: inversion order basis simpler form}:
\begin{align*}
    &\frac1{\q!}\sum_{\m\in\N^d_*}\la u_1 (\1\otimes\partial)^\q,(\1\otimes\partial)^\m\ra\z_\m\\
    &=\left\{
    \begin{array}{ll}
    0\qquad &\text{if}\quad u_1\notin\{\frac1{\n!}(\1\otimes\partial)^\n:\n\in\N^d_*\},
\\ \\    \sum_{\m\ge\q}\frac{\m!}{(\m-\q)!} \la u_1,(\1\otimes\partial)^{\m-\q}\ra \z_\m\\\qquad=(\n+\q)!\,\z_{\n+\q}\quad &\text{if}\quad u_1=\frac1{\n!}(\1\otimes\partial)^\n.
    \end{array}
    \right.
\end{align*}
On the other hand, again by formula \eqref{eq: inversion order basis simpler form}:
\begin{align*}
    &\la u_1 (\1\otimes\partial)^\q,\z^\beta\otimes D^{(\0)}\ra=\left\{
    \begin{array}{ll}
    0 &\text{if}\quad u_1\notin\{\z^\beta \otimes D^{(\n)}:~\beta\in\M,~\n\in\N^d_*\},
\\ \\  1   &\text{if}\quad u_1=\z^\beta \otimes D^{(\q)}.
    \end{array}
    \right.
\end{align*}
Now for $u_2=\frac1{\q!}(\1\otimes\partial)^\q$ we have
$\z(u_2)=\q!\,\z_\q$ and,
by Lemma \ref{lem: composition of derivations} and equality \eqref{eq: 1/m! partial^m z_k},
\begin{align*}
    \rhobar&(u_1)(\z(u_2))\\
    &=\q!\,\rhobar(u_1)(\z_\q)\\
    &=\left\{
    \begin{array}{ll}
    0\quad &\text{if}\quad u_1\notin\left\{\frac1{\n!}(\1\otimes\partial)^\n,~\z^\beta \otimes D^{(\n)}:\n\in\N^d_*,~\beta\in\M\right\},\\\\
    (\n+\q)!\,\z_{\n+\q}  &\text{if}\quad u_1=\frac1{\n!}(\1\otimes\partial)^\n,
\\ \\  \z^\beta  &\text{if}\quad u_1=\z^\beta \otimes D^{(\q)}.
    \end{array}
    \right.
\end{align*}
\end{itemize}
The proof is complete.
\end{proof}

We finally define the linear map $\Gamma:\envU(L)^*\otimes\A\to\A$ by
\begin{equation*}
\Gamma(f\otimes a):=\sum_{\gamma\in\M}\la \rhobar(f)(\z^\gamma),a\ra\z^\gamma+
\la f^{(\0)},a\ra \1, \qquad a\in\A,
\end{equation*}
with respect to the pairing  \eqref{eq:duality} between $\overline\A$ and $\A$,
where the sum is finite by Proposition \ref{pr:finite}.
We also use the notation $\Gamma_f:\A\to\A$ for $f\in\envU(L)^*$
\[
\Gamma_f (\z^\beta) :=\Gamma(f\otimes \z^\beta)= \sum_{\gamma\in\M}\la \rhobar(f)(\z^\gamma),\z^\beta\ra\z^\gamma+
\la f^{(\0)},\z^\beta\ra \1, \qquad \beta\in\M.
\]
In other words we have
\[
\Gamma_f = (\Lambda(f\otimes \cdot))^*, 
\]
in the pairing \eqref{eq:duality}.By Proposition \ref{prop:module}, we have that $\left(\A,\Gamma\right)$ is a \emph{right} $(\envU(L)^*,\trbar)$-module, namely for all $f_1,f_2\in\envU(L)^*$ and $a\in\A$, we have
\[
\Gamma\left( (f_1\trbar  f_2)\otimes a\right)=\Gamma\left( f_2\otimes \Gamma(f_1\otimes a)\right).
\]
In particular we obtain that for all $f_1,f_2\in\envU(L)^*$
\[
\Gamma_{f_1\trbar  f_2} = \Gamma_{f_2}\circ \Gamma_{f_1}.
\]
\bigskip

\printbibliography

\end{document}